\newcommand{\xvec}{\mathbf{x}}
\newcommand{\yvec}{\mathbf{y}}
\newcommand{\bvec}{\mathbf{b}}
\newcommand{\pvec}{\mathbf{p}}
\newcommand{\kvec}{\mathbf{k}}
\newcommand{\svec}{\mathbf{s}}
\newcommand{\Zbb}{\mathbb{Z}_2}
\newcommand{\Rbb}{\mathbb{R}}
\newcommand{\Cbb}{\mathbb{C}}
\newcommand{\Nbb}{\mathbb{N}}
\newcommand{\T}{\mathcal{T}}
\newcommand{\I}{\mathcal{I}}
\newcommand{\nter}{\mathfrak{n}}
\newcommand{\mter}{\mathfrak{m}}
\newcommand{\poly}{\mathfrak{p}}
\newtheoremstyle{definition_new}
{0.7\topsep}
{0.7\topsep}
{\normalfont}
{0pt}
{\itshape}
{}
{7pt}
{}
\theoremstyle{definition_new}
\newtheorem{defn}{Definition}[section]
\newtheorem{thm}{Theorem}[section]
\newtheorem{lem}[thm]{Lemma}
\newtheorem{prop}[thm]{Proposition}
\theoremstyle{remark}
\begin{document}

\title{Polynomial-time solution of prime factorization and NP-hard problems with digital memcomputing machines}



\author{Fabio L. Traversa, Massimiliano Di Ventra\thanks{The authors are with the Department of Physics, University of California-San Diego, 9500  Gilman Drive, La Jolla, California 92093-0319, USA, e-mail: ftraversa@physics.ucsd.edu, diventra@physics.ucsd.edu}}

\maketitle

\date{\today}
\begin{abstract}
We introduce a class of digital machines we name Digital Memcomputing Machines (DMMs) able to solve a wide range of problems including Non-deterministic Polynomial (NP) ones with polynomial resources (in time, space and energy). An abstract DMM with this power must satisfy a set of compatible mathematical constraints underlying its practical realization.
We initially prove this by introducing the complexity classes for these machines. We then make a connection with dynamical systems theory. 
This leads to the set of physical constraints for poly-resource resolvability. 
Once the mathematical requirements have been assessed, we propose a practical scheme to solve the above class of problems based on the novel concept of self-organizing logic gates and circuits (SOLCs). These are logic gates and circuits able to accept input signals from {\it any} terminal, without distinction between conventional input and output terminals. They can solve boolean problems by self-organizing into their solution. They can be fabricated either with circuit elements with memory (such as memristors) and/or standard MOS technology. Using tools of functional analysis, we prove mathematically the following constraints for the poly-resource resolvability: {\it i)} SOLCs possess a global attractor; {\it ii)} their only equilibrium points are the solutions of the problems to solve; {\it iii)} the system converges exponentially fast to
the solutions; {\it iv)} the equilibrium convergence rate scales at
most polynomially with input size. We finally provide arguments that periodic orbits and strange
attractors cannot coexist with equilibria. As examples we show how to solve the prime factorization and the NP-hard version of the subset-sum problem. Since DMMs map integers into integers they are robust against noise, and hence {\it scalable}. We finally discuss the implications of the DMM realization through SOLCs to the NP=P question related to constraints of poly-resources resolvability.
\end{abstract}

\begin{IEEEkeywords}
\textbf{memory, memristors, elements with memory, memcomputing, Turing Machine, NP-complete, subset-sum problem, factorization, dynamical systems, dissipative systems, global attractor.}
\end{IEEEkeywords}


\section{Introduction}\label{introduction}
We have recently shown that a new class of non-Turing machines, we have named Universal Memcomputing Machines (UMMs), has the same computational power of non-deterministic Turing machines, and as such they can solve (if implemented in hardware) NP-complete/hard problems with polynomial resources~\cite{UMM}. UMMs are machines composed 
of interacting memprocessors, namely processors that use memory to both process and store information on the same physical location~\cite{diventra13a}. Their computational 
power rests on their intrinsic parallelism - in that the {\it collective} state of the machine (rather than the individual memprocessors) computes - and their information overhead: a type of information that is embedded in the machine but not necessarily stored by it~\cite{UMM}. 

The information overhead is the direct result of the {\it topology} of the 
network of memprocessors, and can be appropriately ``tuned''  to the complexity of the problem. For instance, problems that would otherwise require exponential resources with 
a standard deterministic Turing machine, require only polynomial resources with a UMM if we feed into it an exponential information overhead through the appropriate topology. 

We have already provided a practical example of such a machine that solves the NP-complete version of the subset-sum problem in one computational step using a linear number of memprocessors and built it using standard electronic components~\cite{traversaNP}. However, the machine we have built is fully analog and as such can not be easily scaled up to an arbitrary number of memprocessors without noise control. However, UMMs can also be defined in digital mode and if a practical scheme can be engineered to realize them in practice, we could build scalable machines 
that solve very complex problems using resources that only grow polynomially with input size.

In this paper we suggest such a scheme and apply it to two important problems: prime factorization and the NP-hard version of the subset-sum problem. However, we accomplish more than this. The list of major results of our paper, with the corresponding sections where the reader can find them, are as follows.
\begin{itemize}
\item
We define the types of problems that we are interested in and the computational protocols we have in mind which will be important in practical realizations. Section~\ref{def_problem}. 
\item
We define digital memcomputing machines (DMMs) and their complexity classes. Section~\ref{DMM}.
\item
We formulate DMMs in terms of the theory of dynamical systems in order to make the transition from mathematical concept to physical systems and hence facilitate their practical realization. This leads to the emergence of a set of mathematical constraints that a dynamical system must satisfy in order to be able to solve certain problem classes with polynomial resources. Section~\ref{Dyna_section}.
\item 
We introduce the notion of {\it self-organizing logic gates} and {\it self-organizing logic circuits}, namely gates and circuits that can accept input signals from any terminal without 
differentiating between conventional input and output terminals. Through the collective state of the machine, they can satisfy boolean functions in a self-organized manner. Section~\ref{SOLG_section}. 
\item
Using tools of functional analysis, we demonstrate that the dynamical systems describing these circuits are dissipative (in the functional sense~\cite{hale_2010_asymptotic}, not necessarily in the physical passivity sense) and hence support a global attractor~\cite{hale_2010_asymptotic}. In addition, we prove that {\it i)} their equilibria are the solutions of the problems they represent, {\it ii)} any initial condition converges exponentially fast to the solution(s), {\it iii)} the equilibrium convergence rate scales at most polynomially with input size, {\it iv)} the energy expenditure only grows polynomially with input size, and {\it v)} we provide arguments that periodic orbits and strange attractors do not coexist with equilibria.  Section~\ref{Math_SOLC_section}.
\item
We support the above results with numerical simulations of the solution of prime factorization and the subset-sum problem using DMMs with appropriate topologies. The former problem scales as $O(n^2)$ in space (i.e., with the number of self-organizing logic gates employed) and $O(n^2)$ in convergence time with 
input size $n$. The latter as $O[p(n+\log_2(n-1))]$ in space and $O((n+p)^2)$ in convergence time with size $n$ and precision $p$. Section~\ref{NP_sec}. 
\item
We discuss the consequences of our results on the question of whether NP=P. Section~\ref{NP_P}.
\end{itemize}

Finally, Section~\ref{Conclusions} collects our thoughts for future directions in the field. 

\section{Preliminary Definitions}\label{def_problem}
Before defining the DMMs, we introduce the general class of problems we want to tackle. This will clarify the type of approach we will use to solve them, and will lead us to the 
definition of the machines' computational complexity classes.

\subsection{Compact Boolean Problems}\label{Problem_subsection}

\begin{defn}
A \textit{Compact Boolean} problem is a collection of statements (not necessarily independent) that can be written as a finite system of boolean functions. Formally, let $f:\Zbb^n\rightarrow \Zbb^m$ be a system of boolean functions and $\Zbb=\{0,1\}$, then a CB problem requires to find a solution $\yvec\in\Zbb^n$ (if it exists) of $f(\yvec)=\bvec$ with $\bvec\in\Zbb^m$.
\end{defn}

It is clear from this definition that this class includes many important problems in Computer Science, such as the Nondeterministic Polynomial- (NP-) time problems, linear algebra problems, and many others. For instance, such a class includes those we solve in Sec.~\ref{NP_sec}, namely factorization and the subset-sum problem. 

\subsection{Direct and Inverse Protocols}\label{Problem_subsection_meth}

We further define two ways (protocols) to find a solution of such problems. The first one can be implemented within the Turing machine paradigm through standard boolean circuits. The second one can only be implemented with DMMs.
\begin{defn}
Let $S_B=\{ g_1,\ldots,g_k \}$ be a set of $k$ boolean functions $g_j:\Zbb^{n_j}\rightarrow \Zbb^{m_j}$, $S_{CFS}=\{ s_1,\ldots,s_h\}$ a set of $h$ control flow statements, and $CF$ the control flow that specifies the sequence of the functions $g_j\in S_B$ to be evaluated and the statements $s_j\in S_{CFS}$ to be executed. We then define the \textit{Direct Protocol} (DP) for solving a CB problem that control flow $CF$ which takes $\yvec\rq\in\Zbb^{n\rq}$ as initial input and gives $\yvec\in\Zbb^n$ as final output such that $f(\yvec)=\bvec$. 
\end{defn}

Roughly speaking, the DP is the {\it ordered} set of instructions (the program) that a Turing machine should perform to solve a CB problem. Even if a DP is not the unique way to find a solution of a given CB problem with Turing machines, all the other known strategies (even if they are, possibly, more optimal) do not change the computational complexity classes of Turing machines. For this reason, we consider here only DPs as some general Turing-implementable strategies for solving CB problems.

Boolean functions are not, in general, invertible and finding the solution of a CB problem is not equivalent to finding the zeros of a vector function $f$ because the solution belongs to $\Zbb^n$ and not to $\Rbb^n$ or $\Cbb^n$. Nevertheless, we can still think of a strategy to find the solution of a CB problem that can be implemented in a specific machine. This means that the machine must be designed specifically to solve only such a CB problem. We will give explicit examples of such machines in Sec.~\ref{NP_sec} when we solve factorization and the NP-hard version of the subset-sum problem. We then define:
\begin{defn}
An \textit{Inverse Protocol} (IP) is that which finds a solution of a given CB problem by encoding the boolean system $f$ into a machine capable of accepting as input $\bvec$, and giving back as output $\yvec$, solution of $f(\yvec)=\bvec$. 
\end{defn}
Roughly speaking, the IP is a sort of ``inversion'' of $f$ using special-purpose machines. 

\subsection{Algorithmic Boolean Problems}\label{Problem_subsection_algo}

Of course, the definition of CB problems does not exaust all possible problems in Computer Science. We can also define the following class:
\begin{defn}
The \textit{Algorithmic Boolean} (AB) problems are those problems described by a collection of statements mathematically formalized by a control flow with boolean functions {\it and} appropriate control flow statements.
\end{defn} 
The CB problems are clearly a subclass of AB problems. However, it is not easy to say {\it a priori} if an AB problem can be reduced to a CB problem. For example, control flows including loops that terminate only if a condition is met cannot be always translated into a finite system $f$ of boolean functions. Moreover, the same problem that can be formulated either as CB or AB, may not have the same complexity in the respective formulation. For example, in order to reduce $f$ to a unique boolean system, if we consider a control flow $CF$ that requires the evaluation of $n_B$ boolean functions and $n_C$ conditional branches (i.e., the ``if-then'' statements that are a specific case of control flow statements), we may need resources increasing with the dimension of $f$ which, in turn, increases non-polynomially with $n_C$. More insights will be given in Sec. \ref{complexity_section}. 

We remark that by the term \textit{resources} we denote the amount of {\it space}, {\it time}, and {\it energy} employed by the machine to find the solution of a specific problem. Finally, we only acknowledge that there exist problems that cannot be classified as AB. These are beyond the scope of our present paper and therefore will not be treated here. 

\section{Digital Memcomputing Machines}\label{DMM}

We are now ready to define the digital versions of memcomputing machines (DMMs). These are a special class of Universal Memcomputing Machines (UMMs) we have introduced in Ref. \cite{UMM} in which the memprocessors (processors with memory) have only a finite number of possible states {\it after} the transition functions, $\delta$s, have been applied. Note that the transition functions can be implemented using, e.g., circuits (as we will show later) or some other physical mechanism. The important point is that they map $\Zbb^n$ into $\Zbb^m$, with $n$ and $m$ arbitrary integers.

\subsection{Formal definition}\label{formalDMM}

We recall that the UMM is an ideal machine formed by a bank of $m$ interconnected memory cells -- memprocessors -- with $m\leq\infty$. Its DMM subclass performs digital (logic) operations controlled by a control unit (see Fig.~\ref{DMM_figure_1}). The computation {\it with} and {\it in} memory can be sketched in the following way. When two or more memprocessors are connected, through a signal sent by the control unit, the memprocessors change their internal states according to both their initial states and the signal, thus giving rise to intrinsic parallelism (interacting memory cells simultaneously and collectively
change their states when performing computation) and functional polymorphism (depending on the applied signals, the same interacting
memory cells can calculate different functions) \cite{UMM}. The information overhead typical of UMMs will be defined at length in Sec.~\ref{Info_over_subsection}. 

\begin{figure}
	\centerline{
		\includegraphics[width=\columnwidth]{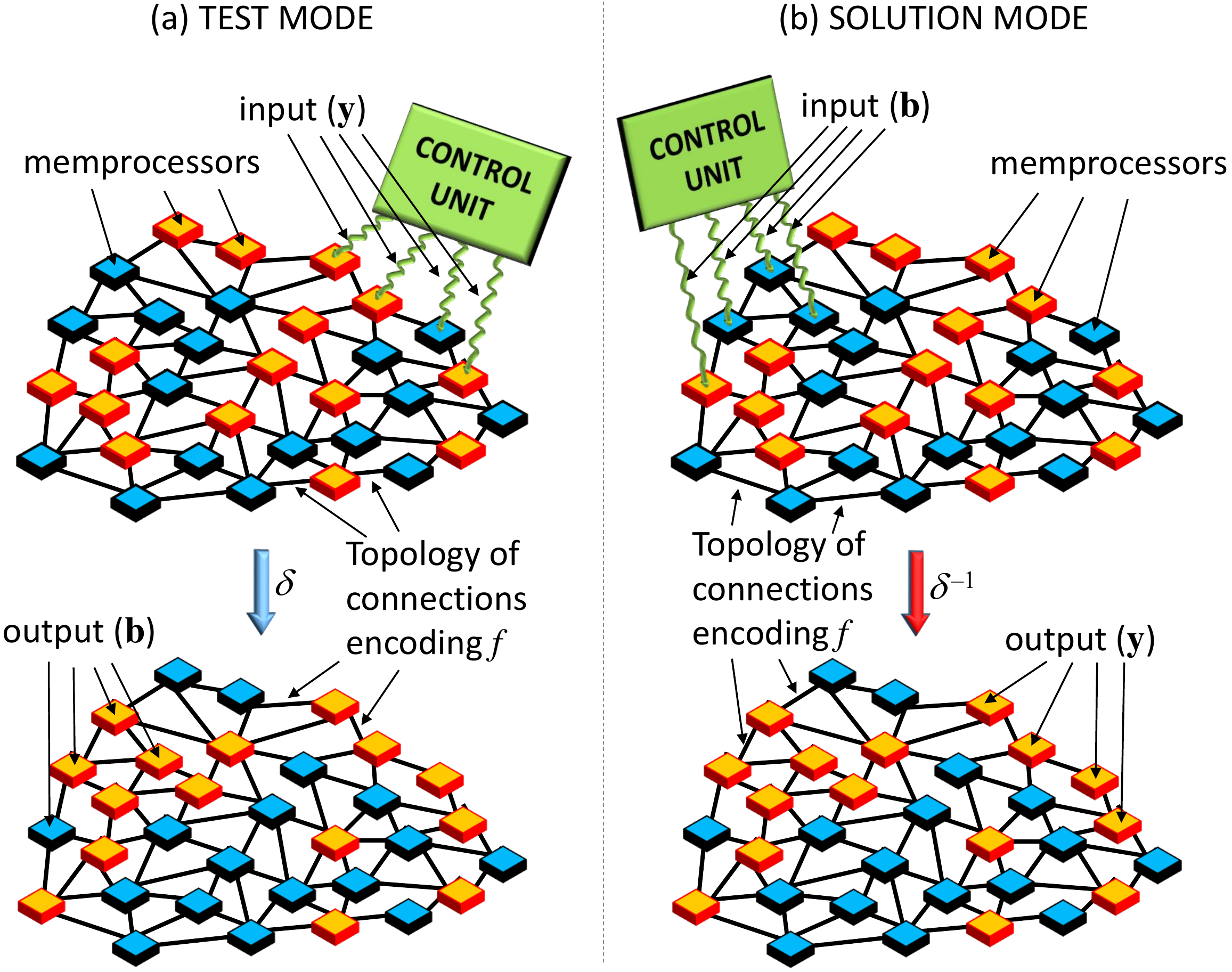}}
	\caption{\label{DMM_figure_1}Sketch of a DMM. The memprocessors are two-state interconnected elements that change their state according to both the external signal fed by the Control Unit and the signals from the other memprocessors through their connections. $\delta$ is the composition of all transition functions involved in the computation. The (a) panel shows the {\it test mode} for the verification of a given solution of a CB problem, while the (b) panel the {\it solution mode} for the IP implementation.}
\end{figure}

\begin{defn}
A \textit{DMM} is the eight-tuple
\begin{equation}
	DMM=(\Zbb,\Delta,{\cal P},S,\Sigma,p_0,s_0,F)\,.\label{UMMdef}
\end{equation}
Without loss of generality, we restrict the range to $\Zbb=\{0,1\}$ because the generalization to any finite number of states is trivial and does not add any major change to the theory. $\Delta$ is a set of functions
\begin{equation}
	\delta_\alpha:\Zbb^{m_\alpha}\backslash F\times {\cal P}\rightarrow \Zbb^{m\rq_\alpha}\times {\cal P}^2\times S\,,\label{functUMM}
\end{equation}
where $m_\alpha<\infty$ is the number of memprocessors used as input of (read by) the function $\delta_\alpha$, and $m\rq_\alpha<\infty$ is the number of memprocessors used as output (written by) the function $\delta_\alpha$; ${\cal P}$ is the set of the arrays of pointers $p_\alpha$ that select the memprocessors called by $\delta_\alpha$ and $S$ is the set of indexes $\alpha$; $\Sigma$ is the set of the initial states written by the input device on the computational memory; $p_0\in {\cal P}$ is the initial array of pointers; $s_0$ is the initial index $\alpha$,  and $F\subseteq \Zbb^{m_f}$ for some $m_f\in\mathbb{N}$ is the set of final states.
\end{defn}

\subsection{Computational complexity classes}\label{complexity_section}

\begin{figure}
	\centerline{
		\includegraphics[width=0.8\columnwidth]{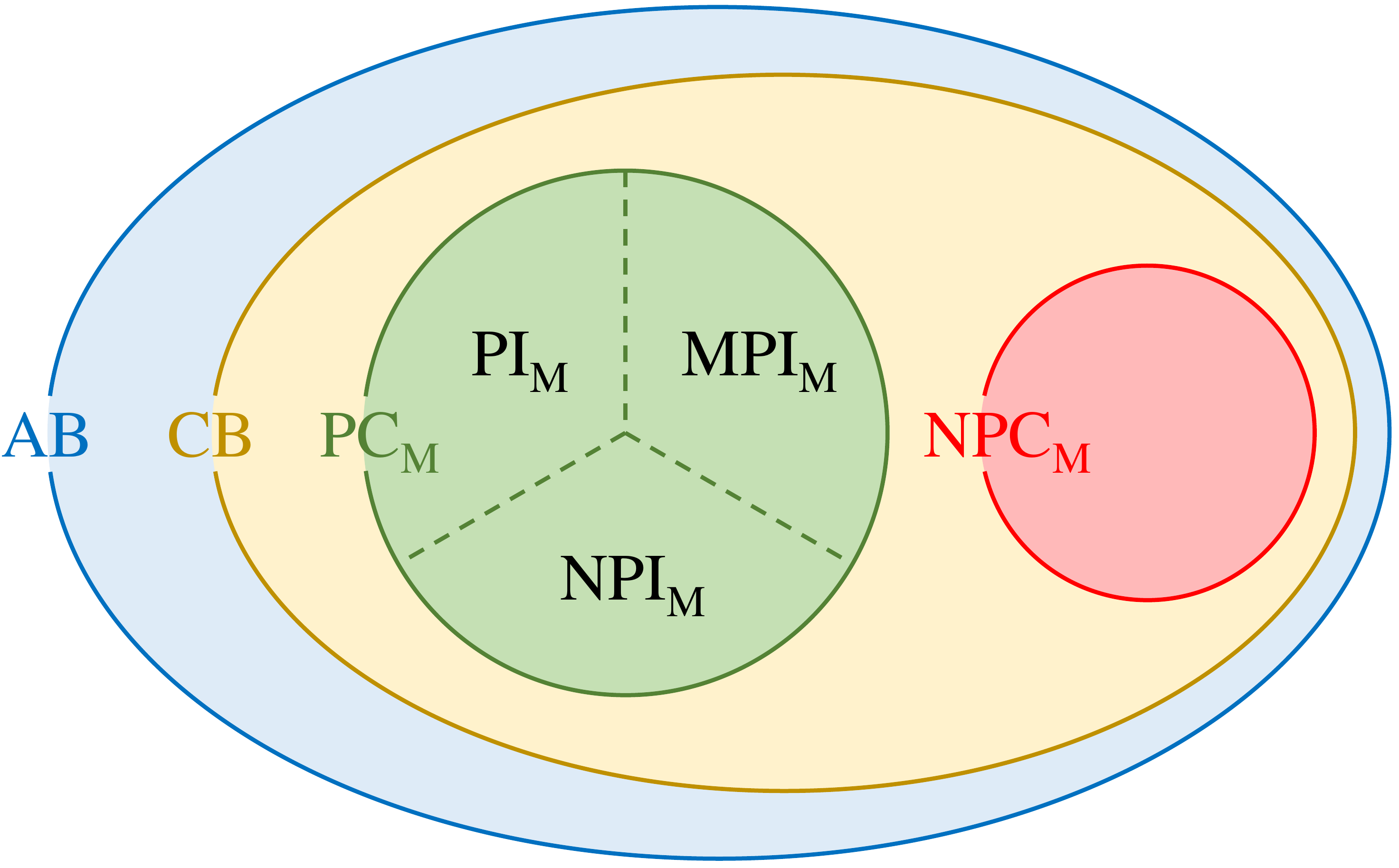}}
	\caption{\label{DMM_figure_2}Euler diagram of computatinal complexity classes with respect to digital memcomputing machines. Of all problems AB = Algorithmic Boolean, the subclass CB = Compact Boolean can be solved in time PC$_\text{M}$ = Polynomially Compact, NPC$_\text{M}$ = Non-Polynomially Compact, PI$_\text{M}$ = Polynomially Invertible, MPI$_\text{M}$ = Memcomputing Polynomially Invertible, NPI$_\text{M}$ = Non-Polynomially Invertible. Note that PC$_\text{M}$=PI$_\text{M}$ $\cup$ MPI$_\text{M}$ $\cup$ NPI$_\text{M}$.}
\end{figure}

We now use the notion of AB and CB problems introduced in Sec.~\ref{def_problem} to define the first two classes of computational complexity for a DMM. It is again worth recalling for the subsequent definitions that a DMM can be used also as a standard Turing machine when implemented in DP (see \cite{UMM} for more details). It implies that we can design DMMs to implement problems in their AB formulation as well as designing other DMMs implementing problems in their CB formulation.
\begin{defn}
A CB problem is said to be \textit{Polynomially Compact} for a DMM (PC$_\text{M}$) if, for a given $\yvec\in\Zbb^n$, {\it to verify} if $\yvec$ satisfies $f(\yvec)=\bvec$ both the DMM implementing $f(\yvec)$ and the DMM implementing the $CF$ from the AB formulation of the problem, require polynomial resources in $n$. 
\end{defn}
Conversely, 
\begin{defn}
a CB problem is said to be \textit{Non-Polynomially Compact} for a DMM (NPC$_\text{M}$) if, for a given $\yvec\in\Zbb^n$, {\it to verify} if $\yvec$ satisfies $f(\yvec)=\bvec$ the DMM implementing $f(\yvec)$ requires \textit{more than} polynomial resources in $n$ while the DMM implementing the $CF$ from the AB formulation of the problem requires only polynomial resources. 
\end{defn} 
We can further define other three computational complexity classes based on the complexity of the DP and IP.
\begin{defn}
A CB problem is said to be \textit{Polynomially Invertible} for a DMM (PI$_\text{M}$), if, {\it to find} a solution $\yvec$ of $f(\yvec)=\bvec$ with both the DP and the IP requires polynomial resources in $n=\dim(\bvec)$.
\end{defn}
On the other hand,
\begin{defn}
a CB problem is said to be \textit{Memcomputing Polynomially Invertible} for a DMM (MPI$_\text{M}$) if, {\it to find} a solution $\yvec$ of $f(\yvec)=\bvec$ with a DP requires {\it more than} polynomial resources in $n$, while the IP requires only polynomial resources.
\end{defn}
On the contrary,
\begin{defn}
a CB problem is said to be \textit{Non-Polynomially Invertible} for a DMM (NPI$_\text{M}$) if, {\it to find} a solution $\yvec$ of $f(\yvec)=\bvec$ with both the DP and the IP requires more than polynomial resources in $n=\dim(\bvec)$.
\end{defn}

Finally, we remark that the case in which the IP requires more than polynomial resources while the DP only polynomial 
resources belongs to NPC$_\text{M}$, and the case in which both require more than polynomial resources belongs to $\text{CB}\backslash\text{PC}_\text{M}$. However, both instances are not of interest to the present work. Therefore, we do 
not define them here. The Euler diagram of the complexity classes for DMMs is shown in Fig.~\ref{DMM_figure_2}.

\subsection{Topological implementation of IP within DMMs}\label{implementation_section}

A DMM can be designed either to implement the DP or the IP. We discuss only the implementation of the latter case since the implementation of the DP is the same as for Turing machines so it does not add anything new. 

We focus on a given CB problem characterized by the boolean system $f$. Since $f$ is composed of boolean functions, we can map $f$ into a boolean circuit composed of logic gates. Then, implementing $f$ into a DMM can be formally done by mapping the boolean circuit into the connections of the memprocessors. The DMM can then work in two different modes: test and solution modes (see Fig.~\ref{DMM_figure_1}). 

In the test mode (see panel (a) of Fig.~\ref{DMM_figure_1}) the control unit feeds the appropriate memprocessors with a signal encoding $\yvec$. In this way, the first transition function $\delta_\alpha$ receives its input. Performing the composition of all transition functions, $(\delta_\zeta\circ\cdots\circ\delta_\alpha)(\yvec)$, we obtain the output $f(\yvec)$ to be compared against $\bvec$ to determine whether or not $\yvec$ is a solution of the CB problem. In this case, the transition function $\delta=\delta_\zeta\circ\cdots\circ\delta_\alpha$ represents the encoding of $f$ through the {\it topology} of the connections of the memprocessors. 

Conversely, in the solution mode (see panel (b) of Fig.~\ref{DMM_figure_1}) the control unit feeds the appropriate memprocessors with a signal encoding $\bvec$. The first transition function $\delta^{-1}_\alpha$ receives its input, and the composition of all transitions functions, $(\delta^{-1}_\zeta\circ\cdots\circ\delta^{-1}_\alpha)(\bvec)$, produces the output $\yvec$. 

The transition function $\delta^{-1}=\delta^{-1}_\zeta\circ\cdots\circ\delta^{-1}_\alpha$ still represents the 
encoding of $f$ through the topology of the connections of the memprocessors. However, it works as some system $g$ such that $g(\bvec)=\yvec$. Note that $g$ is not strictly the inverse of $f$ because it may not exist as a boolean system. For this reason we call $g$ the {\it topological inverse} of $f$ and $\delta^{-1}$ the {\it topological inverse transition function} of $\delta$.

\subsection{Some remarks}\label{Remarks_CB_section}

We add here some remarks to better understand how DMMs operate. As we have seen, the DMM works by encoding a boolean system $f$ onto the {\it topology} of the connections. It means that to solve a given problem with a given input length we must use a given DMM formed by a number of memprocessors and topology of connections directly related to $f$ and the input length. Furthermore, we notice that the possibility of implementing the IP to solve a CB problem into a DMM is ultimately related to its intrinsic parallelism~\cite{UMM}. Finally, even if the realization of such particular DMM is not unique (there may be different types of memprocessors and topologies that solve such a problem) it is, however, a \textit{special purpose} machine designed to solve a specific problem. 

Conversely, the Turing machine is a general purpose machine in the sense that to solve a given problem we have to provide a set of instructions to be computed. In other words, the Turing machine can only solve problems employing the DP. It is also worth remarking that DMMs are more than standard neural networks (even if a neural network is a special case of DMMs). In fact, artificial neural networks do not fully map the problems into the connectivity of neurons. On the contrary, they are in some measure ``general purpose'' machines that need to be trained to solve a given problem. The DMM transfers the network training step into the choice of network topology. 

Finally, we provide some considerations on the possible strategy to device cryptographic schemes for DMMs. As we show in Sec.~\ref{NP_sec}, NP problems such as factorization or the subset-sum problem, can be efficiently solved with DMMs implementing the IP. It means that the standard cryptosystems like the RSA~\cite{RSA} or others based on the notion that 
NP problems provide possible one-way functions~\cite{Goldreich}, can be broken by DMMs. We then have to look at new cryptosystems beyond the Turing machine paradigm. As a guideline we notice here that a cryptosystem for DMMs should be based on problems belonging either to the class NPI$_\text{M}$ or NPC$_\text{M}$ that have both the DP and IP scaling more than polynomial in the length of the input. However, the class of the NPI$_\text{M}$ problems could be empty if all the PC$_\text{M}$ problems can be solved in polynomial time by the DMM employing the IP. In this case, we should instead focus on NPC$_\text{M}$ problems. 

\subsection{Information Overhead}\label{Info_over_subsection}

Using the definition given in the previous sections, we can now formally define a concept that we already have introduced in \cite{UMM}: the {\it information overhead}. We first stress that this information is \textit{not stored} in any memory unit. The stored information is the Shannon self-information that is equal for both Turing machines and UMMs. Instead, it represents the {\it extra} information that is \textit{embedded} into the topology of the connections of the DMMs. Thus we have:

\begin{defn}\label{IO_def}
	The \textit{information overhead} is the ratio 
	\begin{equation}\label{IO}
	    I_O =  \frac{\sum_i ( m^U_{\alpha_i} + m\rq^U_{\alpha_i} )}{\sum_j ( m^T_{\alpha_j} + m\rq^T_{\alpha_j} )},
	\end{equation}
	where $m^T_{\alpha_j}$ and $m\rq^T_{\alpha_j}$ are the number of memprocessors read and written by the transition function $\delta^T_{\alpha_j}$ that is the transition function of a DMM formed by interconnected memprocessors with {\it topology} of connections related to the CB problem, while $m^U_{\alpha_i}$ and $m\rq^U_{\alpha_i}$ are the number of memprocessors read and written by the transition function $\delta^U_{\alpha_i}$ that is the transition function of a DMM formed by the {\it union} of non-connected memprocessors. The sums run over all the transition functions used to find the solution of the problem.
\end{defn} 

From this definition, if we use the IP, $\delta^T_{\alpha_j} = \delta^{-1}_{\alpha_j}$ defined in section \ref{implementation_section}. Conversely, since $\delta^U_{\alpha_i}$ is the transition function of non-connected memprocessors the unique way to find a solution is by employing the DP. Therefore, problems belonging to the class MPI$_\text{M}$ have information overhead that is more than polynomial, i.e., the topology encodes or ``compresses'' much more information about the problem than the DP. It is worth noticing again that this information is something related to the problem, i.e., related to the structure of $f$ (topology of the memprocessor network), and not some actual data stored by the machine. Finally, due to its different nature, the information overhead that we have defined does not satisfy the standard mathematical properties of information measures (see Sec.~\ref{Info_acc_subsection}). Instead, it provides a practical measure of the importance of the topology in a given computing network.

\section{Dynamical Systems Picture}\label{Dyna_section}

In order to provide a practical, physical route to implement DMMs, we now reformulate them in terms of dynamical systems. This formulation will also be useful for a formal definition of accessible information and information overhead (the latter already introduced in \cite{UMM}) and at the same time to clearly point out their scalability and main differences with the most common definitions of parallel Turing machines. Note that a dynamical systems formulation has also been used to discuss 
the computational power of analog neural networks~\cite{Havabook,HavaFish}, which can be classified as a type of memcomputing machines that have no relation between topology and the problem to solve.  Here, we extend it to DMMs proper. 

\subsection{Dynamical systems formulation of DMMs}\label{Dyna_DMM}

We consider the transition function $\delta_\alpha$ defined in \eqref{functUMM} and $p_\alpha,p\rq_\alpha\in {\cal P}$ being the arrays of pointers $p_\alpha=\{i_1,...,i_{m_\alpha} \}$ and $p\rq_\alpha=\{j_1,...,j_{m\rq_\alpha} \}$. Furthermore, we describe the state of the DMM (i.e., the state of the network of memprocessors) using the vector $\xvec\in\Zbb^n$. Therefore, $\xvec_{p_\alpha}\in \Zbb^{m_\alpha}$ is the vector of states of the memprocessors selected by $p_\alpha$. Then, the transition function $\delta_\alpha$ acts as:
\begin{equation}
	\delta_\alpha(\xvec_{p_\alpha})=\xvec\rq_{p\rq_\alpha}\,.
	\label{delta}
\end{equation}
In other words, $\delta_\alpha$ reads the states $\xvec_{p_\alpha}$ and writes the new states $\xvec\rq_{p\rq_\alpha}$, both belonging to $\Zbb^n$. Equation \eqref{delta} points out that the transition function $\delta_\alpha$ {\it simultaneously} acts on a set of memprocessors (intrinsic parallelism). 

The intrinsic parallelism of these machines is the feature that is at the core of their power. It is then very important to understand its mechanism and what derives from it. To 
analyze it more in depth, we define the time interval $\I_\alpha=[t,t+\T_\alpha]$ that $\delta_\alpha$ takes to perform the transition. We can describe mathematically the dynamics during $\I_\alpha$  by making use of the dynamical systems framework \cite{perko_01}. At the instant $t$ the control unit sends a signal to the computational memory whose state is described by the vector $\xvec(t)$. The dynamics of the DMM within the interval $\I_\alpha$ between two signals sent by the control unit is described by the flow $\phi$ (the definition of flow follows from the dynamical systems formalism \cite{perko_01})  
\begin{equation}
\xvec(t\rq{}\in \I_\alpha)=\phi_{t\rq-t}(\xvec(t)) \label{UMMdyn},
\end{equation} 
namely {\it all} memprocessors interact at each instant of time in the interval $\I_\alpha$ such that 
\begin{align}
&\xvec_{p_\alpha}(t)=\xvec_{p_\alpha}\\
&\xvec_{p\rq_\alpha}(t+\T_\alpha)=\xvec\rq_{p\rq_\alpha}
\end{align}

\subsection{Parallel Turing Machines}\label{PTM_subsection}

In \cite{NANO_15} we have briefly discussed the parallelism in today\rq{}s computing architectures, i.e., Parallel Turing Machines (PTM). Here, for the sake of completeness, we summarize the main results useful to compare against the {\it intrinsic} parallelism of DMMs.
  
Parallelism in standard computing machines can be viewed from two perspectives: the practical and the theoretical one. The theoretical approach is still not well assessed and there are different attempts to give a formal definition of PTMs \cite{Kozen_76,Fortune_78,Wolfram_84,wiedermann_84,Karp_88,Worsch_97,Worsch_99,Worsch_12}. Irrespective, the PTM often results in an ideal machine that cannot be practically built \cite{wiedermann_84,Worsch_97}. Therefore, we are more interested in a description of the practical approach to parallelism. 

The definition we give here includes some classes of PTMs, in particular the Cellular Automata and non-exponentially growing Parallel Random Access Machines \cite{Kozen_76,Fortune_78,Wolfram_84,Worsch_12}. We consider a fixed number of (or at most a polynomially increasing number of) central processing units (CPUs) that perform some tasks in parallel. In this case each CPU can work with its own memory cash or accesses a shared memory depending on the architecture. In any case, in {\it practical} Parallel Machines (PM), all CPUs are synchronized, each of them performs a task in a time $\T_\text{PM}$ (the synchronized clock of the system of CPUs), and at the end of the clock cycle all CPUs share their results and follow with the subsequent task. We can describe mathematically also this picture within the dynamical system framework \cite{perko_01} to point out the difference with the intrinsic parallelism of memcomputing. 

Let us consider the vector functions $\svec(t)=[s_1(t),...,s_{n_s}(t)]$ and $\kvec(t)=[k_1(t),...,k_{n_k}(t)]$ defining respectively the states of the $n_s$ CPUs and the symbols written on the total memory formed by $n_k$ memory units. As long as the CPUs perform their computation, at each clock cycle they act {\it independently} so there are $n_s$ independent flows describing the dynamics during the computation of the form $(s_j(t+\T_\text{PM}),k_{j_w}(t+\T_\text{PM}))=\phi^j_{\T_\text{PM}}(s_j(t),\kvec(t))$ where $k_{j_w}$ is the memory unit written by the $j$-th CPU. Since the $j$-th CPU just reads the memory $\kvec(t)$ at only the time $t$ and not during the interval $\I_\text{PM}=]t,t+\T_\text{PM}]$, and it does not perform any change on it apart from the unit $k_{j_w}$, the evolution of the entire state during $\I_\text{PM}$ is completely determined by the set of independent equations
\begin{equation}
	(s_j(t\rq{}\in \I_\text{PM}),m_{j_w}(t\rq{}\in \I_\text{PM}))=\phi^j_{t\rq-t}(s_j(t),\kvec(t))\label{PTMdyn}.
\end{equation} 
A quick comparison with Eq.~\ref{UMMdyn} shows the fundamental difference with memcomputing machines: in each interval $\I_\text{PM}$ the $n_s$ CPUs do not interact in any way and their dynamics are {\it independent}. 

\subsection{Accessible Information}\label{Info_acc_subsection}

Let us now consider a DMM and a PTM having the same number, $m$, of memprocessors, and standard processors, respectively, and taking the same time $\T$ to perform a step of computation. Moreover, we can also assume that at time $t$ and $t+\T$, the computing machines (whether DMM or PTM) have the state of the memory units that can be either 0 or 1. Therefore, the number of all possible initial or final configurations of both the DMM and PTM is $2^{m}$, and the Shannon self-information is equal for both, namely $I_S=-\log_2[2^{-m}]=m$. We introduce the 

\begin{defn}
	The \textit{accessible information} $I_A$ is the volume of the configuration space explored by the machine during the computation, i.e., during the interval $\I_\T$.
\end{defn}

We remark that the concept and definition of accessible information (and also of information overhead given in section \ref{Info_over_subsection}), even if it follows from principles of statistical mechanics, it is not the standard information measure in the sense that it does not satisfy the usual properties of the information measures. On the other hand, our definition is able to highlight relevant features useful to point out differences between DMMs (or UMMs in general) and Turing machines. For instance, from \eqref{PTMdyn}, the PTM is equivalent to a system of $m$ non-interacting objects, so from the principles of statistical mechanics \cite{landau2013statistical} the number of total configurations is the sum of the configurations allowed for each individual element, i.e., $2m$, and hence the volume of the configuration space explored during the computation is also $2m$. The accessible information in this case is then $I_A\propto2m$.  On the other hand, in the case of DMMs, we have that the memprocessors interact and follow the dynamics \eqref{UMMdyn}, therefore the volume of configuration space explored by a DMM during $\I$ is $2^{m}$ and $I_A\propto2^{m}$.

\subsection{Information Overhead vs. Accessible Information}\label{Info_OA_subsection}

We have introduced the concept of information overhead in \cite{UMM} and formalized it in the problem/solution picture in Sec.~\ref{Info_over_subsection}. In this section we discuss its relationship with the accessible information making use of the dynamical systems picture. 

As described above, the DMMs can access, at each computational step, a configuration space that may grow exponentially. Therefore, a DMM, even if it starts from a well defined configuration and ends in another well defined configuration both belonging to $\Zbb^m$, during the computation its \textit{collective state} (i.e., the state of the memprocessor network) results in a superposition of all configurations allowed by the DMM. 

We consider a unit proportionality coefficient between the volume of the configuration space explored by the machine and the accessible information as defined in Sec.~\ref{Info_acc_subsection}. Now, the accessible information is always larger or equal to the number of memprocessors involved in the computational step. In fact, we have $2m\le I_A \le 2^{m}$ and $(m_\alpha+m\rq_\alpha)\le 2m$ and then $(m_\alpha+m\rq_\alpha)\le I_A$. Therefore, using equation \eqref{IO}, we have  

\begin{equation}\label{IO_1}
	I_O \le \frac{\sum_i I^U_{Ai}}{\sum_j ( m^T_{\alpha_j} + m\rq^T_{\alpha_j} )},
\end{equation}
where $I^U_{Ai}$ is the accessible information of the computational step $i$ of a DMM formed by a union of non-connected memprocessors. 

We can further notice that $I^U_{Ai} = 2m^U_i \ge (m^U_{\alpha_i} + m\rq^U_{\alpha_i}) $ where $m^U_i$ is the total number of memprocessors involved in the computational step $i$. Conversely, the accessible information of the computational step $j$ of a DMM formed by interconnected memprocessors is $I^T_{Aj} = 2^{m^T_j}$. 

Now, we consider a CB problem belonging to PC$_\text{M}$ with $n=\dim(\bvec)$ and $f$ the associated boolean system. By the definitions in Sec.~\ref{complexity_section}, the number of memprocessors involved in the computation (either in test or solution mode) is a polynomial function of $n$. Moreover, the number of steps to solve a problem in MPI$_\text{M}$ is polynomial for the case of IP (namely when interconnected memprocessors encode $f$ into the topology of the connections) while it could be more than polynomial in the case of DP (for examples, the union of non-connected memprocessors). Then, there exist two positive polynomial functions $P(n)$ and $Q(n)$ such that $P(n)\sum_i I^U_{Ai}= Q(n)\sum_j I^T_{Aj}$. We can substitute this relation into Eq.~\eqref{IO_1} and have (since all quantities are related to the same machine we can suppress the superscript $T$): 
\begin{equation}\label{IO_2}
I_O \le \frac{Q\sum_j I_{Aj}}{P\sum_j ( m_{\alpha_j} + m\rq_{\alpha_j} )}.
\end{equation}

It is worth noticing that the relation \eqref{IO_2} is valid for any type of topology, including the one with no connections. In fact, in the case of no connections and computational steps that involve all memprocessors (which is equivalent to a PTM) we have that $P=Q$, $I_{Aj} = 2m_j = (m_{\alpha_j} + m\rq_{\alpha_j}) $ and $I_O \le 1$, namely, no information has been encoded (compressed) into the topology of the machine. 

The other limit case is that of connections not related to the problem (like in neural networks). In this case, at least one of $P$ or $Q$ cannot be polynomial but the ratio $P/Q$ must be more than polynomial in $n$ and the $I_O$ is maximized by a polynomial function of $n$.
\subsection{Equilibrium points and IP }\label{equilibrium_subsection}

The information overhead and accessible information defined in the previous section can be interpreted as operational measures of the computing power of a DMM. However, another important feature to characterize the computing power of these machines is the composition of its phase space, i.e., the $n$-dimensional space in which $\xvec(t)$ is a trajectory (not to be confused with the configuration space). In fact, the extra information embedded into a DMM by encoding $f$ onto the connection topology strongly affects the dynamics of the DMM, and in particular its global phase portrait, namely the phase space and its regions like attraction basins, separatrix, etc. \cite{perko_01}. 

To link all these concepts, we first remark that the dynamical system describing a DMM should have the following 
properties:
\begin{itemize}
	\item Each component $x_j(t)$ of $\xvec(t)$ has initial conditions $x_j(0)\in X$, where $X$ is the phase space and also a metric space.
	\item To each configuration belonging to $\Zbb^m$, one or more equilibrium points $\xvec_s\in X$ can be associated, and the system converges exponentially fast to these equilibria~\cite{hale_2010_asymptotic}. 
	\item The stable equilibria $\xvec_s\in X$ are associated to the solutions of the CB problem.
	\item To consistently define the DMM as a \textit{digital} machine, the input and output of the DMM (namely $\yvec$ and $\bvec$ in test mode or $\bvec$ and $\yvec$ in solution mode) must be mapped into a set of parameters $\pvec$ (input) and equilibria $\xvec_s$ (output) such that $\exists \, \hat p,\hat x\in \Rbb$ and $|p_j-\hat p|=c_p$
    and $|x_{s_j}-\hat x|=c_x$ for some $c_p,c_x>0$ independent of $n_y=\dim(\yvec)$ and $n=\dim(\bvec)$. Moreover, if we indicate a polynomial function of $n$ of maximum degree $\gamma$ with $\poly_\gamma(n)$, then $\dim(\pvec)=\poly_{\gamma_p}(n)$ and $\dim(\xvec_s)=\poly_{\gamma_x}(n_b)$ in test mode or $\dim(\pvec)=\poly_{\gamma_p}(n_b)$ and $\dim(\xvec_s)=\poly_{\gamma_x}(n)$ in solution mode, with $\gamma_x$ and $\gamma_p$ independent of $n_b$ and $n$.
    \item Other stable equilibria, periodic orbits or strange attractors that we generally indicate with $\xvec_w(t) \in X$, not associated with the solution(s) of the problem, may exist, but their presence is either irrelevant or can be accounted for 
    with appropriate initial conditions.
    \item The system has a {\it compact global asymptotically stable attractor}~\cite{hale_2010_asymptotic}, which means that it exists a compact $J\subset X$ that attracts the whole space $X$ (for the formal definition see Sec.~\ref{global_attractor_section}).
	\item The system converges to equilibrium points exponentially fast starting from a region of the phase space whose measure is not zero, and which can decrease at most polynomially with the size of the system. Moreover, the convergence time can increase at most polynomially with the input size.
\end{itemize}
It is worth noticing that the last requirements are satisfied if the phase space is completely clustered in regions that are the attraction basins of the equilibrium points and, possibly, periodic orbits and strange attractors, if they exist. We also point out that a class of systems that has a global attractor is the one of \textit{dissipative dynamical systems} \cite{hale_2010_asymptotic}. We will give a more detailed description of them in Sec.~\ref{global_attractor_section} but we anticipate here that by ``dissipative'' we do not necessarily mean ``passive'': active systems 
can be dissipative as well in the functional analysis sense\cite{hale_2010_asymptotic}. We will also show in Sec.~\ref{equilibria_subsection} that all the examples we provide satisfy these properties with the added benefit that the only equilibrium points are the solutions of the problem.

We further define $V=\text{vol}(X)$ the hyper-volume of $V$ defined by some Lebesgue measure on $X$, $J_s\subseteq J$ the compact subset of the global attractor $J$ containing all equilibria $\xvec_s$ only and $J_w\subseteq J$ the compact subset of the global attractor $J$ containing all $\xvec_w(t)$ only. Therefore, we have $J=J_s\cup J_w$ and $J_s\cap J_w=\emptyset$. We also define the sub-hyper volumes $V_s = \text{vol}(X_s)$ where $X_s\subseteq X$ is the subset of $X$ attracted by $J_s$. Similarly, we have  sub-hyper volumes $V_w = \text{vol}(X_w)$ where $X_w\subseteq X$ is the subset of $X$ attracted by $J_w$. By definition of attractors, $J_s$ and $J_w$, we have that $\text{vol}(X_s\cap X_w)=0$ and since $J$ is a global attractor
\begin{equation}
V= V_s +V_w\,.
\end{equation}

Using these quantities we can define both the probabilities that the DMM finds a solution of a CB using the IP, or that it fails to find it as  
\begin{equation}
P_s = \frac{V_s}{V}, \hspace{1cm} 
P_w = \frac{V_w}{V}\,,
\end{equation}   
respectively.

Clearly, this analysis refers to a DMM that finds a solution $\yvec$ of a CB problem only in one step. In this case we say that the DMM works properly iff $P_s>\poly_\gamma^{-1}(n)$ for some $\gamma\in \Nbb$. 

On the other hand, we can have IPs that require several computational steps to find the solution. In this case, after each step of computation the control unit can add some extra input depending on the $\xvec_w$. Therefore, we can define the sequence of probabilities of succeeding or failing as 
\begin{align}
P^1_s = \frac{V^1_s}{V},& \hspace{1cm} P^1_w = \frac{V^1_w}{V}, \nonumber\\
&\hspace{5pt}...\\ 
P^h_s = \frac{V^h_s}{V},& \hspace{1cm} P^h_w = \frac{V^h_w}{V}. \nonumber
\end{align}
If the number of steps, $h$, required to find the solution is a polynomial function of $n$ such that $P^h_z > \poly_\gamma^{-1}(n)$, we then say that the DMM works properly, or more precisely it has found the solution of the given CB problem in MPI$_\text{M}$ with polynomial resources in $n$ (cf. Fig.~\ref{DMM_figure_2}).

\section{Self-organizing logic gates}\label{SOLG_section}

In the previous sections we have given the main definitions and properties characterizing DMMs as a mathematical entity. Within the theory of dynamical systems we have related them to possible physical systems that perform computation using the IP. From all this we then conclude that there is  no mathematical limitation in presupposing the existence of a system with the properties of a DMM.  
With the aim of finding an actual physical system that satisfies all requirements of a DMM, we give here the description of a new type of logic gates that assembled in a circuit provide a possible realization of DMMs. These logic gates can be viewed both as a mathematical and as a physical realization of a DMM. 

To summarize the contents of this section, we first provide an abstract picture of the Self-Organizable Logic Gates (SOLGs) in Sec.~\ref{SOLG_subsection}. In Sec.~\ref{SOLC_subsection} we discuss their assembling into Self-Organizable Logic Circuits (SOLCs). From these abstract concepts, in Sec.~\ref{MEM_SOLG_subsection} we describe a possible realization of SOLGs using electronic components, and in Sec.~\ref{MEM_SOLC_subsection} we discuss the auxiliary circuitry necessary to obtain only stable solutions. In Sec.~\ref{Math_SOLC_section}, starting from the SOLCs equations, we will perform a detailed mathematical analysis in order to prove that these circuits indeed satisfy the requirements to be classified as DMMs. 

\subsection{General Concept}\label{SOLG_subsection}

\begin{figure}
	\centerline{
		\includegraphics[width=.7\columnwidth]{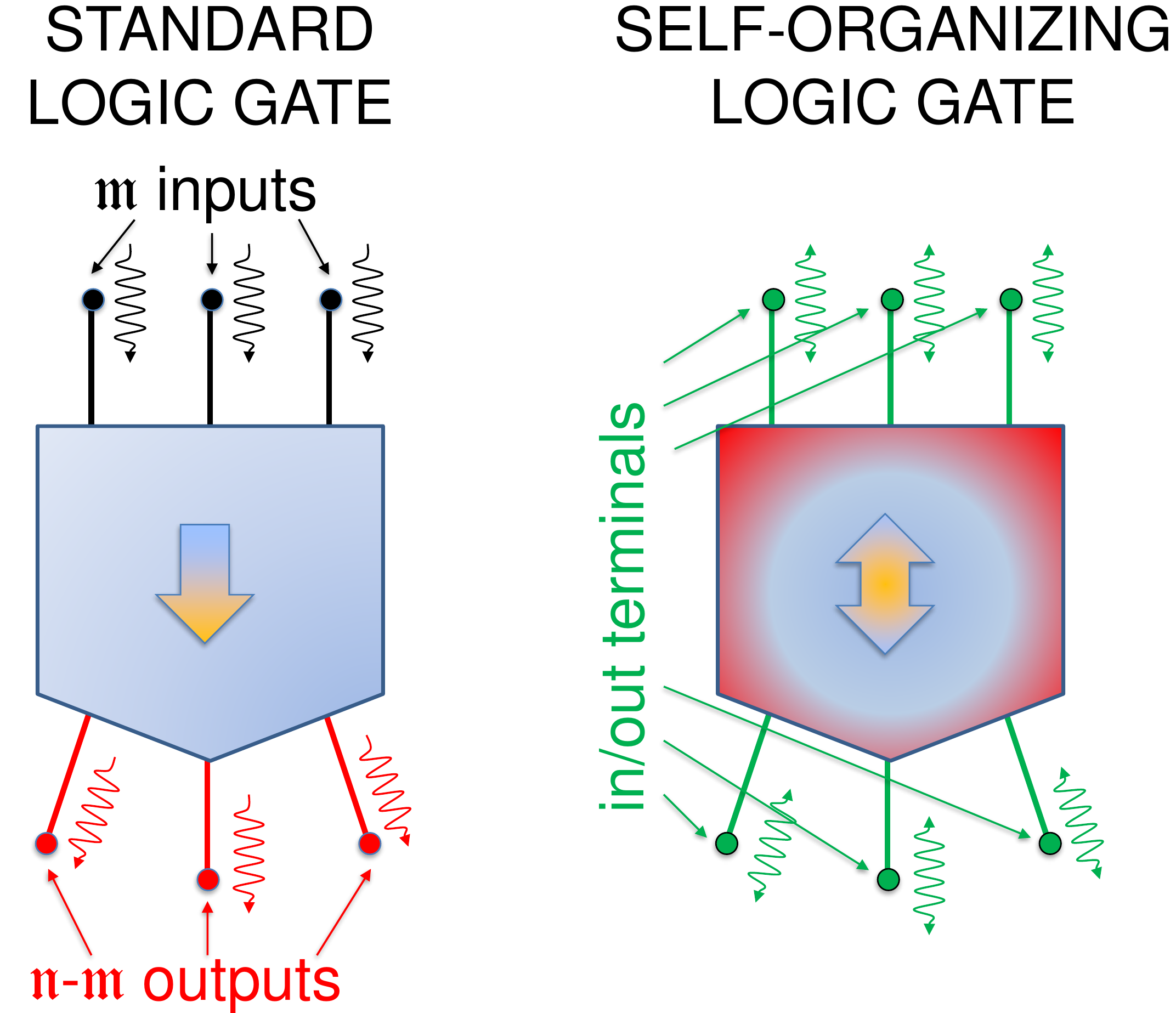}}
	\caption{\label{DMM_figure_3}Left panel: sketch of a standard logic gate. Right panel: sketch of a self-organizing logic gate.}
\end{figure}  

We briefly recall that a standard $\nter$-terminal logic gate with $\mter$ inputs and $\nter-\mter$ outputs follows the scheme in the left panel of Fig.~\ref{DMM_figure_3}. The input  terminals receive the signals, the gate processes the inputs, and finally sends to the output terminals the result of the computation. This is a {\it sequential} DP. 

Instead, we can devise logic gates, such as AND, OR, XOR, NOT, NOR, NAND and any other one in such a way that they work as standard gates by varying the input and obtaining the output, while ``in reverse'', by varying the output, they dynamically provide an input {\it consistent} with that output. We call the objects having these properties \textit{self-organizing logic gates} (SOLGs). 

SOLGs can use {\it any} terminal {\it simultaneously} as input or output, i.e., signals can go in and out at the same time at any terminal resulting in a superposition of input and output signals as depicted in the right panel of Fig.~\ref{DMM_figure_3}. The gate changes {\it dynamically} the outgoing components of the signals depending on the incoming components according to some rules aimed at satisfying the logic relations of the gate.

\begin{figure}
	\centerline{
		\includegraphics[width=\columnwidth]{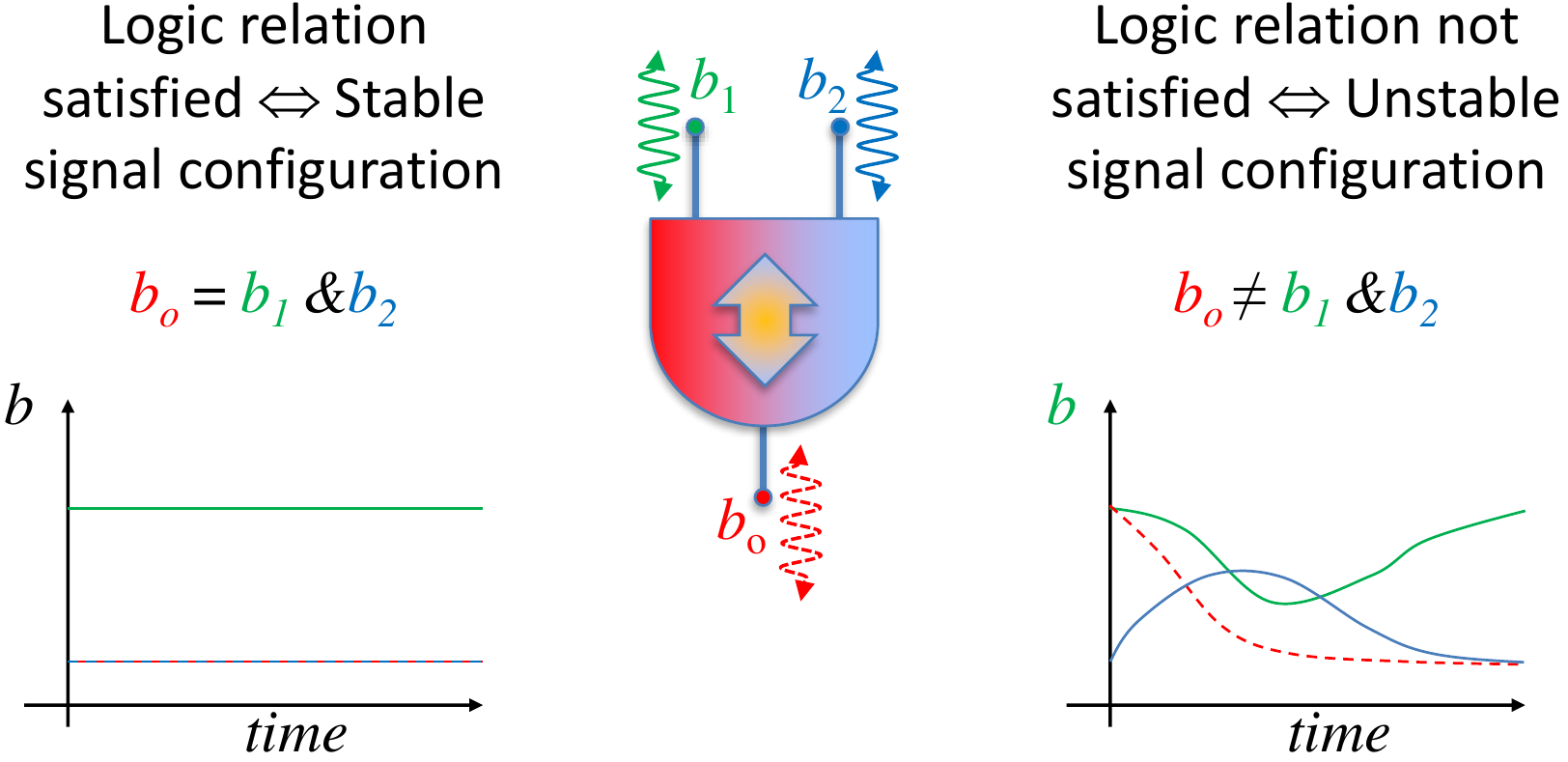}}
	\caption{\label{DMM_figure_4}Left panel: stable configuration of an SO-AND. Right panel: unstable configuration of an SO-AND.}
\end{figure}  

As depicted in Fig.~\ref{DMM_figure_4} a SOLG (in the actual example of Fig.~\ref{DMM_figure_4} the self-organizing (SO) AND gate is reported) can have either stable configurations (left panel) or unstable (right panel) configurations. In the former case, the configuration of the signals at the terminals satisfies the logic relation required (in the case of  Fig.~\ref{DMM_figure_4} the logic AND relation) and the signals would then remain constant in time. Conversely, if the signals at a given time do not satisfy the logic relation, we have the unstable configuration: the SOLG drives the outgoing components of the signal to finally obtain a stable configuration. 

\subsection{Self-Organizable Logic Circuits}\label{SOLC_subsection}

\begin{figure}
	\centerline{
		\includegraphics[width=\columnwidth]{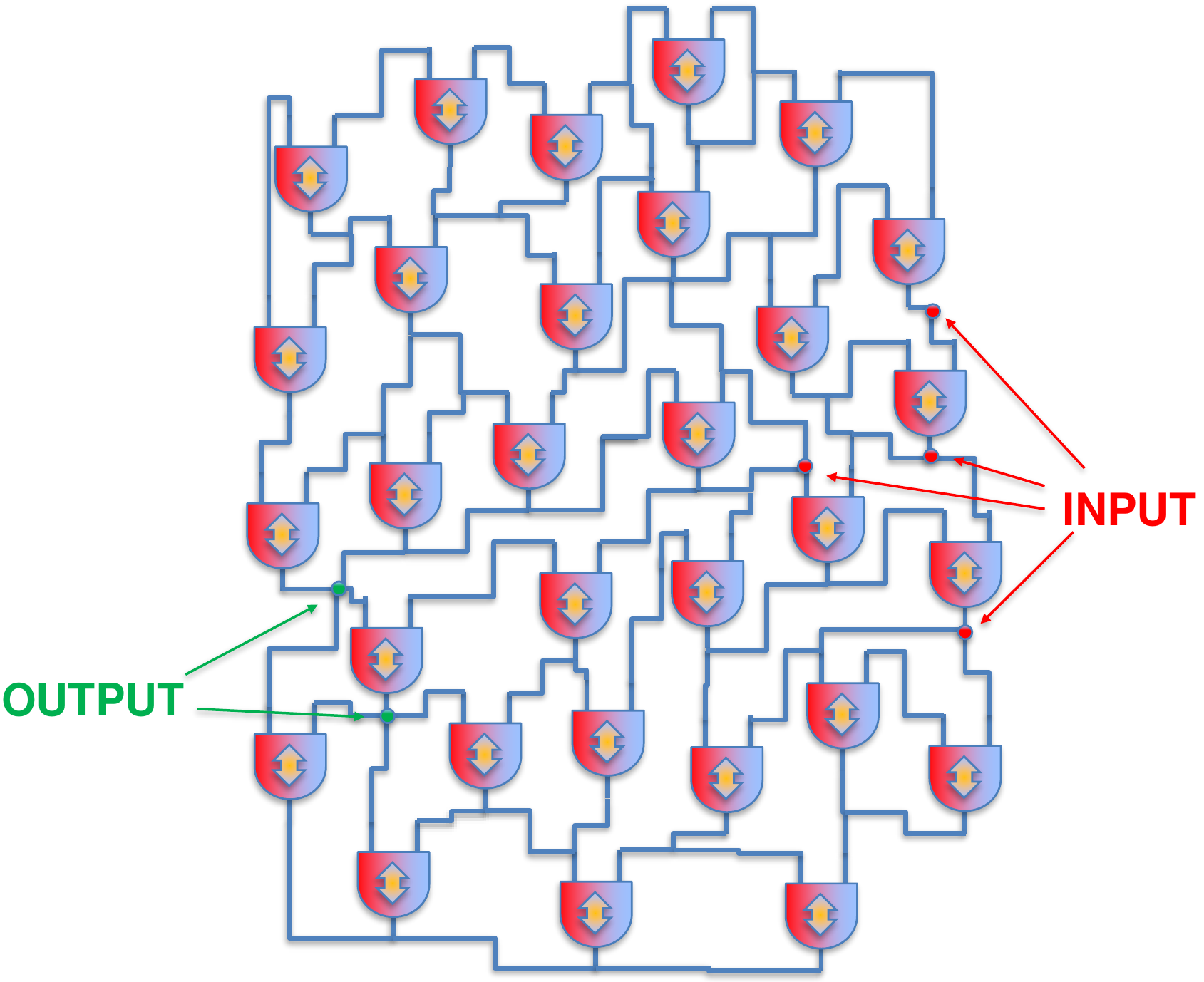}}
	\caption{\label{DMM_figure_5}Self-organizing logic circuit formed by a network of SO-AND gates. The external inputs are sent to some nodes related to the computational task at hand. The self-organizing circuit organizes itself by finding a stable configuration that satisfies the logic proposition and then the solution is read at the output nodes.}
\end{figure}  

Once we have introduced the SOLGs it is natural to put them together to obtain a DMM that can work either in test mode or in solution mode (see Sec.~\ref{implementation_section}). We then introduce the concept of \textit{self-organizing logic circuit} (SOLC) as a circuit composed of SOLGs connected together with the appropriate topology (see for example Fig.~\ref{DMM_figure_5} for a SOLC composed of only SO-AND gates). At each node of a SOLG an external input signal can be provided and the output can be read at other nodes of the SOLC. The connections of the circuit are related to a specific computational task required by the circuit. The topology of the connections and the specific logic gates used are not necessarily unique and can be derived from standard Boolean logic circuit theory, meaning that for a given CB problem, mapping $f$ into the connections can be done by using standard boolean relations. In this way a SOLC represents a possible realization of a DMM that can work in either test or solution mode.

\subsection{Examples of Electronic-Based SOLGs}\label{MEM_SOLG_subsection}

\begin{figure}
	\centerline{
		\includegraphics[width=.8\columnwidth]{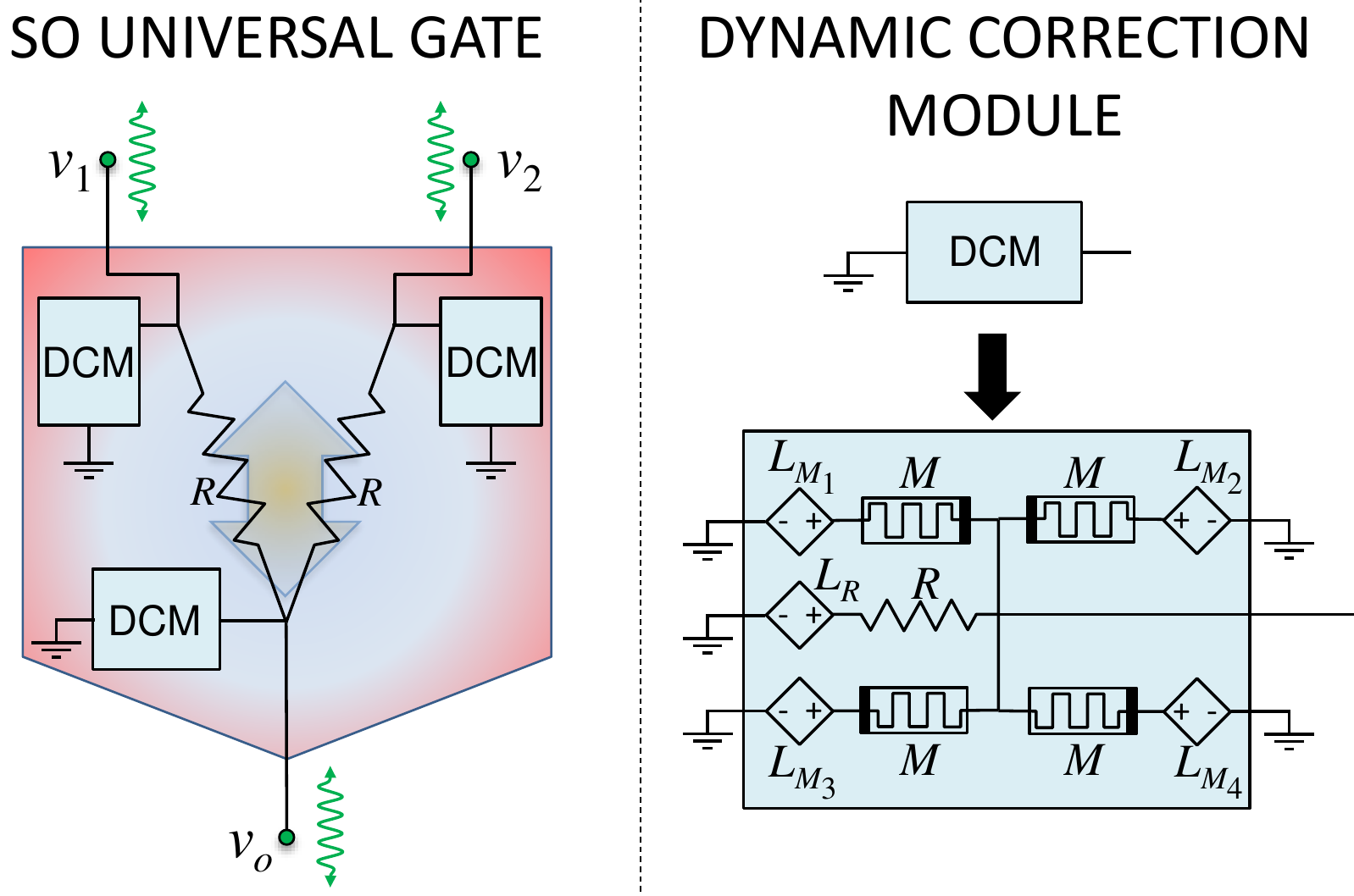}}
	\caption{\label{DMM_figure_6}Self-organizing universal gate (left panel) formed by dynamic correction modules (right panel). The memristors $M$ have minimum and maximum resistances $R_{on}$ and $R_{off}$, respectively. $R=R_{off}$ and $L_\gamma$ are linear functions driving the voltage-controlled voltage generators and depending only on $v_1$, $v_2$ and $v_o$ as reported in table \ref{table1} for different gates.}
\end{figure}

\begin{table*}
	\caption{Universal SO Gate Parameters}
	\label{table1}
	\begin{center}
\begin{tabular}{l r|r|r|r||r|r|r|r||r|r|r|r|}
	\multicolumn{13}{ c  }{\multirow{2}{*}{SO AND}} \\ \\
	\cline{2-13} & \multicolumn{4}{ |c|| }{Terminal 1} & \multicolumn{4}{ c|| }{Terminal 2} & \multicolumn{4}{ c| }{Out Terminal} \\
	\cline{2-13} & \multicolumn{1}{|r|}{$a_1$}  & $a_2$ & $a_o$ & $dc$    & $a_1$  & $a_2$ & $a_o$ & $dc$    & $a_1$  & $a_2$ & $a_o$ & $dc$ \\ \hline 
	\multicolumn{1}{|l|}{$L_{M_1}$}  &
	\multicolumn{1}{|r|}{0}  & -1 & 1 & $v_c$    & -1 & 0 & 1 & $v_c$    & 1 & 0 & 0 & 0 \\ \hline 
	\multicolumn{1}{|l|}{$L_{M_2}$}  &
	\multicolumn{1}{|r|}{1}  & 0 & 0 & 0     & 0 & 1 & 0 & 0     & 0 & 1 & 0 & 0 \\ \hline 
	\multicolumn{1}{|l|}{$L_{M_3}$}  &
	\multicolumn{1}{|r|}{0}  & 0 & 1 & 0     & 0 & 0 & 1 & 0     & 0 & 0 & 1 & 0 \\ \hline 
	\multicolumn{1}{|l|}{$L_{M_4}$}  &
	\multicolumn{1}{|r|}{1}  & 0 & 0 & 0     & 0 & 1 & 0 & 0     & 2 & 2 & -1 & -2$v_c$ \\ \hline 
	\multicolumn{1}{|l|}{$L_{R}$}  &
	\multicolumn{1}{|r|}{4}  & 1 & -3 & -$v_c$   & 1 & 4 & -3 & -$v_c$   & -4 & -4 & 7 & 2$v_c$ \\ \hline 
	\\
	\multicolumn{13}{ c  }{\multirow{2}{*}{SO OR}} \\ \\
	\cline{2-13} & \multicolumn{4}{ |c|| }{Terminal 1} & \multicolumn{4}{ c|| }{Terminal 2} & \multicolumn{4}{ c| }{Out Terminal} \\
	\cline{2-13} & \multicolumn{1}{|r|}{$a_1$}  & $a_2$ & $a_o$ & $dc$    & $a_1$  & $a_2$ & $a_o$ & $dc$    & $a_1$  & $a_2$ & $a_o$ & $dc$ \\ \hline 
	\multicolumn{1}{|l|}{$L_{M_1}$}  &
	\multicolumn{1}{|r|}{0}  & 0 & 1 & 0     & 0 & 0 & 1 & 0     & 0 & 0 & 1 & 0 \\ \hline 
	\multicolumn{1}{|l|}{$L_{M_2}$}  &
	\multicolumn{1}{|r|}{1}  & 0 & 0 & 0     & 0 & 1 & 0 & 0     & 2 & 2 & -1 & 2$v_c$ \\ \hline 
	\multicolumn{1}{|l|}{$L_{M_3}$}  &
	\multicolumn{1}{|r|}{0}  & -1 & 1 & -$v_c$    & -1 & 0 & 1 & -$v_c$    & 1 & 0 & 0 & 0 \\ \hline 
	\multicolumn{1}{|l|}{$L_{M_4}$}  &
	\multicolumn{1}{|r|}{1}  & 0 & 0 & 0     & 0 & 1 & 0 & 0     & 0 & 1 & 0 & 0 \\ \hline 
	\multicolumn{1}{|l|}{$L_{R}$}  &
	\multicolumn{1}{|r|}{4}  & 1 & -3 & $v_c$   & 1 & 4 & -3 & $v_c$   & -4 & -4 & 7 & -2$v_c$ \\ \hline 
	\\
	\multicolumn{13}{ c  }{\multirow{2}{*}{SO XOR}} \\ \\
	\cline{2-13} & \multicolumn{4}{ |c|| }{Terminal 1} & \multicolumn{4}{ c|| }{Terminal 2} & \multicolumn{4}{ c| }{Out Terminal} \\
	\cline{2-13} & \multicolumn{1}{|r|}{$a_1$}  & $a_2$ & $a_o$ & $dc$    & $a_1$  & $a_2$ & $a_o$ & $dc$    & $a_1$  & $a_2$ & $a_o$ & $dc$ \\ \hline 
	\multicolumn{1}{|l|}{$L_{M_1}$}  &
	\multicolumn{1}{|r|}{0}  & -1 & -1 & $v_c$   & -1 & 0 & -1 & $v_c$    & -1 & -1 & 0 & $v_c$ \\ \hline 
	\multicolumn{1}{|l|}{$L_{M_2}$}  &
	\multicolumn{1}{|r|}{0}  & 1 & 1 & $v_c$   & 1 & 0 & 1 & $v_c$    & 1 & 1 & 0 & $v_c$ \\ \hline 
	\multicolumn{1}{|l|}{$L_{M_3}$}  &
	\multicolumn{1}{|r|}{0}  & -1 & 1 & -$v_c$   & -1 & 0 & 1 & -$v_c$    & -1 & 1 & 0 & -$v_c$ \\ \hline 
	\multicolumn{1}{|l|}{$L_{M_4}$}  &
	\multicolumn{1}{|r|}{0}  & 1 & -1 & -$v_c$   & 1 & 0 & -1 & -$v_c$    & 1 & -1 & 0 & -$v_c$ \\ \hline 
	\multicolumn{1}{|l|}{$L_{R}$}  &
	\multicolumn{1}{|r|}{6}  & 0 & -1 & 0   & 0 & 6 & -1 & 0    & -1 & -1 & 7 & 0 \\ \hline 
	\end{tabular}
	\end{center}
\end{table*}

The SOLGs can be realized in practice using available electronic devices. Here, we give an example of a universal SO gate that, by changing internal parameters, can work as AND, OR or XOR SO gates. It is universal because using AND and XOR, or OR and XOR we have a complete boolean basis set.

In Fig.~\ref{DMM_figure_6} we give the circuit schematic of the SO universal gate we propose. The logical 0 and 1 are encoded into the potentials $v_1$, $v_2$ and $v_o$ at the terminals. For example, we can choose a reference voltage $v_c$ such that terminals with voltage $v_c$ encode the logic 1s and terminals with voltage $-v_c$ encode logic 0s. The basic circuit elements are (Fig.~\ref{DMM_figure_6}) resistors, memristors (resistors with memory) and voltage-controlled voltage generators. We briefly discuss the last two elements since resistors are straightforward.

The standard equations of a memristor are  given by the following relations \cite{09_memelements}
\begin{align}
&v_M(t)  =M(x)i_M(t) \hspace{.5cm}&  \label{mem1} \\ 
&C\dot v_M(t) =i_C(t) \hspace{.5cm}&  \label{paras_cap} \\ 
&\dot x(t) =f_M(x,v_M)  \hspace{.5cm}& \text{voltage driven} \label{mem2}\\
& \dot x(t)  =f_M(x,i_M)  \hspace{.5cm}& \text{current driven}  \label{mem3}
\end{align}
where $x$ denotes the state variable(s) describing the internal state(s) of the system (from now on we assume for simplicity a single internal variable); $v_M$ and $i_M$ the voltage and current across the memristor. The function $M$ is a monotonous positive function of $x$. In this work we choose the following relation~\cite{08_strukov}
\begin{equation}
M(x) = R_{on}(1-x)+R_{off}x \label{M_eq},
\end{equation}
which is a good approximation to describe the operation of a certain type of memristors. 
This model also includes a small capacitance $C$ in parallel to the memristor that represents parasitic capacitive effects (Eq.~\eqref{paras_cap}).
$f_M$ is a monotonic function of $v_M$ ($i_M$) while $x\in [0,1]$ and null otherwise. We will discuss extensively an actual form of $f_M$ in Sec.~\ref{Math_SOLC_section}. However, any function that satisfies the monotonic condition and nullity for  $x\notin [0,1]$ would define a memristor. Again, we point out here that this particular choice of elements is not unique and indeed it can be accomplished with other types of devices. For instance, we could replace the memristor functionality with an appropriate combination of transistors. 

The voltage-controlled voltage generator (VCVG) is a linear voltage generator piloted by the voltages  $v_1$, $v_2$ and $v_o$. The output voltage is given by 
\begin{equation}
v_{VCVG} = a_1v_1+a_2v_2+a_ov_o+dc, \label{VDVG}
\end{equation}  
and the parameters $a_1$, $a_2$, $a_o$ and $dc$ are determined to satisfy a set of constraints characteristic of each gate (AND, OR or XOR). These constrains can be summerized in the following scheme:
\begin{itemize}
	\item if the gate is connected to a network and the gate configuration is correct, no current flows from any terminal (the gate is in stable equilibrium).
	\item Otherwise, a current of the order of $v_c/R_{on}$ flows with sign opposite to the sign of the voltage at the terminal.
\end{itemize}
When we connect these gates together, these simple requirements induce {\it feedback} in the network forcing it to satisfy, all at once, all the gates since their correct configurations are stable equilibrium points. It is worth noticing that, since all elements but the voltage driven voltage generators are passive elements, the equilibrium points of the gates are {\it stable} and {\it attractive}. 

A set of parameters that satisfies these requirements is given in table \ref{table1} for the three SO gates. Finally, we remark that the parameters in table \ref{table1} are not unique and their choice can strongly affect the dynamics of the network.

\subsection{Auxiliary circuitry for SOLCs}\label{MEM_SOLC_subsection}
\begin{figure}	
	\centerline{\includegraphics[width=.9\columnwidth]{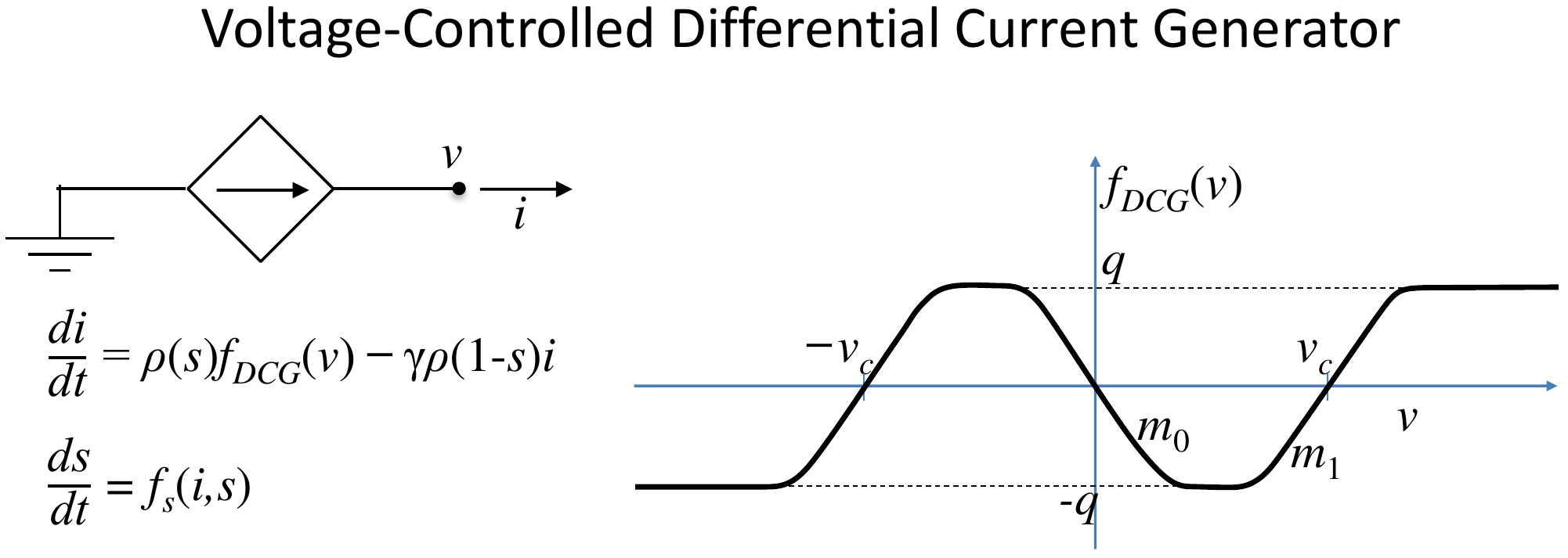}}			
	\caption{\label{DMM_figure_7}Left: circuit symbol and the equations of the Voltage-Controlled Differential Current Generator. Right: sketch of the function $f_{DCG}$.}
\end{figure}
\begin{figure}
\centerline{\includegraphics[width=.8\columnwidth]{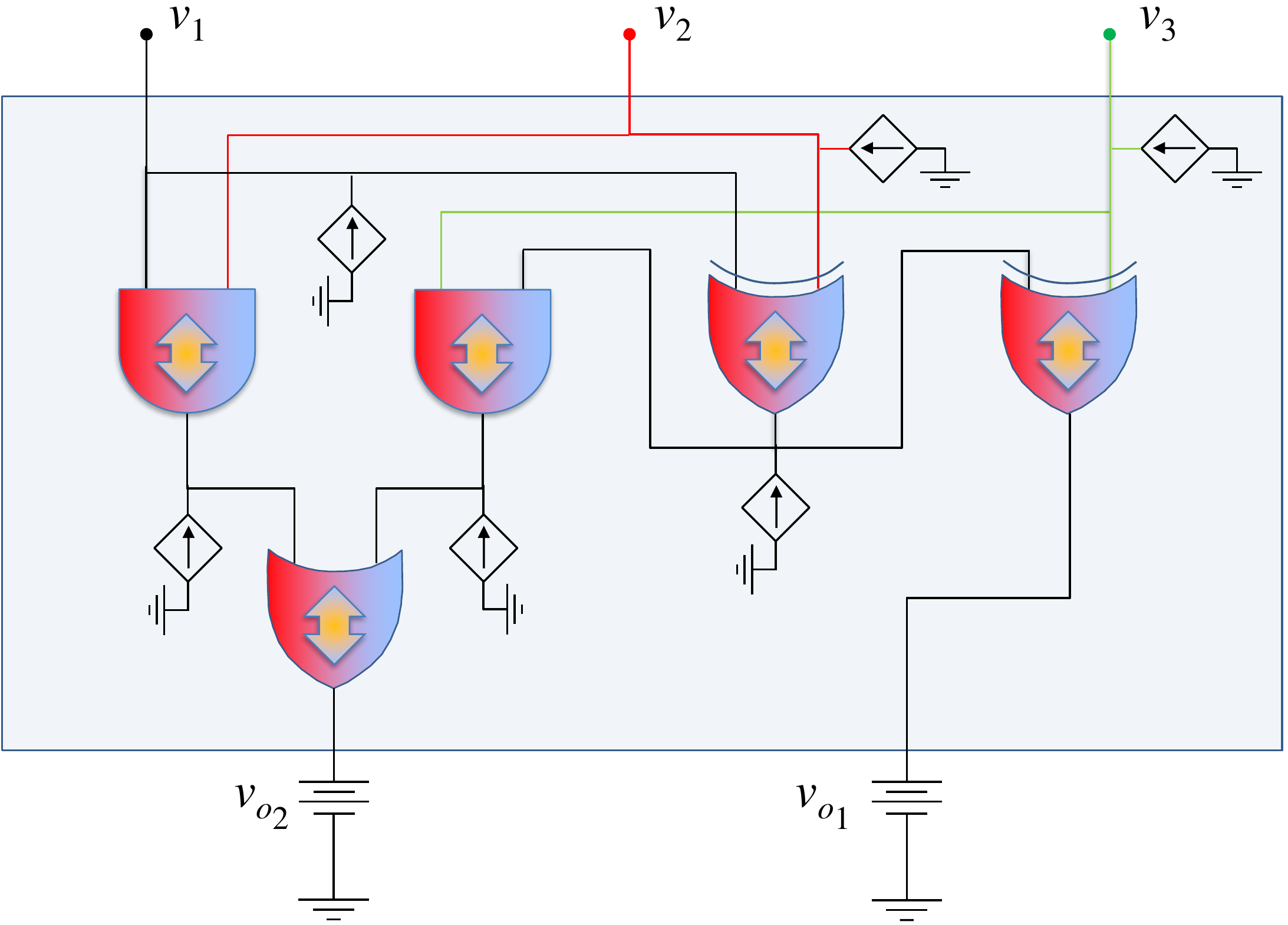}	}	
	\caption{\label{DMM_figure_8}Self-organizing three-bit adder. The input of the adder are the DC generators with voltage $v_{o_1}$ and $v_{o_2}$. The circuit self-organizes to give $v_1$,  $v_2$ and $v_3$ consistent with $v_{o_1}$ and $v_{o_2}$.}
\end{figure}

When we assemble a SOLC using the SOLGs designed in the previous section we may not always prevent the existence of some stable solution not related to the problem. For instance, it can be shown that the DCMs defined through table \ref{table1} admit in some configuration also zero voltage as stable value, namely neither $v_c$ nor $-v_c$. Indeed, if we impose at the output terminal of the SO-AND $v_o=-v_c$ (meaning that we are imposing a logic 0 at the output of the SO-AND), a possible stable solution is $(v_1,v_2)=(0,0)$. (Notice that the only acceptable solutions are $(-v_c,v_c)$, $(v_c,-v_c)$ and $(-v_c,-v_c)$.) 

In order to eliminate this scenario, when we connect together the SOLGs, at each terminal but the ones at which we send the inputs, we connect a Voltage-Controlled Differential Current Generator (VCDCG) sketched in Fig.~\ref{DMM_figure_7}.

By construction, the VCDCG admits as unique stable solutions either $v=v_c$ or  $v=-v_c$. In order to explain how it works, we consider the simplified equation that governs the VCDCG (the complete version will be discussed in detail in section \ref{Math_SOLC_section})
\begin{equation}
\frac{di}{dt}=f_{DCG}(v)\label{VCDG_simplified}
\end{equation}
where the function $f_{DCG}$ is sketched in Fig.~\ref{DMM_figure_7}. If we consider the voltage $v$ around 0, the Eq.~\eqref{VCDG_simplified} can be linearized and gives $\frac{di}{dt}=-m_0v$ with $m_0= -\frac{\partial f_{DCG}}{\partial v}|_{v=0}>0$. Therefore, it is equivalent to a negative inductor and it is sufficient to make the solution 0 unstable. On the other hand, if we linearize around $v=\pm v_c$ (the desired values) we obtain $\frac{di}{dt}=m_1(v\mp v_c)$ with $m_1= \frac{\partial f_{DCG}}{\partial v}|_{v=\pm v_c}>0$. This case is equivalent to an inductor in series with a DC voltage generator of magnitude $\pm v_c$. Since it is connected to a circuit made of memristors and linear voltage generators, $v=\pm v_c$ are stable points. Any other voltage $v$ induces an increase or decrease of the current and, therefore, there are no other possible stable points.

In Fig.~\ref{DMM_figure_8}, we provide an example of how to assemble a SOLC, in particular a SO three-bit adder. At each gate terminal a VCDCG is connected except at the terminal where we send the input. In fact, the inputs are sent through DC voltage generators and, therefore, at those terminals VCDCG would be irrelevant.

\section{Stability analysis of SOLCs}\label{Math_SOLC_section}

In this section we provide a detailed analysis of the SOLC equations, discuss their properties and the stability of the 
associated physical system. In particular, we demonstrate that these SOLCs are {\it asymptotically smooth}, {\it dissipative} and as consequence they have a {\it global attractor}~\cite{hale_2010_asymptotic}. We also prove that all the possible stable equilibria are related to the solution of the CB problem we are implementing, and all the orbits converge exponentially fast to these equilibria {\it irrespective} of the initial conditions. In addition, we prove that this convergence rate depends at most polynomially with the size of the SOLC. Finally, we discuss the absence of periodic orbits and strange attractors in the global attractor. 

We reiterate here that the word ``dissipative'' is meant in the functional analysis sense~\cite{hale_2010_asymptotic} and is not synonymous of ``physically passive'': a passive system (in the physical sense) is necessarily dissipative (in the functional analysis sense), but not the other way around. In fact, the systems we consider in this section are active, and yet functionally dissipative. 

\subsection{Circuit Equations}

The SOLC dynamic is described by standard circuit equations. Using the modified nodal analysis it has been shown that they form a differential algebraic system (DAS) \cite{wiley_enc}. However, dealing with ordinary differential equations (ODEs) is generally simpler from a theoretical point of view since it allows us to use several results of functional analysis as applied to 
dynamical systems theory~\cite{hale_2010_asymptotic}. Of course, the results we obtain for the ODEs apply also to the particular DAS the ODEs originate from. 

For this reason we first perform an order reduction of our system as explained in \cite{wiley_enc}, and after reducing the linear part we obtain an ODE. In doing so, the variables we need to describe our ODE are only the voltages $v_M\in\Rbb^{n_M}$ across the memristors ($n_M$ is the number of memristors), the internal variables $x\in\Rbb^{n_M}$ of the memristors, the currents $i_{DCG}\in\Rbb^{n_{DCG}}$ flowing into the VCDCGs ($n_{DCG}$ the number of VCDCGs), and their internal variables $s\in\Rbb^{n_{DCG}}$. The equations can be formally written as
\begin{gather}
	C\frac{d}{dt}v_M=(A_v+B_v\mathcal{D[}g(x)])v_M+A_{i}i_{DCG}+b,\label{linear}\\
	\frac{d}{dt}x=-\alpha\mathcal{D}[h(x,v_M)]\mathcal{D[}g(x)]v_M,	\label{memristor}\\
	\frac{d}{dt}i_{DCG}=\mathcal{D}[\rho(s)]f_{DCG}(v_M)-\gamma\mathcal{D}[\rho(1-s)]i_{DCG},\label{inductor}\\
	\dfrac{d}{dt}s=f_{s}(i_{DCG},s), \label{inductor_internal}%
\end{gather}
where $\mathcal{D}$ is the linear operator $\mathcal{D}[\cdot]=\mathrm{diag}%
[\cdot]$. In compact form it reads%
\begin{equation}
	\frac{d}{dt}\mathbf{x}=F(\mathbf{x}), \label{syst}%
\end{equation}
where $\mathbf{x}\equiv\{v_M,x,i_{DCG}, s\}$, and $F$ can be read from the right-hand side (r.h.s.) of Eqs.~(\ref{linear})-(\ref{inductor_internal}).
We discuss separately each one of them and we give the definitions of all parameters and functions in the following sections.

\subsection{Equation (\ref{linear})\label{eq1_sec}}

In the Eq.~\eqref{linear} $C$ is the parasitic capacitance of the memristors, $A_v,B_v\in\Rbb^{n_M\times n_M}$ and $A_{i}\in\Rbb^{n_M\times n_{DCG}}$ are constant matrices derived form the reduction of the circuit equations to the ODE format \cite{wiley_enc}. Finally, the vector function $g:\Rbb^{n_M}\rightarrow\Rbb^{n_M}$ is the
conductance vector of the memristors defined through its components as (compare against Eq.~\eqref{M_eq})
\begin{equation}
g_j(x)=(R_1x_j+R_{on})^{-1},\label{g}%
\end{equation}
with $R_1=R_{off}-R_{on}$. As we will discuss in depth in Sec.~\ref{eq1_sec}, we can consider only the variable $x$ restricted to $x\in[0,1]^{n_M}\subset\Rbb^{n_M}$. In this case $g:[0,1]^{n_M}\rightarrow\Rbb^{n_M}$ belongs to $C^{\infty}$ in the whole domain since
$R_{1},R_{on}>0$.

The constant vector $b$ is a linear transformation of the DC components of the
voltage generators. Therefore, we have the following proposition:
\begin{prop}
	The vector function
	\begin{equation}
	F_1(\xvec)=C^{-1}(A_v+B_v\mathcal{D}[g(x)])v_M+C^{-1}A_i i_{DCG}+C^{-1}b \label{F1}
	\end{equation}
	on the r.h.s. of  Eq.~(\ref{linear}) defined on $\xvec\in\Rbb^{n_M}\times[0,1]^{n_M}\times\Rbb^{2n_{DCG}}$	belongs to $C^{\infty}(\xvec)$. Moreover, since the system with no VCDCGs (i.e., $i_{DCG}=0$) is passive the eigenvalues $\lambda_j(x)$ of the matrix $A_v+B_v\mathcal{D}[g(x)]$ satisfy%
	\begin{equation}
	\operatorname{Re}(\lambda_j(x))<0\text{ \ \ for }j=1,\cdots,n_M\text{ and }\forall x\in\lbrack0,1]^{n_M}\text{.} \label{eig1}%
	\end{equation}	
\end{prop}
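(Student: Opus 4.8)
The statement has two parts, and the plan is to handle them in sequence. The first part — that $F_1$ is $C^\infty$ on the domain $\xvec\in\Rbb^{n_M}\times[0,1]^{n_M}\times\Rbb^{2n_{DCG}}$ — is essentially immediate from inspection of \eqref{F1}. The map $x\mapsto g(x)$ is built component-wise from $g_j(x)=(R_1 x_j+R_{on})^{-1}$, and since $R_1=R_{off}-R_{on}>0$ and $R_{on}>0$, the denominator $R_1 x_j + R_{on}$ is bounded below by $R_{on}>0$ on $x_j\in[0,1]$ (indeed on all of $x_j\ge 0$), so $g_j$ is a composition of the smooth affine map with the smooth function $t\mapsto t^{-1}$ on a region avoiding $0$; hence $g\in C^\infty([0,1]^{n_M})$. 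The operator $\mathcal{D}[\cdot]=\mathrm{diag}[\cdot]$ is linear, $C$ is a constant (invertible, being a nonzero parasitic capacitance) diagonal matrix, and $A_v,B_v,A_i,b$ are constant. Therefore $F_1$ is a finite sum of products of $C^\infty$ maps and linear maps of $\xvec$, hence $C^\infty$ in $\xvec$ on the stated domain. I would write this out in one short paragraph.

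The second part is the eigenvalue bound \eqref{eig1}, and this is where the real content lies. The plan is to exploit physical passivity of the memristor-only subnetwork. Setting $i_{DCG}=0$ and $b=0$, the reduced linear equation becomes $C\,\dot v_M=(A_v+B_v\mathcal{D}[g(x)])v_M$, which is the small-signal / instantaneous description of a circuit consisting only of the memristors (each modeled, at frozen internal state $x$, as a positive resistor of conductance $g_j(x)>0$) together with their parasitic capacitances $C$ and the linear resistors and the (now shorted/zeroed) voltage sources absorbed in the reduction. Such a network is built entirely from passive elements, so its stored energy (the capacitive energy $\tfrac12 v_M^\top C\, v_M$, since $C$ is symmetric positive definite) is a Lyapunov function: its time derivative along trajectories equals minus the power dissipated in the resistive part, which is a positive-semidefinite — and in fact, because every memristive branch has strictly positive conductance $g_j(x)\ge (R_1+R_{on})^{-1}>0$, strictly negative-definite — quadratic form in $v_M$. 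Concretely, one shows $\frac{d}{dt}\bigl(\tfrac12 v_M^\top C v_M\bigr)= v_M^\top (A_v+B_v\mathcal{D}[g(x)]) v_M <0$ for $v_M\ne 0$, which says $A_v+B_v\mathcal{D}[g(x)]$ is negative definite with respect to the inner product induced by $C$. A matrix that is negative definite in some inner product has all eigenvalues with strictly negative real part, giving \eqref{eig1}, uniformly for $x\in[0,1]^{n_M}$ by compactness (the conductances $g_j(x)$ range over the compact interval $[(R_1+R_{on})^{-1},R_{on}^{-1}]$, so the dissipation form is uniformly bounded away from degeneracy).

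The step I expect to be the main obstacle is making the passivity argument rigorous after the order reduction of \cite{wiley_enc} — i.e., verifying that the matrices $A_v$ and $B_v$ that survive the DAS-to-ODE reduction really do assemble into the energy-balance identity above, rather than merely asserting it from the physical picture. Two subtleties need care: (i) the reduction eliminates some algebraic variables, so one must check that the effective resistive form seen by $v_M$ remains dissipative (no spurious sign from the Schur-complement-type elimination of resistor nodes) — this follows because eliminating purely resistive nodes from a passive network yields a passive reduced network, but it deserves a sentence; and (ii) one must confirm $C\succ 0$ so that $v_M^\top C v_M$ is a genuine norm and the "negative definite in the $C$-inner-product implies $\operatorname{Re}\lambda<0$" inference is valid. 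Granting the reduction machinery of \cite{wiley_enc} (as the excerpt explicitly permits), both points are standard circuit-theoretic facts, so the proof is short; the only thing to be careful about is stating the passivity/energy argument at the level of the reduced matrices $A_v, B_v$ rather than waving at the original circuit.
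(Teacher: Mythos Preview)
Your approach is essentially the paper's own, only far more explicit: the paper does not give a formal proof of this proposition. The $C^\infty$ claim is taken as immediate from the sentence just preceding the proposition (that $g\in C^\infty$ on $[0,1]^{n_M}$ because $R_1,R_{on}>0$), and the eigenvalue bound is asserted within the proposition itself as a consequence of the circuit without VCDCGs being passive. Your Lyapunov/energy argument is exactly the standard way to turn that physical assertion into the inequality $\operatorname{Re}\lambda_j(x)<0$, and the paper implicitly relies on the same reasoning later (the proof of Lemma~\ref{asym_smooth} calls the $i_{DCG}=0$ subsystem a ``globally passive circuit'' and invokes \eqref{eig1}).

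One point to tighten: your description of the $i_{DCG}=0$ network as ``built entirely from passive elements'' with only ``shorted/zeroed voltage sources'' is not quite accurate. The dynamic correction modules contain voltage-controlled voltage generators (Eq.~\eqref{VDVG} and Table~\ref{table1}), which are active controlled sources and survive the reduction; they enter $A_v$ and $B_v$, not the constant vector $b$. The paper nonetheless asserts passivity of the overall subnetwork --- plausible because the VCVG coefficients are chosen so that correct gate configurations are stable equilibria --- but your energy-balance identity $\tfrac{d}{dt}(\tfrac12 v_M^\top C v_M)=v_M^\top(A_v+B_v\mathcal{D}[g(x)])v_M<0$ cannot be deduced from the memristor/resistor dissipation alone. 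This is precisely the ``main obstacle'' you already flag in your last paragraph; just be aware that it is the controlled sources, not the Schur-complement elimination of resistor nodes, that require the real care.
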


\subsection{Equation (\ref{memristor})\label{eq2_sec}}

We first remark that the component $j$-th on the r.h.s. of Eq.~\eqref{memristor} depends only on the variables $x_j$ and $v_{M_j}$ and we can write for $j=1,\cdots,n_M$
\begin{equation}
\frac{d}{dt}x_j=-\alpha h_j(x,v_M)g_j(x)v_{M_j}=-\alpha h(x_j,v_{M_j})g(x_j)v_{M_j}\,.\label{memristor1}
\end{equation}
Therefore, without loss of generality, we can discuss  Eq.~\eqref{memristor1} suppressing the index $j$ in place of Eq.~\eqref{memristor}.

The \eqref{memristor1} is the equation for the current-driven memristors with $g(x)v_M$ the current flowing through the memristor. The case of voltage-driven does not add any extra content to our discussion and all models and results can be similarly derived for the voltage-driven case. The coefficient $\alpha>0$ can be linked to the physics of the memristors \cite{13_properties}. The conductance $g(x)$ is given by \eqref{g} and it has been discussed in Sec.~\ref{eq1_sec}.

The nonlinear function $h(x,v_M)$ can be chosen in many ways (see e.g., \cite{pershin11a, memrisstor_models} and references therein), and a compact and useful way to express it (from a numerical point of view) is~\cite{pershin11a}
\begin{equation}
h(x,v_M)=\theta(x)\theta(v_M)+\theta(1-x)\theta(-v_M),
\end{equation}
where $\theta$ is the Heaviside step function. However, this function does not belong to any class of continuity (like the majority of the alternative models) for the variables $x$ and $v_M$. We then change it in a way that is physically consistent and it belongs to some class of continuity. In order to do this we can write
\begin{equation}
h(x,v_M)=\left(  1-e^{-kx}\right)  \tilde{\theta}(v_M)+\left(1-e^{-k(1-x)}\right)  \tilde{\theta}(-v_M). \label{h1}
\end{equation}
Before giving a complete description of the function $\tilde{\theta}$ we only require at the moment that
\begin{equation}
\tilde{\theta}(y)=\left\{
\begin{array}[c]{cc}
>0 & \text{for }y>0\\
0   & \text{for }y\leq0.
\end{array}
\right.  \label{theta1}
\end{equation}
Using (\ref{h1}) and \eqref{theta1}, and for $k\gg1$ (that is a limit physically consistent) the function $-\alpha h(x,v_M)g(x)v_M$ can be linearized around $x=0$ and $x=1$ in the following way
\begin{align}
-&\alpha h(x,v_M)g(x)v_M|_{x\approx 0}\nonumber\\
&=\left\{
\begin{array}[c]{cc}
-\alpha kx\tilde{\theta}(v_M)g(0)|v_M|  & \text{for }v_M>0\\
\alpha\tilde{\theta}(-v_M)g(0)|v_M| & \text{for }v_M\leq0,
\end{array}
\right. \\
-&\alpha h(x,v_M)g(x)v_M|_{x\approx 1}\nonumber\\
&=\left\{
\begin{array}[c]{cc}
-\alpha\tilde{\theta}(v_M)g(1)|v_M| & \text{for }v_M>0\\
-\alpha k\left(  x-1\right)  \tilde{\theta}(-v_M)g(1)|v_M| & \text{for }v_M\leq0.
\end{array}
\right.
\end{align}
From these expressions and using \eqref{memristor1} the behavior of $x(t)$ around either $0$ or $1$ can be evaluated and we have
\begin{align}
x(t)|_{x\approx0}  &  \approx\left\{
\begin{array}[c]{cc}
e^{-\alpha k\tilde{\theta}(v_M)g(0)|v_M|t} & \text{for }v_M>0\\
\alpha\tilde{\theta}(-v_M)g(0)|v_M|t & \text{for }v_M\leq 0,
\end{array}
\right. \\
x(t)|_{x\approx1}  &  \approx\left\{
\begin{array}[c]{cc}
1-\alpha\tilde{\theta}(v_M)g(1)|v_M|t & \text{for }v_M>0\\
1-e^{-\alpha k\tilde{\theta}(-v_M)g(0)|v_M|t} &\text{for }v_M\leq0.
\end{array}
\right.
\end{align}
This proves the following proposition:

\begin{prop}
	Using \eqref{h1} to represent $h(x,v_M)$ then $h(x,v_M)\in C^{\infty}(x)$ and for any $x\in[0,1]^{n_M}$ and for any $t>0$, we have $\phi_t^x(x)\in[0,1]^{n_M}$ were $\phi_t^x(x)$ is the flow of the dynamical system \eqref{syst} restricted to the variable $x$ only. Then $[0,1]^{n_M}$ is an invariant subset of $\Rbb^{n_M}$ under
	$\phi_t^x$. Moreover, the boundary points $0$ and $1$ are limit points and for any open ball of $B\subset[0,1]^{n_M}$ we have that $\phi_t^x(B)\subset[0,1]^{n_M}$ is an open ball.
\end{prop}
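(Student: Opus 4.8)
The plan is to verify the proposition one component at a time, exploiting the fact (already noted in Eq.~\eqref{memristor1}) that the $j$-th equation for $x_j$ decouples from all other $x$-components and depends only on $x_j$ and $v_{M_j}$. So it suffices to analyze the scalar dynamics $\dot x = -\alpha\, h(x,v_M)\, g(x)\, v_M$ and then take the Cartesian product over $j=1,\dots,n_M$; a product of invariant intervals is an invariant cube. Smoothness, $h(x,v_M)\in C^\infty(x)$, is immediate from the representation \eqref{h1}: $1-e^{-kx}$ and $1-e^{-k(1-x)}$ are entire in $x$, and $\tilde\theta(v_M)$, $\tilde\theta(-v_M)$ are constants as far as the $x$-dependence is concerned, so the right-hand side is a $C^\infty$ function of $x$ on all of $\Rbb$ (in particular on $[0,1]^{n_M}$, once we confirm invariance so that $g$ stays smooth there).

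The core of the argument is invariance of $[0,1]^{n_M}$ under $\phi_t^x$, which I would get by a boundary/sign analysis together with uniqueness of solutions. At $x=0$ we have $h(0,v_M)=(1-e^0)\tilde\theta(v_M)+(1-e^{-k})\tilde\theta(-v_M)=(1-e^{-k})\tilde\theta(-v_M)\ge 0$, and $g(0)=R_{on}^{-1}>0$, so the vector field at the left face is $\dot x|_{x=0}=-\alpha(1-e^{-k})\tilde\theta(-v_M)\,g(0)\,v_M$. If $v_M\le 0$ this equals $\alpha(1-e^{-k})\tilde\theta(-v_M)g(0)|v_M|\ge 0$, pointing into $[0,1]$; if $v_M>0$ then $\tilde\theta(-v_M)=0$ by \eqref{theta1} so $\dot x|_{x=0}=0$ and $x=0$ is an equilibrium of the frozen-$v_M$ flow. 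Symmetrically, at $x=1$ one computes $h(1,v_M)=(1-e^{-k})\tilde\theta(v_M)$, giving $\dot x|_{x=1}=-\alpha(1-e^{-k})\tilde\theta(v_M)g(1)v_M$, which is $\le 0$ for $v_M>0$ and $=0$ for $v_M\le 0$. Hence on each face the field is either tangent or inward-pointing, so no trajectory starting in $[0,1]$ can cross the boundary; by the standard Nagumo-type invariance argument (or simply: a trajectory reaching $0$ or $1$ with the field tangent stays on the boundary by uniqueness, and one reaching it from inside cannot, since the field there points inward), $\phi_t^x(x)\in[0,1]^{n_M}$ for all $x\in[0,1]^{n_M}$ and all $t>0$. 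This also shows $0$ and $1$ are limit points: e.g.\ from the linearizations already displayed in the text, near $x=0$ with $v_M>0$ we get $x(t)\approx x(0)e^{-\alpha k\tilde\theta(v_M)g(0)|v_M|t}\to 0$, and near $x=1$ with $v_M\le 0$, $x(t)\to 1$, so these boundary points are approached asymptotically (in the frozen-$v_M$ picture, and the same conclusion transfers to the full flow on the invariant set).

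For the last assertion — that $\phi_t^x$ maps open balls in $[0,1]^{n_M}$ to open balls — I would invoke the smoothness of $F$ established in the earlier propositions (in particular $g\in C^\infty$ on $[0,1]^{n_M}$ since $R_1,R_{on}>0$): a smooth flow is a diffeomorphism onto its image for each fixed $t$, hence an open map, so it carries open sets to open sets, and in particular the image of a ball is an open (generally non-round) neighborhood; the word ``ball'' here should be read in the topological sense of ``open connected set diffeomorphic to a ball,'' which is preserved because $\phi_t^x$ is a diffeomorphism and connectedness is a topological invariant. The main obstacle I anticipate is being careful at the boundary faces $x_j\in\{0,1\}$: there the ``naive'' linearization gives a one-sided vector field (the two branches of \eqref{h1} have different derivatives as $v_M$ changes sign), and one must make sure that $\tilde\theta$ being merely $C^k$ (not $C^\infty$) in $v_M$ does not spoil the $C^\infty(x)$ claim — it does not, since the proposition only asserts smoothness in $x$ — and that the invariance argument does not secretly require the field to be Lipschitz up to the corner points of the cube. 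Handling the corners (where several $x_j$ simultaneously hit $\{0,1\}$) reduces to the one-dimensional analysis componentwise, which is why the decoupling observation at the start is doing the real work.
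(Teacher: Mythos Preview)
Your proposal is correct and follows the same componentwise strategy as the paper --- reduce to the scalar equation $\dot x=-\alpha h(x,v_M)g(x)v_M$ and examine the behaviour of the vector field at the boundary points $x=0$ and $x=1$ --- but the technical execution differs in a useful way. The paper argues by linearizing $-\alpha h(x,v_M)g(x)v_M$ near $0$ and $1$ in the regime $k\gg 1$ and writing down the approximate solutions $x(t)$, from which invariance and the limit-point claim are read off. You instead evaluate $h$ \emph{exactly} at $x=0$ and $x=1$, observe that the field there is either zero or inward-pointing for every sign of $v_M$, and invoke the Nagumo subtangent criterion; this is cleaner and works for every $k>0$ rather than only asymptotically. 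You also supply two pieces the paper leaves implicit: the $C^\infty(x)$ claim (immediate from the exponential form of \eqref{h1}) and the open-ball assertion, which you correctly identify as the open-map property of the time-$t$ flow of a smooth vector field, with ``ball'' read topologically. The only point to watch is the one you already flag: when the field vanishes on a face (e.g.\ $x_j=0$ with $v_{M_j}>0$), uniqueness alone does not preclude crossing, but the Nagumo condition does, so your argument is complete as stated.
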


Since, physically speaking, $x$ should be restricted to $[0,1]^{n_M}$, this proposition
allows us to restrict the values of $x$ to $[0,1]^{n_M}$ in a natural way by using a function $h(x,v_M)\in C^{\infty}(x)$.

Now, we discuss an actual expression of $\tilde{\theta}(y)$ satisfying the condition \eqref{theta1} and other useful conditions for the next sections. The goal is to find a $\tilde{\theta}(y)$ that satisfies the following conditions

\begin{enumerate}
	\item $\tilde{\theta}(y)$ satisfies \eqref{theta1},
    \item \label{condition_2}$\tilde{\theta}(y)=0$ for any $y\leq0$,
    \item $\tilde{\theta}(y)=1$ for any $y\geq1$,
    \item \label{condition_4}for some $r\geq1$ and for $l=1,\cdots,r$ the derivatives $\left.  \frac{d^{l}}{dy^{l}}\tilde{\theta}(y)\right\vert _{y=0}=\left.
	\frac{d^{l}}{dy^{l}}\tilde{\theta}(y)\right\vert _{y=1}=0$.
\end{enumerate}

\begin{figure}	
	\centerline{\includegraphics[width=1\columnwidth]{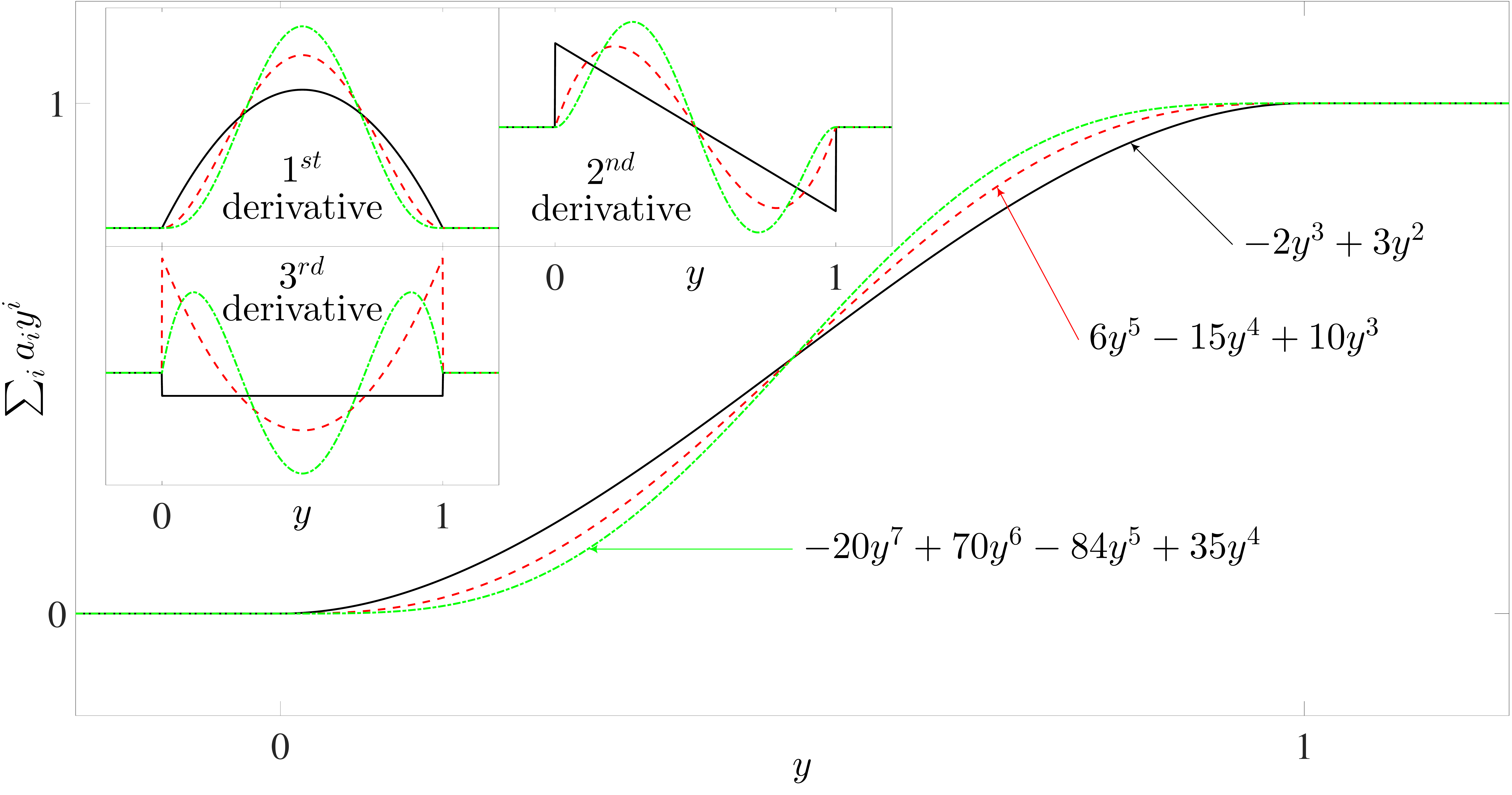}}			
	\caption{\label{figure_theta}Examples of function $\tilde{\theta}^r(y)$ [Eq.~(\ref{theta2})] for $r=1$ (black solid line), $r=2$ (red dashed line) and $r=3$ (green dashed-dot line). In the insets the 1$^{st}$, 2$^{nd}$ and 3$^{rd}$ derivative are plotted.}	
\end{figure}

It is easy to prove that the conditions \ref{condition_2}-\ref{condition_4} are
satisfied by
\begin{equation}
\tilde{\theta}^{r}(y)=\left\{
\begin{array}[c]{ll}
1 & \text{for }y>1\\
\sum_{i=r+1}^{2r+1}a_iy^i & \text{for }0\leq y\leq1\\
0 & \text{for }y<0
\end{array}
\right.  \label{theta2}
\end{equation}
where the coefficients $a_i$ with $i=r+1,\cdots,2r+1$ can be evaluated by requiring $\sum_{i=r+1}^{2r+1}a_i=1$ and $\sum_{i=r+1}^{2r+1}\binom{i}{l}a_i=0$ for $l=1,\cdots,r$. From the conditions \ref{condition_2}-\ref{condition_4} the polynomial $\sum_{i=r+1}^{2r+1}a_iy^i$ is also monotonous because the stationary points are only located on $0$ and $1$ (because of the condition \ref{condition_4}). Therefore the polynomial
$\sum_{i=r+1}^{2r+1}a_iy^i$ satisfying conditions \ref{condition_2}-\ref{condition_4} also satisfies \eqref{theta1}.

We also can notice that $\lim_{r\rightarrow\infty}\tilde{\theta}^r(y)=\theta(y-\tfrac{1}{2})$ and it is easy to show that a compact integral representation of $\sum_{i=r+1}^{2r+1}a_iy^i$ exists and reads \begin{equation}
\sum_{i=r+1}^{2r+1}a_iy^i{=}\left(\int_{0}^{1}z^{r+1}(z{-}1)^{r+1}dz\right)^{-1}\hspace{-8pt}\int_{0}^{y}z^{r+1}(z{-}1)^{r+1}dz.
\end{equation}
As an illustration we report the cases of $r=$1, 2 and 3 in Fig.~\ref{figure_theta}.%

In conclusion, Eq.~(\ref{theta2}) allows us to write the following proposition:

\begin{prop}
	Using \eqref{h1} to represent $h(x,v_M)$ and \eqref{theta2} to represent $\tilde{\theta}(v_M)$, then, for any particular choice of $r$, we have that the function
	\begin{equation}
	F_{2}(\mathbf{x})=-\alpha\mathcal{D[}h(x,v_M)]\mathcal{D[}g(x)]v_M%
	\end{equation}
	on the r.h.s. of \eqref{memristor} defined on $\xvec\in\Rbb^{n_M}\times[0,1]^{n_M}\times\Rbb^{2n_{DCG}}$ is in the class $C^{r}(\xvec)$.
\end{prop}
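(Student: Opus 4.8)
The claim is that $F_2(\xvec) = -\alpha\,\mathcal{D}[h(x,v_M)]\,\mathcal{D}[g(x)]\,v_M$ lies in $C^r(\xvec)$ once $h$ is given by \eqref{h1} and $\tilde\theta$ by \eqref{theta2}. The plan is to reduce the vector statement to a scalar one and then treat each factor of the product. By the component-wise observation around Eq.~\eqref{memristor1}, the $j$-th component of $F_2$ depends only on $(x_j,v_{M_j})$, so it suffices to show that the scalar map $(x_j,v_{M_j})\mapsto -\alpha\,h(x_j,v_{M_j})\,g(x_j)\,v_{M_j}$ is $C^r$ on $[0,1]\times\Rbb$; smoothness in the remaining variables $i_{DCG},s$ is trivial since $F_2$ does not depend on them, and a product/composition of $C^r$ scalar maps acting on disjoint blocks of coordinates is again $C^r$.

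\textbf{Key steps.} First I would record that $g\in C^\infty([0,1])$ (already established in Sec.~\ref{eq1_sec}, since $R_1,R_{on}>0$ make $R_1 x_j+R_{on}$ bounded away from zero on $[0,1]$), and that the exponential prefactors $1-e^{-kx}$ and $1-e^{-k(1-x)}$ in \eqref{h1} are $C^\infty$ in $x$. Hence the only possible loss of regularity comes from the factor $\tilde\theta(v_M)$ (and $\tilde\theta(-v_M)$) as a function of $v_M$, and more precisely from the gluing points $v_M=0$ and $v_M=1$ (resp.\ $v_M=0,-1$). On $\Rbb\setminus\{0,1\}$ the piecewise definition \eqref{theta2} is a polynomial or a constant, hence $C^\infty$; at $v_M=1$ the two branches agree to all orders trivially (constant $1$ meets a polynomial that equals $1$ with all derivatives vanishing there, by the defining conditions $\sum a_i=1$, $\sum\binom{i}{l}a_i=0$); at $v_M=0$ condition \ref{condition_4} gives $\frac{d^l}{dy^l}\tilde\theta^r(y)|_{y=0}=0$ for $l=1,\dots,r$, matching the zero branch, so $\tilde\theta^r\in C^r(\Rbb)$ with the $r$ being exactly the order guaranteed by \eqref{theta2}. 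Then $\tilde\theta(v_M)$ and $\tilde\theta(-v_M)$ are $C^r$ in $v_M$, the products with the $C^\infty$ functions $g(x_j)$, $(1-e^{-kx_j})$, $(1-e^{-k(1-x_j)})$ and the polynomial $v_{M_j}$ stay $C^r$, and summing the two terms in \eqref{h1} (after multiplying through by $-\alpha g(x_j)v_{M_j}$) preserves $C^r$. Assembling the $n_M$ scalar components into the diagonal-operator form gives $F_2\in C^r(\xvec)$ on the stated domain $\Rbb^{n_M}\times[0,1]^{n_M}\times\Rbb^{2n_{DCG}}$.

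\textbf{Main obstacle.} The only genuinely delicate point is verifying that the polynomial branch of $\tilde\theta^r$ in \eqref{theta2} truly meets the flat branches with matching derivatives up to order $r$ at \emph{both} endpoints simultaneously — i.e.\ that the linear system $\sum_{i=r+1}^{2r+1}a_i=1$, $\sum_{i=r+1}^{2r+1}\binom{i}{l}a_i=0$ $(l=1,\dots,r)$ is solvable and yields a polynomial all of whose first $r$ derivatives vanish at $y=0$ (automatic, since the lowest power is $y^{r+1}$) and at $y=1$ (this is what the $\binom{i}{l}$ conditions encode, via the Taylor expansion of $\sum a_i y^i$ about $y=1$). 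Equivalently one invokes the integral representation $\sum a_i y^i \propto \int_0^y z^{r+1}(z-1)^{r+1}dz$, whose integrand vanishes to order $r+1$ at both $0$ and $1$, so the antiderivative is $C^r$ across the joins; this makes the endpoint-matching transparent and is the cleanest route. Everything else is the routine closure of $C^r$ under finite sums, products, compositions with smooth maps, and reshuffling into block-diagonal operator form, plus the already-proven invariance of $[0,1]^{n_M}$ (Prop.~\ref{eq2_sec} above) which justifies restricting the domain of $x$ to $[0,1]^{n_M}$ in the first place.
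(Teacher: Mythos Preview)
Your proposal is correct and follows essentially the same route as the paper. The paper does not give a standalone proof of this proposition; it states it as an immediate consequence of the construction of $\tilde\theta^r$ satisfying conditions 1--4 (in particular condition~\ref{condition_4}), together with the already-noted $C^\infty$ regularity of $g$ and of the exponential prefactors in \eqref{h1}. Your write-up simply spells out this implicit argument in detail: reduce to the scalar component via \eqref{memristor1}, observe that all factors except $\tilde\theta$ are $C^\infty$, and verify the $C^r$ matching of $\tilde\theta^r$ at the gluing points $y=0,1$ using either the linear system for the $a_i$ or the integral representation.
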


We finally briefly discuss an interesting physical consequence of modelling $h(x,v_M)$ using $\tilde{\theta}^r(y)$. In fact, if we consider
\begin{equation}
h(x,v_M)=\left(1-e^{-kx}\right)  \tilde{\theta}^r(\tfrac{v_M}{2V_t})+\left(  1-e^{-k(1-x)}\right)  \tilde{\theta}^r(-\tfrac{v_M}{2V_t}),\label{h2}
\end{equation}
we can also interpret $V_t$ as the threshold of the memristor \cite{13_properties}, thus the $\tilde{\theta}^r$ enables a natural and smooth way to include threshold effects in the memristor equations.

\subsection{Equation (\ref{inductor})\label{eq3_sec}}

Here, we have $\gamma>0$. Each component of the vector function $f_{DCG}:\Rbb^{n_M}\rightarrow\Rbb^{n_{DCG}}$ actually depends on the voltage at the node where the VCDCG is connected. This voltage is expressed as a linear combination of the $v_M$ components
\begin{equation}
v_{DCG_j}=u_j^Tv_M+v_{0_j} \label{vDCG}%
\end{equation}
where $u_j\in\Rbb^{n_M}$ and $v_{0_j}\in\Rbb$ are constant vectors. Therefore, we can write
\begin{equation}
f_{DCG_j}(v_M)=f_{DCG_j}(v_{DCG_j})=f_{DCG}(v_{DCG_j})\text{.}
\end{equation}

The function we want to reproduce with our VCDCG is depicted in  Fig.~\ref{DMM_figure_7}. This shape can be obtained in several ways using several smooth step functions like $\operatorname{erf}$, $\arctan$ or even the $\tilde{\theta}^r$ we defined in the previous section. Therefore, we can assume that $f_{DCG_j}$ is at least of $C^r(v_{DCG})$.

Finally, the function $\rho$ also satisfies
\begin{equation}
\rho_j(s)=\rho_j(s_j)=\rho(s_j),
\end{equation}
and is expressed as
\begin{equation}
\rho(s_j)=\tilde{\theta}^r\left(\tfrac{s_j-\frac{1}{2}}{\delta s}\right)\label{ro},
\end{equation}
with $0<\delta s\ll 1$. Therefore, we have the following proposition:

\begin{prop}
	The vector function
	\begin{equation}
	F_{3}(\xvec)=\mathcal{D}[\rho(s)]f_{DCG}(v_M)-\gamma\mathcal{D}[\rho(1-s)]i_{DCG}
	\end{equation}
	on the r.h.s. of \eqref{inductor} defined on $\xvec\in\Rbb^{n_M}\times[0,1]^{n_M}\times\Rbb^{2n_{DCG}}$ is at least of $C^r(\xvec)$.
\end{prop}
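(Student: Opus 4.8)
The plan is to verify that each factor appearing in the definition of $F_3$ is $C^r$ on the stated domain, and then use the fact that products and linear combinations of $C^r$ functions are $C^r$. Concretely, $F_3(\xvec)=\mathcal{D}[\rho(s)]f_{DCG}(v_M)-\gamma\mathcal{D}[\rho(1-s)]i_{DCG}$, and since $\mathcal{D}[\cdot]$ is just the diagonal-matrix operator, the $j$-th component is $\rho(s_j)\,f_{DCG}(v_{DCG_j})-\gamma\,\rho(1-s_j)\,i_{DCG_j}$. So it suffices to check componentwise.

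First I would handle $\rho$. By \eqref{ro} we have $\rho(s_j)=\tilde{\theta}^r\!\left(\tfrac{s_j-\frac12}{\delta s}\right)$, which is an affine change of variables composed with $\tilde{\theta}^r$; since $\tilde{\theta}^r\in C^r(\Rbb)$ (this is exactly what the construction in \eqref{theta2} together with conditions \ref{condition_2}--\ref{condition_4} guarantees: the polynomial piece is $C^\infty$, and the $r$ matching derivative conditions at $0$ and $1$ make the function $C^r$ across the gluing points), the composition $s_j\mapsto\rho(s_j)$ is $C^r$ in $s_j$, hence $C^r$ in $\xvec$. The same applies to $s_j\mapsto\rho(1-s_j)$. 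Next, $f_{DCG}$: by \eqref{vDCG} the argument $v_{DCG_j}=u_j^Tv_M+v_{0_j}$ is an affine (hence $C^\infty$) function of $v_M$, and by the discussion preceding the proposition $f_{DCG}$ is at least $C^r$ (being built from $\operatorname{erf}$, $\arctan$, or $\tilde{\theta}^r$-type smooth step functions), so $v_M\mapsto f_{DCG_j}(v_{DCG_j})$ is $C^r$ in $\xvec$. Finally, $i_{DCG_j}$ is a coordinate projection, hence $C^\infty$, and $\gamma$ is a constant.

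Assembling: each component of $F_3$ is a finite sum of products of $C^r$ functions of $\xvec$ on $\Rbb^{n_M}\times[0,1]^{n_M}\times\Rbb^{2n_{DCG}}$, and the class $C^r$ is closed under finite sums and products (Leibniz rule), so $F_3\in C^r(\xvec)$ on that domain. One should note that restricting $x$ to $[0,1]^{n_M}$ causes no difficulty here because $F_3$ does not depend on $x$ at all; the restriction matters only for $F_1$ and $F_2$.

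I do not expect any serious obstacle: the argument is a routine bookkeeping of regularity, and every nontrivial input (that $\tilde{\theta}^r\in C^r$, that $f_{DCG}$ can be taken $C^r$) has already been established earlier in the section. The only point requiring a modicum of care is confirming that $\rho$ as defined in \eqref{ro} really is $C^r$ \emph{on all of $\Rbb$} in the variable $s_j$ — i.e., that the potential loss of smoothness at $s_j=\frac12\pm\delta s$ (where $\tilde{\theta}^r$ is glued to the constants $0$ and $1$) is controlled — but this is immediate from condition \ref{condition_4} in the construction of $\tilde{\theta}^r$, so it is genuinely a one-line check rather than a real difficulty.
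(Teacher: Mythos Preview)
Your proposal is correct and follows essentially the same approach as the paper: the paper does not give a separate formal proof of this proposition but simply states it after having established in the preceding paragraphs that $f_{DCG}$ can be taken $C^r$ (via $\operatorname{erf}$, $\arctan$, or $\tilde{\theta}^r$) and that $\rho$ is defined through $\tilde{\theta}^r$ in \eqref{ro}. Your componentwise verification and explicit appeal to closure of $C^r$ under sums and products is a more detailed write-up of what the paper leaves implicit, but the underlying reasoning is identical.
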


\subsection{Equation (\ref{inductor_internal})\label{eq4_sec}}

\begin{figure}	
	\centerline{\includegraphics[width=.9\columnwidth]{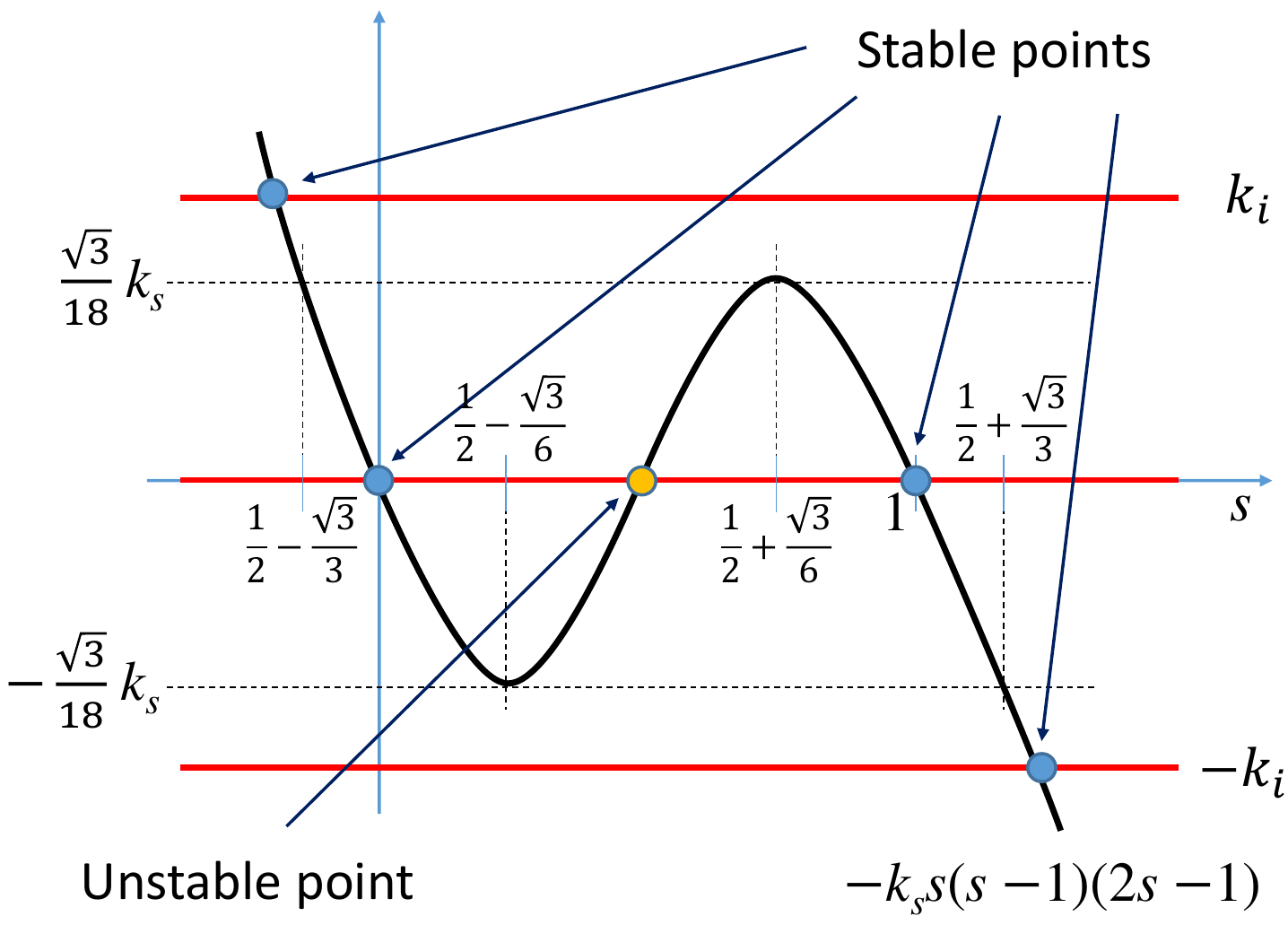}}			
	\caption{\label{stable_fs}Stability picture of  Eq.~\eqref{inductor_internal}.}	
\end{figure}

We finally discuss the r.h.s. of Eq.~\eqref{inductor_internal}. Here, we have the function $f_s:\Rbb^{n_{DCG}}\times\Rbb^{n_{DCG}}\rightarrow\Rbb^{n_{DCG}}$ and it satisfies component wise
\begin{equation}
f_{s_j}(i_{DCG},s)=f_{s_j}(i_{DCG},s_j)=f_s(i_{DCG},s_j).
\end{equation}
We can then discuss it as a scalar function of two variables neglecting the index $j$.

The function $f_s(i_{DCG},s)$ is 
\begin{align}
f_s(&i_{DCG},s)=-k_ss(s-1)(2s-1)+\nonumber\\
&k_i\left(1-\prod_j \tilde\theta\left(\tfrac{i_{\min}^2-i_{DCG_j}^2}{\delta i}\right)-
\prod_j \tilde\theta\left(\tfrac{i_{\max}^2-i_{DCG_j}^2}{\delta i}\right)\right)
\label{fs}
\end{align}
with $k_s,k_i,\delta i,i_{\min},i_{\max}>0$ and $i_{\min}<i_{\max}$. Note that when $k_i=0$ Eq.~\eqref{fs} represents a bistable system. To understand the role of the variable $s$ we notice that, by considering only the term in $s$ in \eqref{fs}, it represents the bistable system with two stable equilibrium points in $s=0,1$ and an unstable equilibrium in $s=\frac{1}{2}$. Now, we consider again the terms in $i_{DCG}$ and $\delta i\ll i_{\min}$. In this case,  $\prod_j \tilde\theta((i_{\min}^2-i_{DCG_j}^2)/\delta i)=0$ if at least one $i_{DCG_j}$ satisfies $|i_{DCG_j}|>i_{\min}+\delta i/2\approx i_{\min}$, otherwise the product is 1. On the other hand, we have $\prod_j \tilde\theta((i_{\max}^2-i_{DCG_j}^2)/\delta i)=0$ if at least one $i_{DCG_j}$ satisfies $|i_{DCG_j}|>i_{\max}+\delta i/2\approx i_{\max}$, otherwise the product is 1. Therefore, if we consider $k_i>\sqrt{3/18}k_s$, we have the stability picture described in Fig.~\ref{stable_fs} where 
\begin{itemize}
	\item the red line located at $k_i$ represents terms in $i_{DCG}$ of \eqref{fs} for the case in which at least one $i_{DCG_j}$ satisfies $|i_{DCG_j}|>i_{\max}$. Therefore, we have only one stable equilibrium.
	\item The red line located at 0 is for the case in which at least one $i_{DCG_j}$ satisfies $|i_{DCG_j}|>i_{\min}$ and all $i_{DCG_j}$ satisfy $|i_{DCG_j}|<i_{\max}$. Therefore, we have two possible stable equilibria and an unstable one. 
	\item The red line located at $-k_i$ is for the case in which all $i_{DCG_j}$ satisfy $|i_{DCG_j}|<i_{\min}$. Therefore, we have only one stable equilibrium. 
\end{itemize}

This picture can be summarized as follows: if at least one $|i_{DCG_j}|>|i_{\max}|$ then the variable $s$ will approach the {\it unique} stable point for $s<\frac{1}{2}-\frac{\sqrt{3}}{3}<0$, while if all $|i_{DCG_j}|<|i_{DCG_{\min} }|$ the variable $s$ will approach the {\it unique} stable point for $s>\frac{1}{2}+\frac{\sqrt{3}}{3}>1$. If at least one $|i_{DCG_j}|>i_{\min}$ and all $|i_{DCG_j}|<i_{\max}$ then $s$ will be either $\frac{1}{2}-\frac{\sqrt{3}}{3}<s<\frac{1}{2}-\frac{\sqrt{3}}{6}$ or $\frac{1}{2}+\frac{\sqrt{3}}{6}<s<\frac{1}{2}+\frac{\sqrt{3}}{3}$.

Now, from Eq.~\eqref{inductor} and the definition \eqref{ro} we have that, if at least one $|i_{DCG_j}|>|i_{\max}|$, then $s<0$, $\rho(s)=0$ and $\rho(1-s)=1$ and the equation \eqref{inductor} reduces to $\frac{d}{dt}i_{DCG}=-\gamma i_{DCG}$. Therefore, the currents in $i_{DCG}$ decrease. 

When all of them reach a value $|i_{DCG_j}|<|i_{\min}|$ then $s$ jumps and reaches the stable point $s<1$ and we have $\rho(s)=1$ and $\rho(1-s)=0$ and the Eq.~\eqref{inductor} reduces to $\frac{d}{dt}i_{DCG}=f_{DCG}(v_M)$ (which is the simplified version \eqref{VCDG_simplified} discussed in Sec.~\ref{MEM_SOLC_subsection}). Since $f_{DCG}(v_M)$ is bounded, the current $i_{DCG}$ is bounded and, if $k_s\gg\max(|f_{DCG}(v_M)|)$, we have 
\begin{equation}
\sup(|i_{DCG}|)\simeq|i_{\max}|.
\end{equation}

Therefore, using the analysis in this section we can conclude with the following proposition:

\begin{prop}
	\label{proposition_F4}The vector function 
	\begin{align}
	F_4&(\xvec)=-k_ss(s-1)(2s-1)+\nonumber\\
&k_i\left(1-\prod_j \tilde\theta\left(\tfrac{i_{\min}^2-i_{DCG_j}^2}{\delta i}\right)-
\prod_j \tilde\theta\left(\tfrac{i_{\max}^2-i_{DCG_j}^2}{\delta i}\right)\right)
	\end{align}
	on the r.h.s. of \eqref{inductor} defined on $\xvec\in\Rbb^{n_M}\times[0,1]^{n_M}\times\Rbb^{2n_{DCG}}$ is at least of  $C^{r}(\xvec)$. Moreover, there exist $i_{\max},s_{\max}<\infty$ and $s_{\min}>-\infty$ such that the variables $s$ and $i_{DCG}$ can be restricted to $[-i_{\max},i_{\max}]^{n_{DCG}}\times[s_{\min},s_{\max}]^{n_{DCG}}\subset\Rbb^{2n_{DCG}}$. 
	In particular, $s_{\max}$ is the unique zero of $F_4(s,i_{DCG}=0)$. This subset is invariant under the flow $\phi_t^{i_{DCG},s}$ of the dynamical system \eqref{syst} restricted to the variable $i_{DCG}$ and $s$ only. Moreover, the boundary points are limit
	points and for any open ball of $B\subset[-i_{\max},i_{\max}]^{n_{DCG}}\times[s_{\min},s_{\max}]^{n_{DCG}}$ we have that $\phi_t^{i_{DCG},s}(B)\subset[-i_{\max},i_{\max}]^{n_{DCG}}\times[s_{\min},s_{\max}]^{n_{DCG}}$ is an open ball.
\end{prop}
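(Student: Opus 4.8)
The plan is to establish the two assertions separately, the $C^{r}$ regularity being essentially formal and the invariant-set claim being where the dynamics enters. For regularity, note first that $F_{4}$ depends on neither $v_{M}$ nor $x$, so joint membership in $C^{r}(\xvec)$ on $\Rbb^{n_M}\times[0,1]^{n_M}\times\Rbb^{2n_{DCG}}$ reduces to $C^{r}$ smoothness in $(i_{DCG},s)$. The summand $-k_{s}s(s-1)(2s-1)$ is a polynomial, hence $C^{\infty}$; each factor of the two products is the function $\tilde{\theta}^{r}$ of \eqref{theta2}, already shown to be of class $C^{r}$, precomposed with the polynomial $i_{DCG_j}\mapsto(i_{\min}^{2}-i_{DCG_j}^{2})/\delta i$ (resp. with $i_{\max}$); and composition with a $C^{\infty}$ map, finite products and finite sums all preserve class $C^{r}$. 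Hence $F_{4}\in C^{r}$. This step is routine.

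For the invariant set, freeze the currents and write $F_{4}=C(s)+c$ with $C(s):=-k_{s}s(s-1)(2s-1)$ and $c:=k_{i}\bigl(1-\prod_{j}\tilde{\theta}(\cdot)-\prod_{j}\tilde{\theta}(\cdot)\bigr)$, a scalar \emph{common to all components} $s_{j}$ that ranges over $[-k_{i},k_{i}]$ and saturates at its extreme values exactly in the two extreme current regimes identified above (all $|i_{DCG_j}|$ below $i_{\min}$, resp. at least one above the \eqref{fs}-threshold $i_{\max}$). The cubic $C$ has local extrema $\pm k_{s}\sqrt{3}/18$ at $s=1/2\mp\sqrt{3}/6$, and the hypothesis $k_{i}>\sqrt{3/18}\,k_{s}$ places the levels $\pm k_{i}$ strictly outside that range, so each of $C(s)+k_{i}=0$ and $C(s)-k_{i}=0$ has a unique real root, one past each unbounded branch of $C$. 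I would then take $s_{\max}$ to be the unique zero of $F_{4}(s,i_{DCG}=0)$, equivalently the root past the branch where $C\to-\infty$, and $s_{\min}$ the zero in the opposite saturated regime, and check $s_{\min}<s_{\max}$. The Nagumo (subtangentiality) condition on the two $s$-faces of the box is then immediate: on $s_{j}=s_{\max}$ one has $\dot{s}_{j}=C(s_{j})+c\le C(s_{j})+k_{i}\le 0$ because $C(\cdot)+k_{i}$ vanishes at $s_{\max}$ and is strictly decreasing beyond it, and symmetrically $\dot{s}_{j}\ge 0$ on $s_{j}=s_{\min}$; so the field is inward (or tangent) on every $s$-face, whatever the currents do.

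The delicate part is the current bound and the precise sense of invariance. From \eqref{inductor}, $\dot{i}_{DCG_j}=\rho(s_{j})f_{DCG_j}(v_{M})-\gamma\rho(1-s_{j})i_{DCG_j}$ with $|f_{DCG_j}|\le f^{\star}<\infty$ (as noted above): for $s_{j}$ in the discharge band ($s_{j}<1/2$, so $\rho(s_{j})=0$, $\rho(1-s_{j})=1$ by \eqref{ro}) this is $-\gamma i_{DCG_j}$, which points strictly inward on $i_{DCG_j}=\pm i_{\max}$, whereas in the charge band it is $f_{DCG_j}(v_{M})$, so $|i_{DCG_j}|$ grows only at rate $\le f^{\star}$. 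The hysteretic switch analysed above closes the loop: once some $|i_{DCG_j}|$ crosses the \eqref{fs}-threshold $i_{\max}$, $c$ saturates to the value that drives \emph{every} $s_{j}$, at rate $\sim k_{s}$, into the discharge band, whereupon $i_{DCG_j}$ is pulled back down. The obstruction I expect to cost the most work is that $s$ reacts with finite delay, so the coordinate rectangle is not literally forward-invariant (on the face $i_{DCG_j}=i_{\max}$ with $s_{j}$ still in the charge band and $f_{DCG_j}>0$ the field points outward); the truly invariant object in each $(s_{j},i_{DCG_j})$ plane is a loop-shaped trapping region around the hysteresis cycle, and the rectangle $[-i_{\max},i_{\max}]\times[s_{\min},s_{\max}]$ is an absorbing set containing it, which is the sense in which it is invariant here. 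One makes this quantitative by bounding the overshoot by $\lesssim f^{\star}/k_{s}$ (the current accrued while $s$ crosses the narrow band $|s-1/2|<\delta s$ at speed $\gtrsim k_{i}-k_{s}\sqrt{3}/18$), finite and small under the standing assumption $k_{s}\gg f^{\star}$, and fixing $i_{\max}$ accordingly; the stated identity for $s_{\max}$ holds because the slice $i_{DCG}=0$ realises exactly the saturated regime that bounds $s$ from above.

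Granting the box $K=[-i_{\max},i_{\max}]^{n_{DCG}}\times[s_{\min},s_{\max}]^{n_{DCG}}$ in that sense, the remaining clauses follow as in the earlier invariance proposition for $[0,1]^{n_M}$. Since $(F_{3},F_{4})\in C^{r}$ with $r\ge 1$, the flow $\phi_{t}^{i_{DCG},s}$ is a local $C^{r}$ diffeomorphism, hence an open map, so it sends every open ball contained in $K$ to an open subset of $K$; and because on $\partial K$ the normal component of the field is either strictly inward or vanishes only at isolated equilibria (for instance $s_{j}=s_{\max}$ in the slice $i_{DCG}=0$, an attracting equilibrium of the restricted system), no interior orbit reaches $\partial K$ in finite time while every point of $\partial K$ is an $\omega$-limit of interior orbits, so the boundary points are limit points.
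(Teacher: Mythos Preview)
The paper does not give a separate formal proof of this proposition: it is stated as a summary of the informal switching analysis that precedes it (the three ``red line'' regimes of Fig.~\ref{stable_fs}, the consequent reduction of \eqref{inductor} to either $\dot i_{DCG}=-\gamma i_{DCG}$ or $\dot i_{DCG}=f_{DCG}(v_M)$, and the estimate $\sup|i_{DCG}|\simeq i_{\max}$ under $k_s\gg\max|f_{DCG}|$). Your argument follows the same dynamical picture but is considerably more formal: you make the regularity step explicit, you cast the $s$--bounds as a Nagumo subtangentiality check at the faces of the box, and you invoke the $C^{r}$ diffeomorphism property of the flow for the open--ball clause. You also identify honestly a point the paper hides behind the symbol ``$\simeq$'': the coordinate rectangle is not literally forward--invariant because $s$ switches with finite delay, so a small overshoot of order $f^{\star}/k_s$ must be absorbed into the definition of $i_{\max}$. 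That is a genuine improvement in rigor over the paper's treatment.

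There is one internal muddle worth fixing. You define $s_{\max}$ as the unique zero of $F_4(s,i_{DCG}=0)=C(s)-k_i$, but two lines later your Nagumo bound on the face $s_j=s_{\max}$ uses that $C(\cdot)+k_i$ vanishes there; these are different roots of different shifted cubics. In fact, with the formula \eqref{fs} as written, $i_{DCG}=0$ gives $c=-k_i$ and the resulting equilibrium of $\dot s=C(s)-k_i$ lies on the \emph{negative}-$s$ branch, whereas the upper $s$--face needs the root of $C(s)+k_i=0$ on the branch where $C\to-\infty$. This inconsistency is already present in the paper (its verbal description of which current regime drives $s$ to which side, and the role of $s_{\max}$ in later sections, are at odds with the sign in \eqref{fs}), so you have inherited rather than introduced it; but for a clean write--up you should pick one orientation, let the Nagumo faces be the zeros of $C(s)\pm k_i$ on the two unbounded branches of $C$, and note that the proposition's ``$s_{\max}$ is the unique zero of $F_4(s,0)$'' then identifies one of them.
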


We can conclude this section with the last proposition that can be easily
proven by using \eqref{eig1} and the proposition \ref{proposition_F4}:

\begin{prop}
	If the variables $x\in[0,1]^{n_M}$, $i_{DCG}\in[-i_{\max},i_{\max}]^{n_{DCG}}$ and $s\in[s_{\min},s_{\max}]^{n_{DCG}}$ then there exist $v_{\max}<\infty$ and $v_{\min}>-\infty$ such that $[v_{\min},v_{\max}]^{n_M}\subset\Rbb^{n_M}$ is an invariant subset under the flow $\phi_t^{v_M}$ of the dynamical system \eqref{syst} restricted to the variable $v_M$ only.
\end{prop}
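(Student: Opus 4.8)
The plan is to exhibit an explicit compact box $[v_{\min},v_{\max}]^{n_M}$ in the $v_M$-coordinates that is positively invariant, exploiting the already-established facts that the other coordinates $x$, $i_{DCG}$, $s$ are confined to the compact set $K=[0,1]^{n_M}\times[-i_{\max},i_{\max}]^{n_{DCG}}\times[s_{\min},s_{\max}]^{n_{DCG}}$ (Propositions on Eqs.~(\ref{memristor}) and (\ref{inductor_internal})) and that the linear part of Eq.~(\ref{linear}) is uniformly Hurwitz in the sense of \eqref{eig1}. The key observation is that, when $x$ ranges over the compact cube $[0,1]^{n_M}$, the conductance $g(x)$ stays in a compact interval bounded away from $0$ (since $R_{on},R_1>0$), so the matrix family $M(x):=A_v+B_v\mathcal{D}[g(x)]$ is a continuous family of Hurwitz matrices over a compact parameter set; hence there is a single symmetric positive-definite $P$ and a constant $\mu>0$ with $M(x)^TP+PM(x)\le -\mu P$ for all $x\in[0,1]^{n_M}$ (a standard compactness/common-Lyapunov argument, or one may instead use the coercivity estimate coming directly from passivity of the memristor-only subcircuit). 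Simultaneously, the inhomogeneous term $A_i i_{DCG}+b$ is uniformly bounded on $K$, say $\|A_i i_{DCG}+b\|\le \beta<\infty$.

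Concretely, I would take the Lyapunov-type function $W(v_M)=v_M^TPv_M$ and compute $\dot W$ along \eqref{linear}: $C\,\dot W = v_M^T(M(x)^TP+PM(x))v_M + 2v_M^TP(A_i i_{DCG}+b) \le -\mu\,v_M^TPv_M + 2\beta\sqrt{\lambda_{\max}(P)}\,\|v_M\|$, using $\|Pv_M\|\le\sqrt{\lambda_{\max}(P)}\sqrt{W}$. Since $v_M^TPv_M\ge \lambda_{\min}(P)\|v_M\|^2$, the right-hand side becomes negative once $\|v_M\|$ exceeds some explicit radius $R_\star$ depending only on $\mu$, $\beta$, $\lambda_{\min}(P)$, $\lambda_{\max}(P)$. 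Consequently the sublevel set $\{W\le c_\star\}$ with $c_\star=\lambda_{\max}(P)R_\star^2$ is positively invariant and absorbs every trajectory, and it is contained in a cube $[v_{\min},v_{\max}]^{n_M}$ with $v_{\max}=-v_{\min}=\sqrt{c_\star/\lambda_{\min}(P)}$; by construction $v_{\max}<\infty$ and $v_{\min}>-\infty$. Invariance of the cube itself (rather than just of the ellipsoid) then follows because on $\partial[v_{\min},v_{\max}]^{n_M}$ the component whose coordinate is extremal has its time-derivative pointing inward — this can be read off either from the sign structure of $M(x)$ on the boundary or, more cleanly, by enlarging the cube slightly so that it contains $\{W\le c_\star\}$ and noting the ellipsoid is the genuine trap.

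I expect the main obstacle to be the uniformity of the decay estimate over the whole cube $x\in[0,1]^{n_M}$: \eqref{eig1} only gives eigenvalues with negative real part pointwise in $x$, and for a non-normal, $x$-dependent matrix family negative eigenvalues alone do not yield a contraction in any fixed norm. The cleanest route around this is the common quadratic Lyapunov function guaranteed by the passivity argument already invoked in the Proposition following Eq.~(\ref{linear}): passivity of the memristor-only network ($i_{DCG}=0$) furnishes a physically meaningful energy (stored capacitive/inductive energy) that is dissipated at a rate controlled by the resistive elements, and this energy is exactly the quadratic form $W$ one needs, with the dissipation giving the $-\mu W$ term uniformly in $x$ because the resistances are bounded above by $R_{off}$. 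Once that uniform estimate is in hand, the remaining steps are the routine Gronwall/absorbing-set manipulations sketched above, and the boundary-behaviour addendum (limit points, open balls mapped to open balls) follows from smoothness of $F$ and the fact that $\phi_t^{v_M}$ is a diffeomorphism onto its image, exactly as in the analogous Propositions for $x$ and for $(i_{DCG},s)$.
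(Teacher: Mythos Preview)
The paper itself offers no proof beyond the one-line remark that the proposition ``can be easily proven by using \eqref{eig1} and the proposition \ref{proposition_F4}.'' Your proposal is therefore not an alternative to the paper's argument so much as a fleshing-out of it, and the overall strategy --- treat $M(x)=A_v+B_v\mathcal D[g(x)]$ as a uniformly stable linear part, regard the bounded term $A_i i_{DCG}+b$ as a perturbation, and extract an absorbing set via a quadratic Lyapunov function --- is the natural one and is correct in outline.

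Two points deserve comment. First, your parenthetical ``standard compactness/common-Lyapunov argument'' is not valid as stated: a continuous family of Hurwitz matrices over a compact parameter set need \emph{not} share a common quadratic Lyapunov function (there are two-matrix counterexamples already in $\Rbb^2$). You evidently sense this, since you immediately offer the passivity route and later call it ``the cleanest''; that is indeed the right move, and it is precisely what the paper invokes to obtain \eqref{eig1} in the first place. Passivity of the RC--memristor subcircuit delivers more than negative eigenvalues: it gives a stored energy quadratic in $v_M$ whose dissipation rate is bounded below uniformly in $x$ because $g(x)\in[R_{off}^{-1},R_{on}^{-1}]$ on $[0,1]^{n_M}$. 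So simply drop the compactness remark and run the argument through the passivity energy directly.

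Second, there is a small gap at the end: your argument produces an invariant \emph{ellipsoid} (a sublevel set of $W$), not an invariant cube, and neither of your suggested fixes closes this. Enlarging a cube to contain the absorbing ellipsoid does not make the cube forward-invariant (a trajectory starting near a corner may exit the cube before being pulled into the ellipsoid), and the ``sign structure of $M(x)$ on the boundary'' would require something like diagonal dominance of $M(x)$, which has not been established. This is cosmetic for the downstream use --- any sublevel set $\{W\le c\}$ with $c\ge c_\star$ is forward-invariant, so the compact phase space $X$ in Sec.~\ref{global_attractor_section} could equally well carry an ellipsoidal factor in $v_M$ --- but if a literal cube is required, a separate argument is needed. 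The paper, having given no proof, does not address this either.
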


\subsection{Existence of a global attractor \label{global_attractor_section}}

In the previous sections we have presented the equations that govern the dynamics of our SOLCs. We are now ready to perform the stability analysis of the SOLCs. 

We consider our system given by \eqref{linear}-\eqref{inductor_internal}, or in compact form by \eqref{syst}. The terminology we will use here follows the one in Ref.~\cite{hale_2010_asymptotic}, in particular in Chapter 3. For our dynamical system \eqref{syst} we can formally define the semigroup $T(t)$ such that $T(t)\xvec=\xvec+\int_0^tF(T(t^{\prime})\xvec)dt^{\prime}$, or defining $\xvec(t)=T(t)\xvec$ we have the more familiar
\begin{equation}
T(t)\xvec(0)=\xvec(0)+\int_0^tF(\xvec(t^{\prime}))dt^{\prime} \label{cr_semigroup}.
\end{equation}

Since we have proven in the previous sections that $F(\xvec)\in C^{r}(X)$ with $X$ the complete and compact metric space $X=[v_{\min},v_{\max}]^{n_M}\times\lbrack0,1]^{n_M}\times[-i_{\max},i_{\max}]^{n_{DCG}%
}\times\lbrack s_{\min},s_{\max}]^{n_{DCG}}$, then $T(t)$ is a $C^{r}$-semigroup.

We recall now that a $C^{r}$-semigroup is \textit{asymptotically smooth} if, for any nonempty, closed, bounded set $B\subset X$ for which $T(t)B\subset B$, there is a compact set $J\subset B$ such that $J$ attracts $B$ \cite{hale_2010_asymptotic}. Here, the term "attract" is formally defined as: a set $B\subset X$ is said to \textit{attract} a set $C\subset X$ under $T(t)$ if $\textrm{dist}(T(t)C,B)\rightarrow 0$ as $t\rightarrow\infty$ \cite{hale_2010_asymptotic}. 

Moreover, we have that a semigroup $T(t)$ is said to be {\it point dissipative} (compact dissipative) (locally compact dissipative) (bounded dissipative) if there is a bounded set $B\subset X$ that attracts each point of $X$ (each compact set of $X$) (a neighborhood of each compact set of $X$) (each bounded set of $X$) under $T(t)$ \cite{hale_2010_asymptotic}. If $T(t)$ is point, compact, locally, dissipative and bounded dissipative we simply say that $T(t)$ is \textit{dissipative}. Now we are ready to prove the following lemmas:
\begin{lem}\label{asym_smooth}
	The $C^{r}$-semigroup $T(t)$ defined in \eqref{cr_semigroup} is asymptotically smooth.
\end{lem}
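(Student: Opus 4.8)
The plan is to invoke the standard sufficient condition for asymptotic smoothness from Hale's book: a $C^r$-semigroup that can be written as a sum $T(t) = S(t) + U(t)$, where $S(t)$ is an eventual contraction (i.e., $S(t)$ is a linear-type semigroup with $\|S(t)\| \to 0$ as $t\to\infty$, uniformly on bounded sets) and $U(t)$ is \emph{completely continuous} (maps bounded sets into relatively compact sets for each $t$), is automatically asymptotically smooth. In our situation we already have something stronger at our disposal: the previous propositions establish that $X = [v_{\min},v_{\max}]^{n_M}\times[0,1]^{n_M}\times[-i_{\max},i_{\max}]^{n_{DCG}}\times[s_{\min},s_{\max}]^{n_{DCG}}$ is itself compact and positively invariant under the flow. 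First I would record this: for any nonempty closed bounded $B\subset X$ with $T(t)B\subset B$, the set $B$ is already a bounded subset of a finite-dimensional space, hence its closure $\overline{B}$ is compact, and $\overline{B}$ trivially attracts $B$ (indeed $T(t)B \subset B \subset \overline{B}$ for all $t$, so $\operatorname{dist}(T(t)B,\overline{B}) = 0$). Taking $J = \overline{B}$ (or, to be safe, $J = \bigcap_{t\ge 0}\overline{T(t)B}$, which is compact, nonempty and invariant by finite-dimensionality) gives a compact set $J\subset B$ attracting $B$, which is exactly the definition of asymptotic smoothness.

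The key steps, in order, are: (i) cite Propositions above to assert $F\in C^r(X)$ and that $X$ is compact and invariant, so that $T(t)$ is a well-defined $C^r$-semigroup on $X$; (ii) observe that, because we have restricted the phase space to the compact invariant set $X$, every closed bounded $B\subset X$ has compact closure in the ambient finite-dimensional Euclidean space; (iii) for such a $B$ with $T(t)B\subset B$, exhibit the compact attracting set — either $\overline{B}$ directly, or the $\omega$-limit-type set $J=\bigcap_{t\ge0}\overline{\bigcup_{\tau\ge t}T(\tau)B}$, which is nonempty and compact since it is a nested intersection of nonempty compact sets, is contained in $B$ since $B$ is closed and invariant, and attracts $B$ by the standard argument that in a compact metric space $\operatorname{dist}(T(t)B, J)\to0$; (iv) conclude asymptotic smoothness from the definition quoted in the excerpt.

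The only subtlety worth spelling out is why $\operatorname{dist}(T(t)B,J)\to 0$: since all iterates $T(\tau)B$ live in the compact set $\overline{B}$, if the distance did not tend to zero there would be sequences $t_k\to\infty$ and $y_k\in T(t_k)B$ staying a fixed distance $\varepsilon>0$ away from $J$; passing to a convergent subsequence $y_k\to y\in\overline{B}$ and using that $y$ lies in every $\overline{\bigcup_{\tau\ge t}T(\tau)B}$ (because for each fixed $t$, eventually $t_k\ge t$), we get $y\in J$, contradicting $\operatorname{dist}(y,J)\ge\varepsilon$. This is the one place a little care is needed; everything else is immediate from compactness.

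I expect the main (very mild) obstacle to be purely expository rather than mathematical: making clear that the compactness of $X$ — which was engineered through Propositions on the invariance of $[0,1]^{n_M}$, $[-i_{\max},i_{\max}]^{n_{DCG}}\times[s_{\min},s_{\max}]^{n_{DCG}}$, and $[v_{\min},v_{\max}]^{n_M}$ — trivializes the functional-analytic notion of asymptotic smoothness that would otherwise (in an infinite-dimensional setting) require a genuine decomposition into a contracting part plus a completely continuous part. In finite dimensions with a compact invariant phase space, asymptotic smoothness is essentially automatic, and the proof is little more than the observation that closed bounded subsets of $X$ are compact.
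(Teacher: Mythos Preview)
Your argument is correct and is genuinely different from the paper's proof. You exploit the fact that $X$ is a compact subset of a finite-dimensional Euclidean space (established by the preceding propositions), so every closed bounded $B\subset X$ is itself compact; taking $J=B$ (or its $\omega$-limit set) immediately gives the required compact attractor. This is the cleanest way to see that asymptotic smoothness is automatic for finite-dimensional flows restricted to a compact invariant box.

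The paper instead carries out the full Hale decomposition $T(t)=S(t)+U(t)$: after variable shifts, $U(t)$ is identified with a \emph{globally passive} memristive sub-circuit (with $i_{DCG}\equiv 0$, $s\equiv -s_{\max}$), hence completely continuous and dissipative, while $S(t)$ governs the VCDCG variables and is shown to contract to the origin with rate $e^{-\xi t}$ by checking that $\mathbf{x}=0$ is its unique hyperbolic global equilibrium. The conclusion then follows from Lemma~3.2.3 of \cite{hale_2010_asymptotic}. This route is considerably longer and, strictly speaking, unnecessary once $X$ is known to be compact; its value is that it separates the circuit into a physically passive part and an actively contracting part, a structural decomposition that parallels the later analysis of equilibria and would remain meaningful in an infinite-dimensional generalization. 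Your direct compactness argument is mathematically sufficient for the lemma as stated, and you might note explicitly that in finite dimensions the Hale machinery collapses: asymptotic smoothness, point dissipativity, and bounded dissipativity all follow at once from compactness and positive invariance of $X$.
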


\begin{proof}
	In order to prove this lemma we first need to decompose $T(t)$ as the sum $T(t)=S(t)+U(t)$. We take initially $S(t)$ and $U(t)$ as
	\begin{align}
	U&(t)\xvec(0)=\nonumber\\
	&\left(
	\begin{array}[c]{c}
	v_M(0)-k_{DCG}\mathcal{U}^{+}i_{DCG}+\mathcal{U}^{+}v_{0}+\int_{0}^{t}F_{1}(\xvec(t^{\prime}))dt^{\prime}\\
	x(0)+\int_{0}^{t}F_{2}(\xvec(t^{\prime}))dt^{\prime}\\	0\\	0
	\end{array}
	\right) \label{U}\\
	S&(t)\xvec(0)=\left(
	\begin{array}[c]{c}
	k_{DCG}\mathcal{U}^{+}i_{DCG}-\mathcal{U}^{+}v_{0}\\
	0\\ 
	i_{DCG}(0)+\int_{0}^{t}F_{3}(\xvec(t^{\prime}))dt^{\prime}\\ 
	s(0)+\int_{0}^{t}F_{4}(\xvec(t^{\prime}))dt^{\prime}
	\end{array}
	\right)\label{S}
	\end{align}
	where $k_{DCG}>0$, $v_0$ is the the constant vector whose components are the $v_{0_j}$ in \eqref{vDCG}, and $\mathcal{U}^+$ is the pseudoinverse of the matrix $\mathcal{U}$ whose rows are the vectors $u_j^T$ in \eqref{vDCG}.	It is worth noticing that for a well defined SOLC it is easy to show that $\mathcal{UU}^+=I$ (the inverse, i.e. $\mathcal{U}^+\mathcal{U}=I$, does not generally hold).
	
	We also perform two variable shifts:
	\begin{align}
	v_M  &  \rightarrow v_M+\mathcal{U}^{+}v_{0}\label{vchange}\\
	s  &  \rightarrow s-s_{\max} \label{schange}
	\end{align}
	Since they are just shifts they do not formally change anything in  Eqs.~\eqref{U} and \eqref{S}, except for additive terms in $\mathcal{U}^+v_0$ and	$s_{\max}$. Also, the metric space changes accordingly to  $X\rightarrow[v_{\min}+\mathcal{U}^+v_0,v_{\max}+\mathcal{U}^+v_0]^{n_M}\times[0,1]^{n_M}\times[-i_{\max},i_{\max}]^{n_{DCG}}\times[s_{\min}-s_{\max},0]^{n_{DCG}}$. To avoid increasing the burden of notation, in the following we will refer to all variables and operators with the same previous symbols, while keeping in mind the changes \eqref{vchange} and \eqref{schange}.
	
	Now, by definition, $U(t):X\rightarrow[v_{\min}+\mathcal{U}^+v_0,v_{\max}+\mathcal{U}^+v_0]^{n_M}\times[0,1]^{n_M}\times[0,0]^{n_{DCG}}\times[-s_{\max},-s_{\max}]^{n_{DCG}}$ and it is easy to show that it is equivalent to the system
	\begin{align}
	    &C\frac{d}{dt}v_M=(A_v+B_v\mathcal{D}[g(x)])(v_M-\mathcal{U}^+v_0)+b\\
		&\frac{d}{dt}x=-\alpha\mathcal{D}[h(x,v_M)]\mathcal{D}[g(x)](v_M-\mathcal{U}^+v_0)\\
		&i_{DCG}=0\\
		&s=-s_{\max}.
	\end{align}

By construction, from Eq.~\eqref{eig1} and the definition of $h(x,v_M)$ in \eqref{h2}, $U(t)$ represents a {\it globally passive circuit}. It is then asymptotically smooth, completely continuous\footnote{A semigroup $T(t)$, $t>0$, is said to be conditionally completely continuous for $t\geq t_1$ if, for each $t\geq t_1$ and each bounded set $B$ in $X$ for which $\{T(s)B,0\leq s\leq t\}$ is bounded, we have $T(t)B$ precompact. A semigroup $T(t)$, $t\geq0$, is completely continuous if it is conditionally completely 	continuous and, for each $t\geq 0$, the set $\{T(s)B,0\leq s\leq t\}$ is bounded if $B$ is bounded \cite{hale_2010_asymptotic}.}, and since it is defined in a compact metric space $X$, it is dissipative.
	
	Now, following the lemma 3.2.3 of \cite{hale_2010_asymptotic} we only need to prove that there is a continuous function $k:\Rbb^+\rightarrow\Rbb^+$ such that $k(t,r)\rightarrow0$ as $t\rightarrow\infty$ and $|S(t)\xvec|<k(t,r)$ if $|x|<r$. In order to prove this statement, we first see that $S(t)$ is equivalent to the system
	\begin{align}
		&v_M=k_{DCG}\mathcal{U}^+i_{DCG}\\
		&x=0\\
		&\frac{d}{dt}i_{DCG}=\mathcal{D}[\rho(s+s_{\max})]f_{DCG}(v_M)+\nonumber\\
		&\hspace{3cm} -\gamma\mathcal{D}[\rho(1-s-s_{\max})]i_{DCG}\\
		&\dfrac{d}{dt}s=f_{s}(i_{DCG},s).
	\end{align}
	Since $v_M=k_{DCG}\mathcal{U}^+i_{DCG}$ we have that $f_{DCG}(v_M)=f_{DCG}(k_{DCG}\mathcal{U}^+i_{DCG})$. From the definition and discussion on $f_{DCG}$ given in section \ref{eq3_sec}, the variable change \eqref{vDCG} and the definition of $\mathcal{U}^+$, we have that $f_{DCG_j}(v_M)=f_{DCG_j}(v_{DCG_j})=f_{DCG}(k_{DCG}i_{DCG_j})$. Now, since we consider $k_{DCG}$ such that $k_{DCG}i_{\max}<v_c/2$, from the discussion in section \ref{eq4_sec} and considering the variable change \eqref{schange}, the unique stable equilibrium point for $S(t)$ is $\xvec=0$, and it is also a  global attractor in $X$. Moreover, this equilibrium is hyperbolic (see section \ref{equilibria_subsection}), then there is a constant $\xi>0$ such that $|S(t)\xvec|<e^{-\xi t}$. This concludes the proof.
\end{proof}

\begin{lem}
	\label{dissipative_lemma}The $C^r$-semigroup $T(t)$ defined in \eqref{cr_semigroup} is dissipative.
\end{lem}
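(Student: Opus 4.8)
The plan is to produce a single compact, positively invariant, \emph{absorbing} set for $T(t)$; once such a set is in hand, all four notions of dissipativity (point, compact, locally compact and bounded dissipative) hold simultaneously and the lemma follows at once. The obvious candidate is the box
\[
X=[v_{\min},v_{\max}]^{n_M}\times[0,1]^{n_M}\times[-i_{\max},i_{\max}]^{n_{DCG}}\times[s_{\min},s_{\max}]^{n_{DCG}},
\]
on which $F\in C^r$ and which the propositions of Sections~\ref{eq2_sec} and \ref{eq4_sec} already show to be positively invariant under the flow of \eqref{syst}. What remains is to prove that $X$ is absorbing, i.e.\ that for every bounded set $B$ in the ambient state space there is $t_B\ge 0$ with $T(t)B\subset X$ for all $t\ge t_B$; the boundedness this entails also rules out finite-time escape, so $T(t)$ is indeed a global semigroup.

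I would verify the absorbing property in three stages, exploiting the (essentially triangular) structure of the coupling in \eqref{linear}--\eqref{inductor_internal}. First, the memristor internal variables: by the proposition of Section~\ref{eq2_sec} the cube $[0,1]^{n_M}$ is invariant, and the explicit linearizations of $-\alpha h(x,v_M)g(x)v_M$ near $x=0$ and $x=1$ show that $x(t)$ is pushed into $[0,1]^{n_M}$ irrespective of $v_M$, so $x(t)\in[0,1]^{n_M}$ after a finite time. Second, the generator variables: because $f_{DCG}$ is a \emph{bounded} function, Equation~\eqref{inductor} together with the bistable analysis of \eqref{inductor_internal} (Proposition~\ref{proposition_F4} and the estimate $\sup(|i_{DCG}|)\simeq|i_{\max}|$) confines $i_{DCG}(t)$ and $s(t)$ to $[-i_{\max},i_{\max}]^{n_{DCG}}\times[s_{\min},s_{\max}]^{n_{DCG}}$ after a finite time, again independently of $v_M$. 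Third, with $x$ and $i_{DCG}$ now in these ranges, \eqref{linear} becomes a linear ODE for $v_M$ whose matrix $A_v+B_v\mathcal{D}[g(x)]$ has eigenvalues with real part bounded above by a negative constant uniformly for $x\in[0,1]^{n_M}$ (Equation~\eqref{eig1}) and whose forcing $A_ii_{DCG}+b$ is bounded; hence $v_M(t)$ enters $[v_{\min},v_{\max}]^{n_M}$ after a finite time. Taking $t_B$ larger than all three finite times gives $T(t)B\subset X$ for $t\ge t_B$.

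Since $X$ is compact, hence bounded, and absorbs every bounded set, it in particular attracts every point of $X$, every compact subset of $X$, every bounded neighbourhood of such a compact set, and every bounded set; therefore $T(t)$ is point, compact, locally compact and bounded dissipative, i.e.\ dissipative in the sense of \cite{hale_2010_asymptotic}. The only genuinely delicate point is the second stage: one must check that the current-driven generator dynamics really do drag $i_{DCG}$ into $[-i_{\max},i_{\max}]^{n_{DCG}}$ and the internal variables $s$ into $[s_{\min},s_{\max}]^{n_{DCG}}$ uniformly over bounded initial data — this is exactly where the switching of $\rho(s)$ and $\rho(1-s)$ between $0$ and $1$ and the sign structure of $f_{DCG}$ from Sections~\ref{eq3_sec}--\ref{eq4_sec} carry the argument; the first and third stages are comparatively routine consequences of invariance and of linear stability with bounded forcing.
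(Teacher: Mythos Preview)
Your approach differs substantially from the paper's. The paper's proof is three lines: it invokes the asymptotic smoothness already established in Lemma~\ref{asym_smooth}, applies Corollary~3.4.3 of \cite{hale_2010_asymptotic} to obtain a compact set attracting compact sets (and neighborhoods thereof), and then uses the boundedness of $X$ to conclude. You instead attempt a direct, self-contained construction of a compact absorbing set, bypassing both Lemma~\ref{asym_smooth} and the abstract Hale machinery. What this buys you is independence from the (rather involved) proof of asymptotic smoothness; what the paper's route buys is brevity once that lemma is in hand.

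There is, however, a framing problem in your argument. In the paper's setup (see the paragraph immediately preceding Lemma~\ref{asym_smooth}), $X$ is already declared to be the \emph{entire} compact metric space on which $T(t)$ acts, not a subset of an ambient $\mathbb{R}^{2n_M+2n_{DCG}}$. Your three-stage program of showing that trajectories are ``pushed into $X$'' from outside is therefore not well-posed as stated: outside $[0,1]^{n_M}$ the conductance $g_j(x)=(R_1x_j+R_{on})^{-1}$ develops a pole, and the propositions you cite from Sections~\ref{eq2_sec} and~\ref{eq4_sec} establish only \emph{invariance} of the relevant boxes under the flow, not absorption from their exterior. If instead you work on $X$ itself, as the paper does, your argument collapses to a near-triviality: $X$ is compact and positively invariant, so $B=X$ is a bounded set with $\textrm{dist}(T(t)C,B)=0$ for every $C\subset X$ and all $t\ge 0$, which gives point, compact, locally compact and bounded dissipativity simultaneously without any of the three-stage analysis. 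In that light, both your detailed absorption argument and the paper's appeal to asymptotic smoothness are more than what is strictly needed once the phase space has been fixed as the compact box $X$.
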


\begin{proof}
	From lemma \ref{asym_smooth} $T(t)$ is asymptotically smooth, then from the	corollary 3.4.3 of \cite{hale_2010_asymptotic} there is a compact set which attracts compact sets of $X$. There is also a compact set which attracts a	neighborhood of each compact set of $X$. Therefore, since our $X$ is bounded, the lemma follows.
\end{proof}

At this point we recall some other definitions and results from topology of dynamical systems that can be found, e.g., in Ref.~\cite{hale_2010_asymptotic} and will be useful for the next discussions.

For any set $B\subset X$, we define the $\omega$\textit{-limit set} $\omega(B)$ of $B$ as $\omega(B)={\textstyle\bigcap_{s\geq0}}\mathrm{Cl}{\textstyle\bigcup_{t\geq s}}T(t)B$. A set $J\subset X$ is said to be \textit{invariant} if, $T(t)J=J$ for $t\geq0$. A compact invariant set $J$ is said to be a \textit{maximal compact invariant} set if every compact invariant set of the semigroup $T(t)$ belongs to $J$. An invariant set $J$ is \textit{stable} if for any neighborhood $V$ of $J$, there is a neighborhood $V^{\prime}\subseteq V$ of $J$ such that $T(t)V^{\prime}\subset V^{\prime}$ for $t\geq0$. An invariant set $J$ \textit{attracts points locally} if there is a neighborhood $W$ of $J$ such
that $J$ attracts points of $W$. The set $J$ is \textit{asymptotically stable} (a.s.) if $J$ is stable and attracts points locally. The set $J$ is \textit{uniformly asymptotically stable} (u.a.s.) if $J$ is stable and attracts a neighborhood of $J$. An invariant set $J$ is said to be a \textit{global attractor} if $J$ is a maximal compact invariant set which
attracts each bounded set $B\subset X$. In particular, $\omega(B)$ is compact and belongs to $J$ and if $J$ is u.a.s. then $J=\bigcup_{B}\omega(B) $.

Now, we are ready to prove the following theorem:

\begin{thm}
	\label{global_attractor_theorem}The $C^r$-semigroup $T(t)$ defined in \eqref{cr_semigroup} possesses an u.a.s. global attractor $A$.
\end{thm}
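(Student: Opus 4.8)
The plan is to apply the standard existence theorem for global attractors from dissipative dynamical systems theory (Hale, \cite{hale_2010_asymptotic}, Chapter 3): an asymptotically smooth, point dissipative $C^r$-semigroup on a complete metric space, for which the orbits of bounded sets are bounded, possesses a connected global attractor, and moreover if the semigroup is additionally compact dissipative (or bounded dissipative on a bounded space) the attractor is uniformly asymptotically stable. All the hypotheses needed have essentially been assembled in the preceding lemmas, so the proof is a matter of invoking the right theorem and checking the remaining bookkeeping.

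Concretely, I would proceed as follows. First, recall from the propositions of Sections~\ref{eq1_sec}--\ref{eq4_sec} that the phase space $X=[v_{\min},v_{\max}]^{n_M}\times[0,1]^{n_M}\times[-i_{\max},i_{\max}]^{n_{DCG}}\times[s_{\min},s_{\max}]^{n_{DCG}}$ is compact, complete, and positively invariant under $T(t)$; in particular every orbit is bounded and $T(t)X\subset X$, so the hypothesis on boundedness of orbits of bounded sets is automatic and the only bounded sets that matter are subsets of the compact $X$. Second, invoke Lemma~\ref{asym_smooth} to get that $T(t)$ is asymptotically smooth and Lemma~\ref{dissipative_lemma} to get that $T(t)$ is dissipative (in particular point dissipative and compact dissipative). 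Third, apply Theorem~3.4.6 of \cite{hale_2010_asymptotic} (existence of a global attractor for asymptotically smooth, point dissipative semigroups with bounded orbits of bounded sets): this yields a maximal compact invariant set $A\subset X$ that attracts every bounded subset of $X$, i.e.\ a global attractor. Fourth, to upgrade stability to \emph{uniform} asymptotic stability, use the compact/bounded dissipativity already established together with the characterization in \cite{hale_2010_asymptotic} (Corollary~3.4.3 and the surrounding results): since $X$ itself is bounded and $A$ attracts $X$ uniformly, $A$ is stable and attracts a neighborhood of itself, hence u.a.s.; equivalently $A=\bigcup_B \omega(B)$ over bounded $B\subset X$.

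The genuinely substantive content is all upstream — it sits in Lemma~\ref{asym_smooth}, where the decomposition $T(t)=S(t)+U(t)$ is used together with the exponential decay estimate $|S(t)\xvec|<e^{-\xi t}$ and the complete continuity of the passive subsystem $U(t)$ to verify Hale's criterion for asymptotic smoothness (Lemma~3.2.3 of \cite{hale_2010_asymptotic}). Given that lemma, the present theorem is essentially a citation. The one point that requires a little care, and which I would expect to be the main (minor) obstacle, is making sure the various notions of dissipativity line up with the precise hypotheses of whichever existence theorem one quotes: Hale's cleanest statement for a u.a.s.\ global attractor requires the semigroup to be asymptotically smooth, point dissipative, and such that orbits of bounded sets are bounded, and then separately uses compact dissipativity to pass from mere existence to uniform asymptotic stability. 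Since compactness of $X$ makes "point dissipative," "compact dissipative," and "bounded dissipative" all coincide here (any bounded set is precompact, and $X$ is attracted to a compact set by Lemma~\ref{dissipative_lemma}), this reconciliation is routine, but it is the step where one must cite the correct result rather than hand-wave. Finally, one should note explicitly that $A\subseteq X$ so the attractor inherits all the structural restrictions on the variables ($x\in[0,1]^{n_M}$, etc.), which is what subsequent sections will exploit.
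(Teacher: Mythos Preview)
Your proposal is correct and follows essentially the same route as the paper: invoke Lemmas~\ref{asym_smooth} and~\ref{dissipative_lemma}, note that boundedness of $X$ forces orbits of bounded sets to be bounded, and then cite Hale's existence theorems (the paper uses Theorems~3.4.2 and~3.4.6 of \cite{hale_2010_asymptotic}, you use Theorem~3.4.6 together with Corollary~3.4.3, which is the same package). Your write-up is in fact more careful than the paper's one-sentence proof about reconciling the various dissipativity notions, but the argument is the same.
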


\begin{proof}
	From the lemmas \ref{asym_smooth} and \ref{dissipative_lemma} we have that	$T(t)$ is asymptotically smooth and dissipative. Moreover, since $X$ is bounded, orbits of bounded sets are bounded and then the theorem follows directly from the theorems 3.4.2 and 3.4.6 of \cite{hale_2010_asymptotic}.
\end{proof}

\subsection{Equilibrium points}\label{equilibria_subsection}

With the previous lemmas and theorem we have proved that $T(t)$ has an u.a.s. global attractor. Roughly speaking this means that, no matter the choice of initial conditions $\xvec(0)\in X$, $T(t)\xvec(0)$ will converge asymptotically to a compact bounded invariant set $A$. Since in our case $X$ is a compact subset of $\Rbb^{2n_{M}+2n_{DCG}}$, $A$ can contain only equilibrium points, periodic orbits and strange attractors, and all of them are asymptotically stable~\cite{hale_2010_asymptotic}. We 
first show that the dynamics converge exponentially fast to the equilibria. We will then argue about the absence of periodic orbits and strange attractors. 

First of all, it can be trivially seen from sections \ref{eq3_sec} and \ref{eq4_sec} that equilibria must satisfy $u_j^Tv_{M}+v_{0_j}=-v_{c},0,v_{c}$ for any $j$. Moreover, as discussed in section \ref{MEM_SOLC_subsection} $u_j^Tv_{M}+v_{0_j}=0$ leads to unstable equilibria, while $u_j^Tv_{M}+v_{0_j}=\pm v_{c}$ leads to asymptotically stable ones. However, these equilibria can be reached if the necessary condition 
\begin{equation}
\frac{1}{2}+\frac{\sqrt{3}}{6}<s<s_{\max}\text{ \ and \ }|i_{DCG_j}|<|i_{DCG_{\max}}| \label{condition_s_iDCG_stable}
\end{equation}
holds (see section \ref{eq3_sec} and \ref{eq4_sec}). It is also easy to see from Secs.~\ref{eq3_sec} and \ref{eq4_sec} that, by construction, this is a necessary condition to have equilibrium points for $T(t)$. However, this does not guarantee that at equilibrium $i_{DCG}=0$. 

For our purposes, we need to have $i_{DCG}=0$. In fact, at the equilibria the SOLC has voltages at gate nodes that can be only either $-v_{c}$ or $v_{c}$. In such configuration of voltages, as discussed in section \ref{MEM_SOLG_subsection}, the gates can stay in correct or non-correct configuration. In the former case no current flows from gate terminals due to the DCMs and so the correspondent component of $i_{DCG}$ must be $0$ at the equilibrium. On the other hand, if the gate configuration is not correct, at equilibrium we have currents of the order of $v_c/R_{on}$ that flow from the gates terminals (Sec.~\ref{MEM_SOLG_subsection}). These currents can be compensated only by the components of $i_{DCG}$. Therefore, if we indicate with $K_{wrong}v_c/R_{on}$ the minimum absolute value of the current flowing from the terminals of the gates when in the wrong configuration, 
$K_{wrong}=O(1)$, and consider VCDCG with
\begin{equation}
i_{DCG_{\max}}<K_{wrong}v_c/R_{on} \label{condition_iDCG},
\end{equation}
we have that the equilibria with nonzero components of $i_{DCG}$ disappear and only equilibria for which $i_{DCG}=0$ survive. With this discussion we have then proven the following theorem

\begin{thm}
	\label{equilibria_theorem}If the condition \eqref{condition_iDCG} holds, the u.a.s. stable equilibria for $T(t)$, if they exist, satisfy
	\begin{align}
	& i_{DCG_j}   =0\label{equilibrium_cond_iDCG}\\
	& s_j   =s_{\max}\label{equilibrium_cond_s}\\
	& |u_j^Tv_{M}+v_{0_j}|   =v_{c} \label{equilibrium_cond_vc}%
	\end{align}
	for any $j=1,\cdots,n_{DCG}$. Moreover, this implies that the gate relations are all satisfied at the same time.
\end{thm}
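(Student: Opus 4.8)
The plan is to read the equilibrium conditions off $F(\xvec)=0$ one block of equations at a time, using the u.a.s.\ requirement to discard the spurious roots, and to close the argument with the leakage-current bound \eqref{condition_iDCG}. So I would set $\tfrac{d}{dt}\xvec=0$ in \eqref{linear}--\eqref{inductor_internal}. The structure worked out in Secs.~\ref{eq3_sec} and \ref{eq4_sec} (with $k_s\gg\max|f_{DCG}|$ so the $s$-dynamics is fast, and $k_i>\sqrt{3/18}\,k_s$, $\delta s\ll1$, $\delta i\ll i_{\min}$) shows that any equilibrium must have each scalar $s_j$ sitting on a stable branch of $f_s(i_{DCG},\cdot)$, and that the only branch compatible with $\tfrac{d}{dt}i_{DCG}=0$ holding simultaneously is the one in the window \eqref{condition_s_iDCG_stable}: in the over-threshold regime (some $|i_{DCG_k}|\ge i_{\max}$) one has $\rho(s)=0$ and $\rho(1-s)=1$, so \eqref{inductor} reads $\tfrac{d}{dt}i_{DCG}=-\gamma i_{DCG}$, which vanishes only if $i_{DCG}=0$, impossible since $|i_{DCG_k}|\ge i_{\max}>0$; and the intermediate regime is eliminated by the current-balance bound used in the last step below. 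Hence at any u.a.s.\ equilibrium $s_j\in\bigl(\tfrac{1}{2}+\tfrac{\sqrt{3}}{6},\,s_{\max}\bigr)$, and by the definition \eqref{ro} of $\rho$ together with the plateaus of $\tilde{\theta}^r$ this gives $\rho(s_j)=1$ and $\rho(1-s_j)=0$ for every $j$.

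With $\rho(s_j)=1$ and $\rho(1-s_j)=0$, at equilibrium \eqref{inductor} reduces to $f_{DCG}(u_j^Tv_M+v_{0_j})=0$, so each gate-node voltage lies in $\{-v_c,0,v_c\}$ (the zeros of $f_{DCG}$). I would then discard the value $0$: $f_{DCG}$ has strictly positive slope there, so by the linearization of Sec.~\ref{MEM_SOLC_subsection} a node pinned at $0$ carries an unstable ``negative-inductor'' mode, and since the relevant equilibrium is hyperbolic (Sec.~\ref{equilibria_subsection}) this unstable direction persists in the full $(v_M,x,i_{DCG},s)$ system. Therefore a u.a.s.\ equilibrium must have $|u_j^Tv_M+v_{0_j}|=v_c$ for every $j$, which is \eqref{equilibrium_cond_vc}.

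Now all gate terminals are pinned to $\pm v_c$, so each gate is either in a configuration that satisfies its Boolean relation (``correct'') or not (``wrong''). If the gate is correct, the design of Sec.~\ref{MEM_SOLG_subsection} lets no current leave its terminals, and Kirchhoff's current law at the node forces the attached VCDCG current to vanish, $i_{DCG_j}=0$. If the gate is wrong, a terminal current of magnitude at least $K_{wrong}v_c/R_{on}$ with $K_{wrong}=O(1)$ must flow and can only be absorbed by $i_{DCG_j}$, so $|i_{DCG_j}|\ge K_{wrong}v_c/R_{on}$; but the invariant bound of Prop.~\ref{proposition_F4} together with the design rule \eqref{condition_iDCG} give $|i_{DCG_j}|\le i_{DCG_{\max}}<K_{wrong}v_c/R_{on}$, a contradiction. (The same bound rules out the intermediate $s$-regime invoked above, since there any $k$ with $|i_{DCG_k}|\ge i_{\min}$ would be forced to have $i_{DCG_k}=0$ or $|i_{DCG_k}|\ge K_{wrong}v_c/R_{on}>i_{DCG_{\max}}$, both impossible.) Hence no gate can be wrong at a u.a.s.\ equilibrium: all gate relations hold at once, every $i_{DCG_j}=0$ --- this is \eqref{equilibrium_cond_iDCG} --- and substituting $i_{DCG}=0$ into \eqref{inductor_internal} pins $s_j$ to the unique zero $s_{\max}$ of $F_4(s,i_{DCG}=0)$ (Prop.~\ref{proposition_F4}), giving \eqref{equilibrium_cond_s}. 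This establishes all three conditions and the closing claim.

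The step I expect to be the main obstacle is the leakage-current estimate: one must show, uniformly over the gate library used to encode $f$ (SO-AND, SO-OR, SO-XOR, and whatever else enters) and over every assignment to the terminals that violates a gate's relation, that the smallest current leaving a terminal is $\Theta(v_c/R_{on})$, with $K_{wrong}$ bounded away from $0$ below and by a constant independent of the instance size above; only then is \eqref{condition_iDCG} a realizable, size-independent design rule, and only then does the contradiction above actually bite. A secondary delicate point is to make rigorous that the roots with $u_j^Tv_M+v_{0_j}=0$, and more generally all configurations with $s_j$ outside \eqref{condition_s_iDCG_stable}, genuinely fail to be u.a.s.\ in the full coupled system rather than merely in the reduced linearizations --- which is where one leans on the hyperbolicity asserted in Sec.~\ref{equilibria_subsection}, on the global attractor of Theorem~\ref{global_attractor_theorem}, and on the timescale separation and parameter hierarchy carried through Secs.~\ref{eq3_sec}--\ref{eq4_sec}.
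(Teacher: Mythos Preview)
Your proposal is correct and follows essentially the same route as the paper: the paper's proof is the discussion immediately preceding the theorem, which argues (i) from Secs.~\ref{eq3_sec}--\ref{eq4_sec} that equilibria force $u_j^Tv_M+v_{0_j}\in\{-v_c,0,v_c\}$ with the necessary window \eqref{condition_s_iDCG_stable} on $s$ and $i_{DCG}$, (ii) discards the $0$ root by the instability argument of Sec.~\ref{MEM_SOLC_subsection}, and (iii) uses the correct/wrong gate dichotomy together with the bound \eqref{condition_iDCG} to force $i_{DCG}=0$ and hence $s_j=s_{\max}$. Your write-up is in fact more careful than the paper's --- you explicitly close off the over-threshold and intermediate $s$-regimes and you correctly flag as the delicate points exactly the two places (the uniform $K_{wrong}$ estimate and the persistence of instability at $v_{DCG_j}=0$ in the full coupled system) that the paper treats informally.
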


This theorem is extremely important because it is the same as: 
\begin{thm}
$T(t)$ has equilibria iff the CB problem implemented in the SOLC has solutions for the given input.
\end{thm}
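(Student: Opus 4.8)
The plan is to deduce the biconditional directly from the explicit description of the equilibria furnished by Theorem~\ref{equilibria_theorem}, adding in the ``if'' direction the lifting of a combinatorial solution to a genuine equilibrium of the vector field~\eqref{syst}. I assume throughout the design condition~\eqref{condition_iDCG}, under which Theorem~\ref{equilibria_theorem} applies; with that condition in force the unstable node-at-$0$ equilibria of Sec.~\ref{MEM_SOLC_subsection} are excluded (they would require a terminal current too large to be balanced by $i_{DCG}$), so ``$T(t)$ has equilibria'' is unambiguous and means ``$T(t)$ has u.a.s.\ equilibria.''

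\emph{($\Rightarrow$)} Let $\xvec^{\star}$ be a u.a.s.\ equilibrium. Theorem~\ref{equilibria_theorem} gives $i_{DCG_j}=0$, $s_j=s_{\max}$, $u_j^Tv_M+v_{0_j}\in\{-v_c,v_c\}$ for all $j$, and, crucially, all gate relations satisfied simultaneously. Reading the sign of each gate-node voltage as a logical bit produces an assignment $\yvec\in\Zbb^{n}$ on the variable nodes (with $\bvec\in\Zbb^{m}$ the bits forced at the input nodes by the DC generators). Since the SOLC was obtained by mapping the boolean system $f$ gate-by-gate into a network of SOLGs (Sec.~\ref{SOLC_subsection}), ``all gate relations satisfied'' is precisely ``the bit pattern is a consistent evaluation of the boolean circuit for $f$'', i.e.\ $f(\yvec)=\bvec$; so the CB problem has a solution.

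\emph{($\Leftarrow$)} Conversely, given $\yvec$ with $f(\yvec)=\bvec$, I would exhibit the equilibrium $\xvec^{\star}=(v_M^{\star},x^{\star},0,s_{\max})$, where $(v_M^{\star},x^{\star})$ is the configuration in which every gate sits in the attractive equilibrium associated to the bits of $\yvec$ and $\bvec$ prescribed by Sec.~\ref{MEM_SOLG_subsection}. Such a configuration exists precisely because $\yvec$ makes every gate relation true, and in it no current leaves any terminal, so every gate node carries voltage $\pm v_c$. One then checks $F(\xvec^{\star})=0$ componentwise: $F_2(\xvec^{\star})=0$ because $(v_M^{\star},x^{\star})$ is an equilibrium of the passive sub-network (memristors plus linear generators, $i_{DCG}=0$); $F_1(\xvec^{\star})=0$ for the same reason, the coupling term $A_i i_{DCG}$ in~\eqref{linear} vanishing at $i_{DCG}=0$; $F_3(\xvec^{\star})=0$ because $f_{DCG}$ vanishes at $\pm v_c$ and $i_{DCG}=0$; $F_4(\xvec^{\star})=0$ because, with $i_{DCG}=0$, $s_{\max}$ is by definition the unique zero of $F_4(s,i_{DCG}=0)$ (Proposition~\ref{proposition_F4}). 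Finally $\xvec^{\star}$ is u.a.s.: near it the $(v_M,x)$ block is strictly passive by~\eqref{eig1}, the $i_{DCG}$ block linearizes to an inductor in series with a DC source at $\pm v_c$, and the $s$ block linearizes to a bistable system with $s_{\max}$ stable -- this is the hyperbolicity invoked in Sec.~\ref{equilibria_subsection}.

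The main obstacle is the ``if'' direction. One must verify, not merely assert, that the combinatorial solution $\yvec$ really lifts to a zero of the reduced vector field $F$ in~\eqref{syst}: this means pinning down the memristor internal variables $x^{\star}$ consistently (their equilibrium value is non-unique, and one needs to produce at least one value compatible with $\dot x=0$ once the memristor currents vanish) and upgrading ``$F(\xvec^{\star})=0$'' to ``$\xvec^{\star}$ is u.a.s.'' via the passivity/hyperbolicity structure rather than the algebraic condition alone. The ``only if'' direction is essentially a paraphrase of Theorem~\ref{equilibria_theorem} plus the dictionary between $\pm v_c$ voltage patterns and boolean assignments.
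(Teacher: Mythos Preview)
Your proposal is correct and, in fact, more complete than what the paper offers. In the paper this theorem is not given an independent proof at all: it is stated immediately after Theorem~\ref{equilibria_theorem} with the single sentence ``This theorem is extremely important because it is the same as'', and the biconditional is simply asserted.

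Your ($\Rightarrow$) direction is exactly the paper's implicit reasoning: the final clause of Theorem~\ref{equilibria_theorem} (``this implies that the gate relations are all satisfied at the same time'') together with the SOLC construction of Sec.~\ref{SOLC_subsection} gives a boolean assignment solving $f(\yvec)=\bvec$. Your ($\Leftarrow$) direction, on the other hand, supplies what the paper leaves unproved. The paper never explicitly constructs an equilibrium from a combinatorial solution; it relies on the gate-by-gate design guarantee of Sec.~\ref{MEM_SOLG_subsection} (``if the gate configuration is correct, no current flows from any terminal'') and treats the existence of the corresponding circuit equilibrium as self-evident. Your explicit verification that $F_1=F_2=F_3=F_4=0$ at $\xvec^{\star}=(v_M^{\star},x^{\star},0,s_{\max})$ is sound: the key ingredients (vanishing of $f_{DCG}$ at $\pm v_c$, the characterisation of $s_{\max}$ in Proposition~\ref{proposition_F4}, and the passive-circuit equilibrium for $(v_M,x)$ when all gate relations hold) are all available. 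The point you flag about the non-uniqueness of $x^{\star}$ is indeed present in the paper, but only \emph{after} this theorem, in the discussion of center manifolds; your treatment here is consistent with that later analysis. In short, you have written out the proof the paper gestures at but does not give.
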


We can analyze the equilibria even further. In fact we can prove their exponential convergence. With this aim in mind, we first analyze what happens to Eq.~\eqref{memristor} when we are at an equilibrium. In this case, for each memristor we can have two possible cases: $v_{M_j}=-R_{off}|i_{M_j}|$, with $|i_{M_j}|$ the absolute value of the current flowing through the memristor and it is an integer $>1$ times $v_c/R_{off}$ (this can be proven substituting values of $v_o$,  $v_1$ and  $v_2$ in equation \eqref{VDVG} that satisfies the SO-gates and using coefficients of table \ref{table1}) and
then $x_j=1$. In the second case we have $v_{M_j}=0$ and $x_j$ can take any value in the range $[0,1]$. The latter case implies that the equilibrium is not unique for a given $v_{M}$ but we have a continuum of equilibria, all of them with the same $v_{M}$, $s$ and $i_{DCG}$ but different $x$. The indetermination of some components of $x$ (those related to the components $v_{M}$ equal to 0) creates center manifolds around the equilibria.
However, these center manifolds are irrelevant to the equilibria stability since they are directly related to indetermination of the components of $x$ and these components can take any value in their whole range $[0,1]$. Therefore, we have to consider only the stable manifolds of the equilibria. 

In conclusion, since in the stable manifolds $C^{r}$ semigroups with $r\geq1$ have {\it exponential convergence} \cite{perko_01}, and in our case the center manifolds do not affect the convergence rate, this proves the following theorem

\begin{thm}
	The equilibrium points of $T(t)$ have {\it exponentially} fast convergence in all their attraction basin.
\end{thm}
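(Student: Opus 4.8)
The plan is to linearize the semigroup \eqref{syst} about an arbitrary stable equilibrium $\bar\xvec=\{\bar v_M,\bar x,0,s_{\max}\}$ furnished by Theorem~\ref{equilibria_theorem} and then invoke the stable/center manifold reduction for $C^r$ semigroups with $r\geq1$~\cite{perko_01,hale_2010_asymptotic}. First I would examine the block structure of $DF(\bar\xvec)$ coming from \eqref{linear}--\eqref{inductor_internal}. At such an equilibrium one has $\rho(s+s_{\max})=1$, $\rho(1-s-s_{\max})=0$, so the $i_{DCG}$ block linearizes to $\mathcal{D}[\partial_{v_M}f_{DCG}]$ and the $s$ block to the bistable term whose slope at $s_{\max}$ is strictly negative (Sec.~\ref{eq4_sec}); the coupled $(v_M,i_{DCG},s)$ block has all eigenvalues with negative real part by the passivity estimate \eqref{eig1} together with the smallness conditions $k_{DCG}i_{\max}<v_c/2$ and $i_{DCG_{\max}}<K_{\mathrm{wrong}}v_c/R_{on}$. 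For the memristor rows I would use \eqref{h1}--\eqref{theta2} and condition~\ref{condition_4}: since $\tilde\theta$ and its first $r$ derivatives vanish at the endpoints, for every index $j$ with $v_{M_j}=0$ the \emph{entire} $x_j$-row of $DF(\bar\xvec)$ vanishes, whereas for every $j$ with $\bar x_j=1$, $v_{M_j}<0$ the diagonal entry is $-\alpha k\,\tilde\theta(-v_{M_j})g(1)|v_{M_j}|<0$, exactly as in the linearizations displayed in Sec.~\ref{eq2_sec}. Hence the spectrum of $DF(\bar\xvec)$ splits as $E^s\oplus E^c$, with $E^s$ contained in $\{\operatorname{Re}\lambda\leq-\xi\}$ for a spectral gap $\xi>0$, with $E^c$ the zero eigenspace spanned exactly by the indeterminate directions $\{x_j:\ v_{M_j}=0\}$, and with no eigenvalue of positive real part (consistent with $\bar\xvec$ being u.a.s.).

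Second, I would apply the center manifold theorem: near $\bar\xvec$ there is a local center manifold $W^c_{\mathrm{loc}}$ tangent to $E^c$ to which every nearby orbit converges at rate $O(e^{-\xi t})$, the long-time behaviour being governed by the reduced flow on $W^c_{\mathrm{loc}}$. The key step is to identify $W^c_{\mathrm{loc}}$ with the continuum of equilibria already exhibited after Theorem~\ref{equilibria_theorem}: any point obtained from $\bar\xvec$ by varying the components $\{x_j:\ v_{M_j}=0\}$ freely in $[0,1]$ still satisfies Eqs.~\eqref{equilibrium_cond_iDCG}--\eqref{equilibrium_cond_vc} and has $F=0$, so the set $\mathcal{E}$ of such equilibria is a $C^r$ submanifold through $\bar\xvec$, invariant, and tangent to $E^c$; by local uniqueness of the center manifold, $W^c_{\mathrm{loc}}=\mathcal{E}$. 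Therefore the reduced flow on the center manifold is identically zero, it contributes no slow transient, and $T(t)\xvec\to\bar\xvec_\infty\in\mathcal{E}$ with $|T(t)\xvec-\bar\xvec_\infty|\leq Ce^{-\xi t}$ for $t$ large, i.e.\ exponential convergence on a neighborhood of the equilibrium set. Finally, to pass from ``near $\bar\xvec$'' to ``in all the attraction basin'', I would note that, by definition of basin, any orbit converging to the equilibrium set enters this neighborhood in finite time; from that time on the decay is exponential at rate $\xi$, and absorbing the bounded finite-time transient (here the compactness of $X$ and Theorem~\ref{global_attractor_theorem} give a uniform bound) into the constant preserves an exponential estimate on the whole basin.

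The main obstacle is the center-manifold bookkeeping: rigorously verifying that the zero-eigenvalue subspace is \emph{precisely} the span of the indeterminate-$x$ directions (this rests on the vanishing of $\tilde\theta$ and its first $r$ derivatives at $0$ and $1$, condition~\ref{condition_4}) and that the associated center manifold coincides with the equilibrium submanifold $\mathcal{E}$, so that these center directions neither destabilize $\bar\xvec$ nor degrade the convergence rate to that of the transverse, genuinely stable directions. A secondary difficulty is establishing that the coupled $(v_M,i_{DCG},s)$ block has no eigenvalue with nonnegative real part; this is not immediate from the passivity of the memristor--capacitor subsystem alone, since the VCDCGs make the full circuit active, and one must combine \eqref{eig1} with the bistable sign structure of Sec.~\ref{eq4_sec} and the explicit smallness conditions on $k_{DCG}$ and $i_{DCG_{\max}}$ used in Theorem~\ref{equilibria_theorem}.
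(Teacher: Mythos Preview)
Your proposal is correct and follows essentially the same approach as the paper: identify the center directions with the indeterminate-$x_j$ coordinates (those with $v_{M_j}=0$), observe that the associated center manifold consists entirely of equilibria so the reduced flow is trivial, and conclude exponential convergence from the stable-manifold behaviour of $C^r$ semigroups. The paper's own argument is considerably terser---it simply asserts that these center manifolds are ``irrelevant to the equilibria stability'' because the indeterminate components range freely over $[0,1]$, and then invokes exponential convergence on the stable manifold---so your version is a more careful working-out of the same outline, including the explicit identification $W^c_{\mathrm{loc}}=\mathcal{E}$ and the finite-time-entry argument to pass from a neighborhood to the full basin, neither of which the paper spells out.
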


Finally, in order to check the scalability of SOLCs, we want to study how the convergence of the equilibria depends on their size. We then write down the Jacobian of the system around an equilibrium point. Following the conditions in the theorem \ref{equilibria_theorem} and the equations discussed in the sections \ref{eq1_sec}-\ref{eq4_sec} it can be shown that the Jacobian of $F(\xvec)$ of equation \eqref{syst} evaluated in a equilibrium reads
\begin{equation}
J_{F}(\xvec=\xvec_{s})=\left(
\begin{array}
[c]{cccc}%
\frac{\partial F_{1}}{\partial v_{M}} & \frac{\partial F_{1}}{\partial x} &
C^{-1}A_{i} & 0\\
0 & \frac{\partial F_{2}}{\partial x} & 0 & 0\\
\frac{\partial f_{DCG}}{\partial v_{M}} & 0 & 0 & 0\\
0 & 0 & 0 & \frac{\partial f_{s}(i_{DCG},s)}{\partial s}%
\end{array}
\right)  \label{jacobian}.
\end{equation}
We also assume that in the second block row we have eliminated all equations for which $v_{M_j}=0$ holds, and from the second block column we have eliminated all columns related to the indeterminate $x_j$. This elimination is safe for our analysis since we want to study the eigenvalues of $J_{F}$. In fact, we notice that the eigenvectors related to the non-null eigenvalues are vectors with null components corresponding to the indeterminate $x_j$ since they are related to zero rows of $J_{F}$.

We can see from \eqref{jacobian} that, since the last block column and row of $J_{F}$ have all zero blocks but the last one, the eigenvalues of $J_{F}(\xvec=\xvec_s)$ are simply the union of the eigenvalues of $\frac{\partial f_{s}(i_{DCG},s)}{\partial s}$ and the eigenvalues of
\begin{equation}
J_{F_{red}}(\xvec=\xvec_s)=\left(
\begin{array}[c]{ccc}
\frac{\partial F_{1}}{\partial v_{M}} & \frac{\partial F_1}{\partial x} & C^{-1}A_{i}\\
0 & \frac{\partial F_{2}}{\partial x} & 0\\
\frac{\partial f_{DCG}}{\partial v_{M}} & 0 & 0
\end{array}
\right)  \label{jacobian_reduced}.
\end{equation}
Now, since $\frac{\partial f_s(i_{DCG},s)}{\partial s}$ is a diagonal matrix proportional to the identity $I$, more explicitly $\frac{\partial f_s(i_{DCG},s)}{\partial s}=-k_s(6s_{\max}^2-6s_{\max}+1)I$, its associated eigenvalues do not depend on the size of the circuit.

In order to study the spectrum of $J_{F_{red}}$ we notice that from Sec.~\ref{eq3_sec} we have $\frac{\partial f_{DCG}}{\partial v_{M}}=L_{DCG}\mathcal{U}$ where the derivative is evaluated in either $v_{DCG_j}=v_{c}$ or $-v_{c}$ according to the equilibrium point. So, the eigenvalues of $J_{F_{red}}$ are the time constants of an RLC memristive network. 

While it is not easy to say something about the time constants of a general RLC network, in our case there are some considerations that can be made. The capacitances, inductances, resistances are all equal (or very close to each other if 
noise is added). Moreover, the network is ordered, in the sense that there is a nontrivial periodicity and the number of connection per node is bounded and independent of the size. From these considerations, our network can actually be studied through its minimal cell, namely the minimal sub-network that is repeated to form the entire network we consider. This implies that the slower time constant of the network is {\it at most}
the number of cells in the network times the time constant of the single cell. Under these conditions we have then proved 
the following theorem:
\begin{thm}
	\label{poly-growth} Polynomially growing SOLCs support {\it at most} polynomially growing time constants.

\end{thm}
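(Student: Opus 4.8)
The plan is to reduce the theorem to one spectral estimate on the reduced Jacobian \eqref{jacobian_reduced} and then to extract that estimate from two structural features of a SOLC at an equilibrium: the (near-)uniformity of its circuit elements and the regular, bounded-degree topology that encodes a CB problem. First I would dispose of the trivial block: by the block-triangular form of \eqref{jacobian}, the spectrum of $J_F(\xvec_s)$ is the union of the spectrum of $\partial f_s(i_{DCG},s)/\partial s=-k_s(6s_{\max}^2-6s_{\max}+1)I$ --- a single eigenvalue whose real part is a fixed nonzero constant independent of circuit size --- with the spectrum of $J_{F_{red}}(\xvec_s)$. It therefore suffices to show that every nonzero eigenvalue $\lambda_j$ of $J_{F_{red}}$ satisfies $|\operatorname{Re}\lambda_j|\ge\poly_\gamma^{-1}(n_M+n_{DCG})$ for some degree $\gamma$ independent of the input size, since the network time constants are the $1/|\operatorname{Re}\lambda_j|$ and $n_M,n_{DCG}$ are themselves polynomial in $n$ for the SOLCs of interest.

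Next I would make quantitative the identification of $J_{F_{red}}$ with an RLC-memristive state matrix. At an equilibrium of the type characterized by Theorem~\ref{equilibria_theorem}, the surviving memristor conductances $g_j$ of \eqref{g} equal $R_{off}^{-1}$ (the $x_j=1$ case) or lie in the fixed interval $[R_{off}^{-1},R_{on}^{-1}]$, the damping block $\partial F_2/\partial x$ is diagonal with size-independent negative entries, the blocks $\partial F_1/\partial v_M$, $\partial F_1/\partial x$, $C^{-1}A_i$ are the constant matrices of \eqref{linear}, and $\partial f_{DCG}/\partial v_M=L_{DCG}\mathcal{U}$ with $L_{DCG}$ diagonal. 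Hence $J_{F_{red}}$ is the state matrix of a linear network all of whose capacitances, inductances and resistances take values in a fixed, size-independent range, built on the interconnection graph of the SOLGs --- a connected graph whose vertex degrees are bounded by a constant $d$ independent of $n$, because every gate has a fixed number of terminals and bounded fan-out.

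The main step then exploits that such a circuit is a chain (or array) of $M=\Theta(n_M+n_{DCG})$ copies of a fixed minimal sub-circuit, its ``unit cell.'' Re-ordering the state variables cell by cell puts $J_{F_{red}}$ into block-banded form with blocks of size bounded independently of $M$, so a discrete Bloch/Fourier substitution in the cell index reduces the $M$-cell spectral problem to a finite family of fixed-size matrices $\widehat{J}(k)$, $k\in\{2\pi\ell/M\}_{\ell=0}^{M-1}$, whose eigenvalues trace finitely many analytic ``bands'' $\lambda(k)$. The passivity estimate \eqref{eig1}, inherited by the network obtained by replacing every VCDCG with an inductor, forces $\operatorname{Re}\lambda(k)<0$ except at isolated band edges; near such an edge $k_0$ one has $\operatorname{Re}\lambda(k)=-c(k-k_0)^{2}+O((k-k_0)^{4})$ with $c>0$ fixed (or a higher even power if $c$ vanishes), so the smallest $|\operatorname{Re}\lambda|$ sampled on the lattice $\{2\pi\ell/M\}$ is at least of order $M^{-2}$. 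Boundary cells contribute only finitely many additional modes, which do not change the scaling. Hence the slowest time constant is $O(M^{2})=O((n_M+n_{DCG})^{2})$, polynomial in $n$, which is the assertion of the theorem (consistent, up to the power, with the informal ``$M$ times the single-cell time constant'' remarked before the statement).

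The hard part will be ruling out a band $\lambda(k)$ that clings to the imaginary axis on a whole $k$-interval --- equivalently, showing that in the bulk the only center directions are the size-independent ones already removed in passing to \eqref{jacobian_reduced}, so that the hyperbolicity on the relevant stable manifold established after Theorem~\ref{equilibria_theorem} survives the passage to $M$ cells; this forces one to use the passivity bound \eqref{eig1} quantitatively rather than as a mere sign condition, and it is also where the idealization ``all elements exactly equal'' does real work. A cleaner route that bypasses the band analysis is to bound $\min_j|\operatorname{Re}\lambda_j(J_{F_{red}})|$ directly through the algebraic connectivity of the interconnection graph: a connected graph on $N$ vertices of maximum degree $d$ has Fiedler value at least of order $1/(Nd)\ge 1/N^2$, and since the element values are bounded above and below by size-independent constants, $J_{F_{red}}$ is a controlled function of the weighted graph Laplacian whose smallest relevant singular value is governed by that connectivity; this yields the same $O(N^2)$ bound on the time constants with far less machinery, at the cost of a worse constant.
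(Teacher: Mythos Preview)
Your proposal tracks the paper's argument through its essential moves: the block-triangular split isolating $\partial f_s/\partial s$ (a size-independent multiple of the identity) from $J_{F_{red}}$; the identification of $J_{F_{red}}$ with the state matrix of an RLC--memristive network via $\partial f_{DCG}/\partial v_M=L_{DCG}\,\mathcal U$; and the appeal to the periodic, bounded-degree cell structure with size-independent element values. The divergence is only in the last step. The paper simply \emph{asserts} that periodicity and equal elements let one study the network through a single minimal cell and concludes, without further spectral analysis, that the slowest time constant is at most the number of cells times the single-cell constant --- a linear bound stated but not derived. You instead supply two concrete mechanisms, a Bloch/Fourier band analysis and an algebraic-connectivity argument, and arrive at $O(M^2)$. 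So your framework is the paper's, but you make rigorous (at the price of a weaker exponent) the step the paper leaves as a heuristic.

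Two small points. First, the Fiedler bound ``at least of order $1/(Nd)$'' is not correct for general connected graphs of maximum degree $d$: a path ($d=2$) has $\lambda_2\sim\pi^2/N^2$, not $\sim 1/N$. The standard estimate $\lambda_2\ge 4/(N\cdot\mathrm{diam})$, which for bounded-degree graphs with diameter $O(N)$ gives only $\Omega(1/N^2)$, is what you should quote; it still delivers the polynomial conclusion. Second, both your Bloch analysis and the paper's minimal-cell argument tacitly assume a translational periodicity that the actual SOLCs (e.g., Figs.~\ref{DMM_figure_11} and \ref{DMM_figure_14}) possess only approximately --- the boundary rows and the triangular shape of the multiplier array break strict periodicity. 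You handle this by treating boundary cells as a finite-rank perturbation, which is the right attitude; the paper does not address it at all.
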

 
\subsection{On the absence of periodic orbits and strange attractors}\label{absence_strange_subsection}

In the previous sections we have proved that $T(t)$ is endowed with an u.a.s. global attractor. We have also provided an analysis of the equilibria proving their exponential convergence in the whole stable manifolds and discussed their convergence rate as a function of the size of the system, showing that this is at most polynomial. Therefore, in order to have a complete picture of a DMM physically realized with SOLCs, the last feature should be discussed: what is the composition of the global attractor.

In order to analyze the global attractor we use a statistical approach. We make the following assumptions:
\begin{enumerate}
	\item The capacitance $C$ is small enough such that, if we perturb a potential in a node of the network the perturbation is propagated in a time $\tau_{C}\ll\tau_{M}(\alpha)$ where $\tau_{M}(\alpha)$ is the switching time of the memristors (obviously linear functions of $\alpha$). For our system the time constant $\tau_{C}$ is related to the memristance and $C$.	
	\item $q$ of the function $f_{DCG}$ (see Fig.~\ref{DMM_figure_7}) is small enough such that the time $\tau_{DCG}(q)=i_{DCG_{\max}}/q$, satisfies $\gamma^{-1}\ll\tau_{DCG}$, i.e., the the time that the current $i_{DCG}$ takes to reach $i_{DCG_{\min}}$ is much smaller than the time it takes to reach $i_{DCG_{\max}}$.	
	\item The switching time of the memristors satisfies $\gamma^{-1}\ll\tau_{M}(\alpha)\ll\tau_{DCG}$.	
	\item The initial condition of $\xvec$ is taken randomly in $X$.
\end{enumerate}

Before proceeding we describe a peculiar behavior of our SOLCs that can be proved by the nonlocality induced by Kirchhoff's current laws and looking at how the DCMs work. If we change a potential of order $\delta V$ in a point of the network "quasi instantaneously", namely within a switching time $\tau_{C}\ll\tau\ll\tau_{M}(\alpha)$, there will be a sub-network $S$ - including the point switched - that in a time of the order of $\tau_{C}$ will change of the
same order of $\delta V$ many of the potentials at the nodes. This change is simply the result of the $RC$ nature of the network, the only component that survives at the time scale of $\tau_{C}$. After a time of the order of $\tau_{M}(\alpha)$, the sub-network $S$ will reach a stable configuration as consequence of the DCMs. Therefore, for a perturbation $\delta V$ in a node of the circuit, there is, in a time $\tau_{M}(\alpha)$, a
reconfiguration of an {\it entire region} $S$ of the network.

Now, from section \ref{eq3_sec} and \ref{eq4_sec}, we know that, if in a given node $j$ the current $i_{DCG_j}$ reaches $i_{DCG_{\max}}$ the equation that governs $i_{DCG}$ becomes $\frac{d}{dt}i_{DCG}=-\gamma i_{DCG}$, so the currents decrease to $i_{\min}$ in a time of the order of $\gamma^{-1}$. If we set $i_{\min}$ very small compared to the
characteristic currents of the network, in a time of the order of $\gamma^{-1}$ the potential at the node $j$ will experience a potential variation $\delta V $of the order of $R_{on}i_{\max}$, if $i_{\max}$ is large enough. 

This last claim can be proved because, if the current $i_{\max}$ is of the order of $K_{wrong}v_c/R_{on}$ and satisfies \eqref{condition_iDCG}, it means that the current from the network that compensates $i_{DCG_j}$ comes from some memristors that have been switched to $R_{on}$. Otherwise the current would be too small to reach values of the order of $K_{wrong}v_{c}/R_{on}$. Therefore, for each VCDCG that reaches $i_{\max}$ a voltage variation of the order of $R_{on}i_{\max}$ is produced at the node where the VCDCG is connected. Morever, since $\tau_{M}(\alpha)\ll\tau_{DCG}$, the network reconfigures the region $S$ before the
current $i_{DCG_j}$ starts increasing again.

With this in mind, and using the above conditions 1-4 we can make a statistical interpretation of our network. We consider a system large enough and, since it is made of a combination of elementary cells, we assume that $\tau_{C}$ is small enough such that the density of the nodes is uniform in the subnetwork $S$. Taking as initial $\xvec(0)$ a random point in $X$, we
have that at the rate of $\tau_{DCG}^{-1}$ a fraction $m_{DCG}/n_{DCG}$ of VCDCG reaches $i_{\max}$ and consequently switches. This means that there is a fraction of nodes that are kicked by $\delta V\approx K_{wrong}v_{c}/R_{on}$ at a rate of $\tau_{DCG}^{-1}$. 

Following our discussion above, if $\xvec_{-}$ is the configuration of the system 
before the kick, and $\xvec_{+}$ after (their temporal distance is of the order of $\gamma^{-1}$), then we have that the distance $\textrm{dist}(\xvec_-,\xvec_+)$ is of the order of the radius of $X$ defined as $\inf_{\mathbf{y}\in X}\sup_{\xvec\in X}\textrm{dist}(\xvec,\yvec)$. This means that these kicks bring the system in points of $X$ that are far from each other. Since we have chosen the initial $\xvec(0)$ random in $X$, also the kicks will take place randomly in time and in space (i.e., in the network node). This means that the system explores the entire $X$. It is worth noticing that, like in the case of Monte Carlo simulations, when we estimate an integral in $N$ dimensions \cite{Rossi_book}, here the SOLC explores $X$ in a way that is {\it linear} in the number of variables, i.e., it needs a number of kicks that grows at most linearly with the dimension of the system.

All this analysis allows us to conclude that, in our SOLCs, periodic orbits or strange attractors {\it cannot co-exist} with equilibria. In fact, both the periodic orbits and/or strange attractors, if they exist, produce large fluctuations of the potentials of the SOLC. These fluctuations are of the order of $v_{c}/R_{on}$ and are not localized but rather distributed in the entire network, because of the combination of VCDCGs and DCMs. Therefore, from the previous analysis, if periodic orbits and/or strange attractors exist they should force the system to explore the entire space $X$. However, if an equilibrium point exists, then, by exploring $X$ the system will intercept, 
within a time of order $\tau_{DCG}$ times a number that grows only linearly with the size of the system, the stable manifold of $X$ and collapses in the equilibrium point. Therefore, the global attractor is eiher formed by only equilibrium points or only by periodic orbits and/or strange attractors.

\subsection{Polynomial energy expenditure} We finally note that the SOLCs grow polynomially with the input size, each node of each gate can support only finite voltages (cut-off by the capacitances $C$) and at the equilibrium the voltages do not depend on the size of the system. The currents are also limited and their bounds are independent of the size of the SOLCs. The solution is found in a finite time which is polynomially dependent on the size of the SOLCs. Therefore, the energy expenditure can only grow polynomially with the SOLC size. 

This can be also seen in a more mathematical way as follows. The metric space $X=[v_{\min},v_{\max}]^{n_M}\times\lbrack0,1]^{n_M}\times[-i_{\max},i_{\max}]^{n_{DCG}%
}\times\lbrack s_{\min},s_{\max}]^{n_{DCG}}$ is a bounded compact space with the support of its range that does not depend on the SOLC size, hence, as a consequence the energy expenditure can only grow polynomially with SOLC size. 

We now provide numerical evidence of all these statements by solving two different NP problems, one hard.

\section{NP problems solution with DMMs}\label{NP_sec}

In this section we discuss two different SOLCs for the solution of two famous NP problems: the prime number factorization and the NP-hard version of the subset-sum problem. We show how, using the SO gates defined in  Sec.~\ref{MEM_SOLG_subsection}, and appropriate SOLCs, they can be solved with only polynomial resources. 

\subsection{Prime Factorization}\label{factorization}
\begin{figure}	
	\centerline{\includegraphics[width=1\columnwidth]{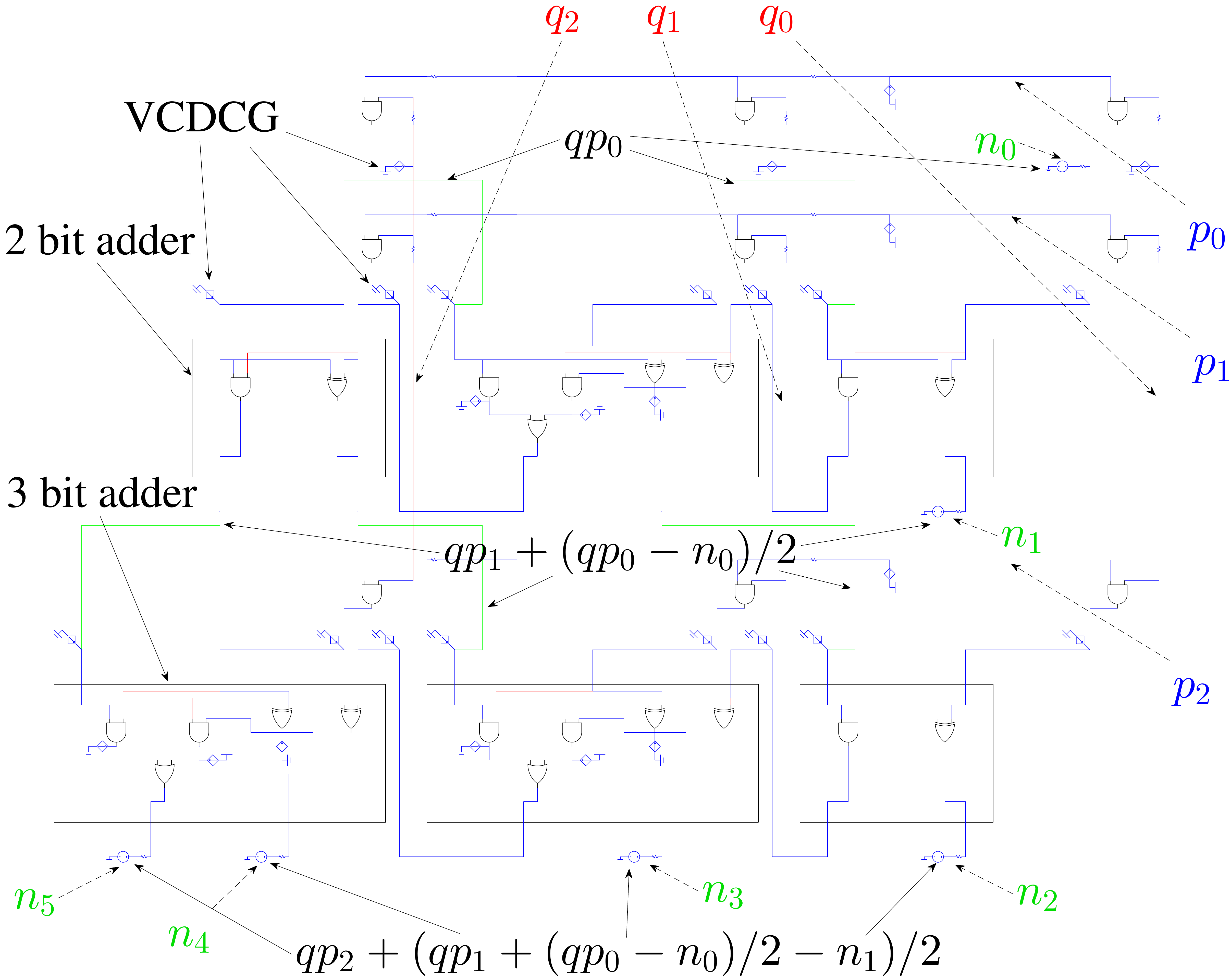}}			
	\caption{\label{DMM_figure_11}SOLC for solving a 6-bit factorization problem. The circuit is composed of the SOLGs described in section \ref{SOLG_section}.}	
\end{figure}
\begin{figure}	
	\centerline{\includegraphics[width=1\columnwidth,height=2.02\columnwidth,keepaspectratio]{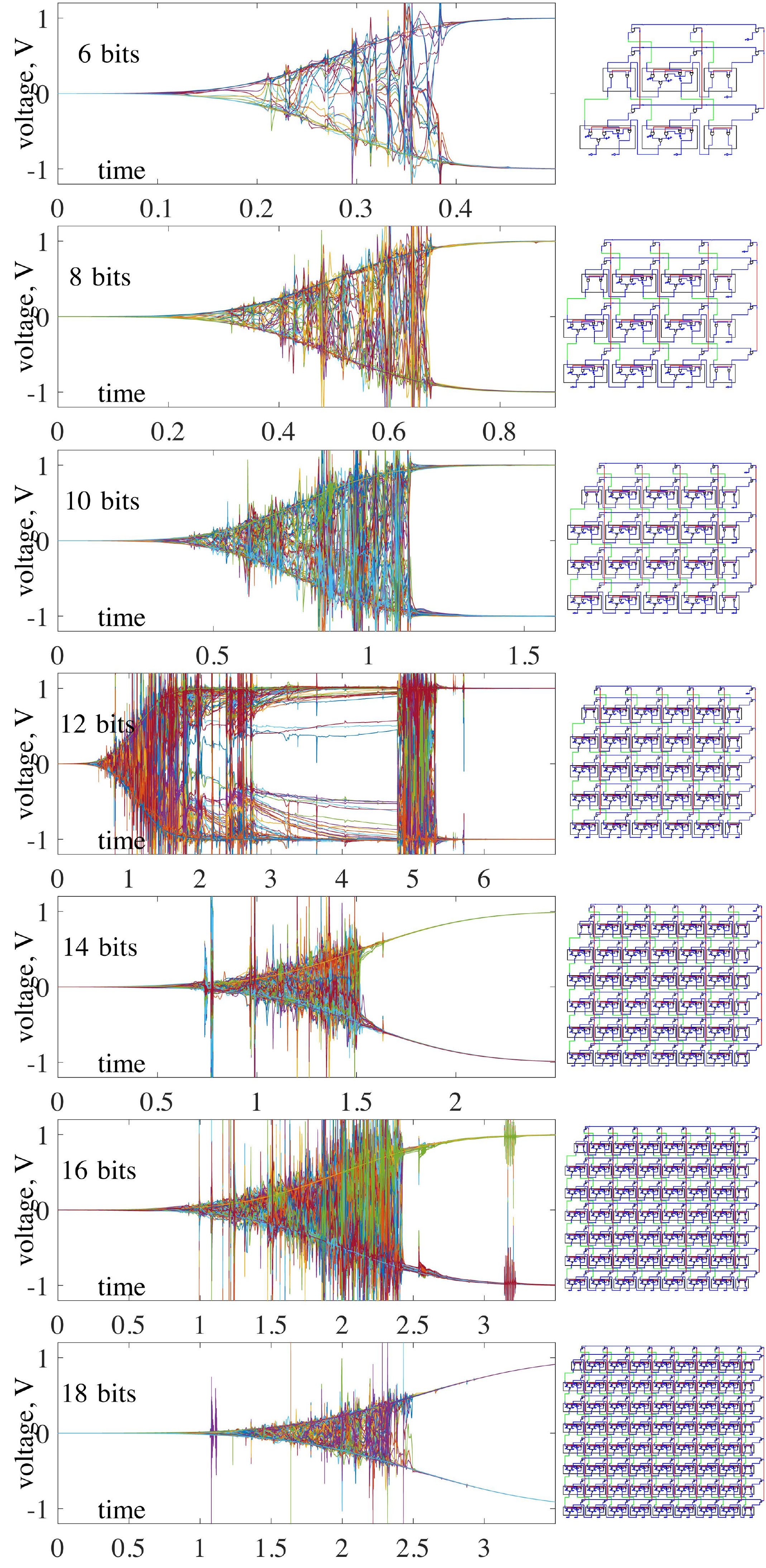}}			
	\caption{\label{DMM_figure_12}Voltages at the nodes of the SOLCs for the prime factorization as a function of time (in arbitrary units). The circuits simulated are reported on the right of each plot. They share the same topology as the one 
	in Fig.~\ref{DMM_figure_11}. The solution is found when all voltages are either 1 or -1 (logical 1 and 0, respectively).}	
\end{figure}

\begin{figure}	
	\centerline{\includegraphics[width=1\columnwidth]{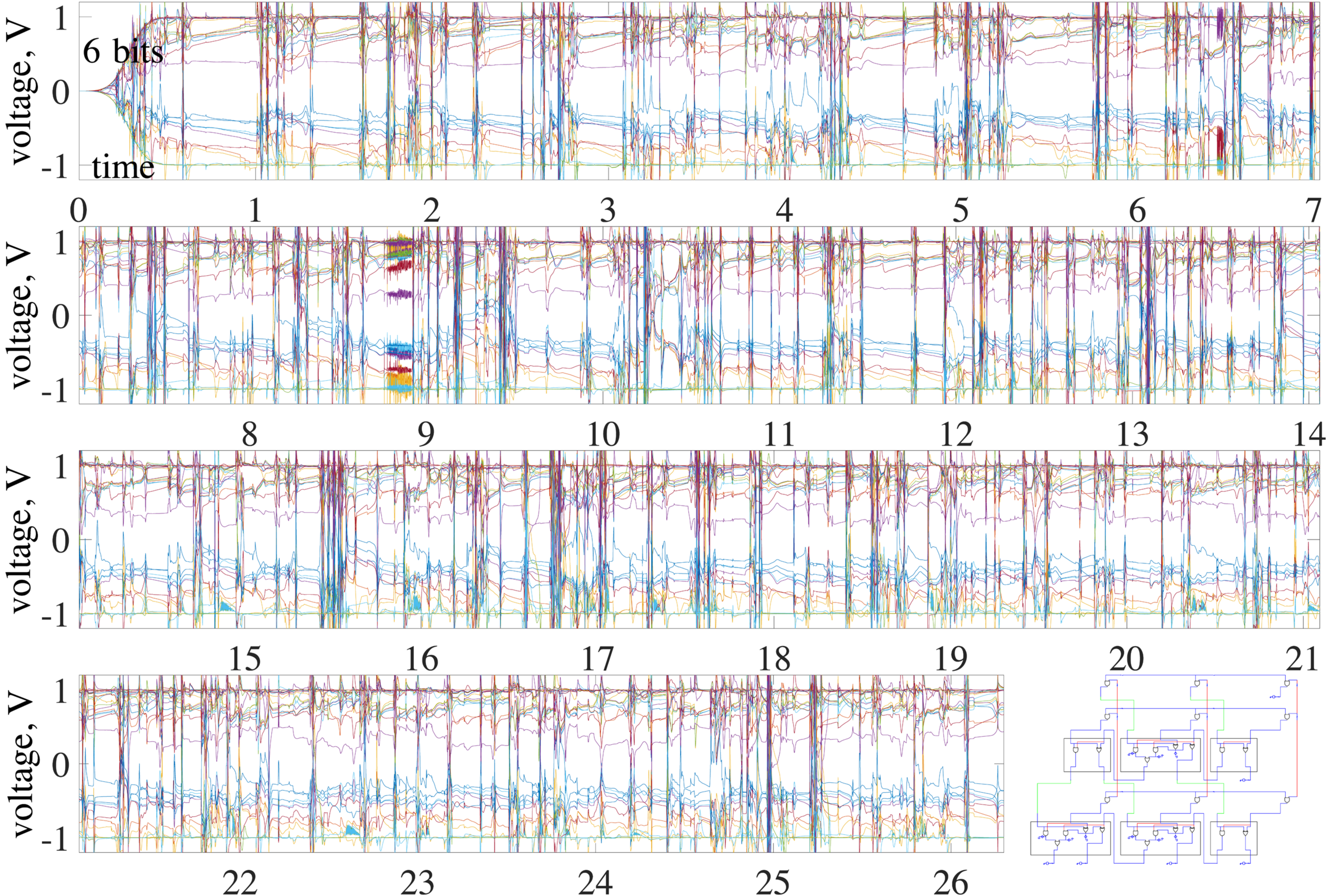}}			
	\caption{\label{DMM_figure_13}Voltages at the nodes of the SOLCs for the prime factorization as function of time (in arbitrary units). The circuits simulated is the 6-bit reported on the far bottom right of the plot. The input of the circuit is the prime 47.}	
\end{figure}

Let us consider an integer $n=pq$ where $p$ and $q$ are two prime numbers. Moreover, we define $p_j$, $q_j$, and $n_j$ the binary coefficients of $p$, $q$ and $n$ such that $p=\sum_{j=0}^{n_p} p_j 2^j$ and similarly for $q$ and $n$. Therefore, $n_p$, $n_q$ and $n_n$ are the number of bits of $p$, $q$, and $n$ respectively. The NP problem consists in finding the two unique primes $p$ and $q$ such that $n=pq$. 

In order to find $p$ and $q$ we can write the product of two numbers expressed in binary basis using a closed set of boolean functions. So we can express the product as $f(\yvec)=\bvec$ where $\bvec$ is the collection of the coefficients $n_j$, and $\yvec$ is the collection of the $p_j$ and $q_j$. It is easy to prove that $f$ is not unique, but for our purposes this is not relevant. 

According to our problem classification in Sec.~\ref{def_problem} the factorization problem belongs to the class CB, and we show here that it belongs also to MPI$_\text{M}$. In fact, starting from the $f(\yvec)=\bvec$ we can build the SOLC as reported in Fig.~\ref{DMM_figure_11} for $n_n=6$. The inputs of the SOLC are the generators indicated by the $n_j$. These generators impose voltages $v_c$ or $-v_c$ according to the logical value of $n_j$. Therefore, this set of generators is the control unit of the DMM and encodes $\bvec$. 

The lines at the same  potential indicated by $p_j$ and $q_j$ are the output of our problem, i.e., they encode - through the voltages - the coefficients $p_j$ and $q_j$ (namely $\yvec$). In order to read the output of the SOLC it is thus sufficient to measure the potentials at these lines. It is worth noticing that, once the gates have self-organized, the values of the potentials at all the lines will be either $v_c$ or $-v_c$. Therefore, {\it there is no issue with precision in reading the output}, implying robustness and scalability of the circuit. In Fig.~\ref{DMM_figure_11} the "intermediate steps" of SOLC implementation are indicated for the reader who is not familiar with boolean circuits.

{\it Scalability analysis -} It is clear that the circuit in Fig.~\ref{DMM_figure_11} can be built for any size of $\bvec$ and the number of SOLGs grows as $n_n^2$, so the SOLGs scale quadratically with the input. From the analysis in the previous sections, the equilibria of this SOLC are the solutions of the factorization problem and they can be exponentially reached in a time at most polynomial in $n_n$. Finally, since the energy of this circuit depends {\it linearly} on the time it takes to find the the equilibrium and on the number of gates, also the energy is bounded. We thus conclude that such circuits will solve factorization with polynomial resources if implemented in hardware. 

On the other hand, to simulate this circuit. i.e., to solve the ODE \eqref{syst}, we need a bounded small time increment $dt$ independent of the size of the circuit, and dependent only on the fastest time constant that can be associated to the time scales discussed in Sec.~\ref{absence_strange_subsection}. Therefore, if a solution exists to the prime factorization, and the SOLC fulfills the requirements of Sec.~\ref{equilibrium_subsection}, the problem belongs to the class MPI$_\text{M}$. More details on the complexity of this problem are discussed in section \ref{NP_P}.

It is worth noticing that the problem has no solution within this SOLC if either $n$ is already a prime, or at least one of $n_p$ or $n_q$ used to build the SOLC is smaller than the actual length of $p$ or $q$ used to solve the factorization problem\footnote{If the integer $n$ requires three or more primes to factorize, either the circuit needs to be extended to include extra output numbers, or we let the circuit with only $p$ and $q$ to break $n$ into two integers (depending on the initial conditions). We can then proceed by reducing 
even further these numbers till primes are found.}. This last scenario can always be avoided by simply choosing $n_p=n_n-1$ and $n_q=\lfloor n_n/2\rfloor$ (or the reverse), where $\lfloor\cdot\rfloor$ stands for the floor operator, i.e., it rounds the argument to the nearest integer towards minus infinity. This choice also guarantees that, if the $p$ and $q$ are primes, the solution is unique, and the trivial solution $n=n\times 1$ is forbidden. 

We have implemented the circuit into the NOSTOS (NOnlinear circuit and SysTem Orbit Stability) simulator \cite{Traversa_AEU,Traversa_IET,Traversa_TCAD,DCRAM,13_amoeba} developed by one of us (FLT). For the sake of simplicity, we have implemented SOLCs with $n_p=n_q$ with length at most of $n_n=n_p+n_q$. In this case, because of symmetry the possible solutions are two. 

{\it Numerical simulations -} In Fig.~\ref{DMM_figure_12} we present numerical results for several $n_n$. The simulations are performed by starting from a random configuration of the memristor internal variable $x$ and switching on gradually the generators. Although not necessary, we used a switching time for the generators which is quadratic in $n_n$. The plots report the voltages at the terminals of all the gates. Since we choose $v_c=1$, it can be seen that after a transient, all terminals approach $\pm v_c$, which are the logical 1 and 0. When, thanks to the DCMs, all of them converge to $\pm v_c$, they are necessarily satisfying all gate relations, and the SOLC has then found the solution. 

We have performed hundreds of simulations using a 72-CPU cluster, and have not found a single case in which the SOLCs did not converge to the equilibria, thus 
reinforcing the analysis in Sec.~\ref{absence_strange_subsection}. It is also worth noticing that the larger case we dealt with (18-bit case) requires the simulation of a dynamical system with approximatively 11,000 independent dynamic variables (i.e., $v_M$, $x$, $i_{DCG}$ and $s$). We are thus dealing with an enormous phase space and yet we did not find anything other than equilibria. Clearly, this does not prove the absence of strange attractors or limit cycles for all possible sizes, but at least for the parameters we have used (see table \ref{table2}) and the SOLC sizes we have tested this is the case. 

\begin{table}
	\caption{Simulation Parameters}
	\label{table2}
	\begin{center}
		\begin{tabular}{|l|c||l|c||l|c|}
			\hline 
			parameter  & value     & parameter  & value     & parameter  & value      \\ \hline 
			$R_{on}$   & $10^{-2}$ & $R_{off}$  & $1$       & $v_c$      & $1$        \\ \hline 
			$\alpha$   & $60$      & $C$        & $10^{-9}$    & $k$        & $\infty$   \\ \hline 
			$V_t$      & $0$       & $\gamma$   & $60$      & $q$        & $10$       \\ \hline 
			$m_0$      & $-400$    & $m_1$      & $400$     & $i_{\min}$ & $10^{-8}$  \\ \hline 
			$i_{\max}$ & $20$      & $k_i$      & $10^{-7}$ & $k_s$      & $10^{-7}$  \\ \hline 
			$\delta_s$ & $0$       & $\delta_i$ & $0$       &            &            \\ \hline 
		\end{tabular}
	\end{center}
\end{table}

Finally, in Fig.~\ref{DMM_figure_13}, we show the dynamics of the SOLC when we try to factorize a prime number. As expected, in this case the trajectories never find an equilibrium. Since the SOLC has nonetheless a global attractor, what we are seeing in this plot could be either a complicated periodic orbit or a strange attractor. 

\subsection{The Subset-Sum Problem}\label{SSPSec}
\begin{figure}	
	\centerline{\includegraphics[width=1\columnwidth]{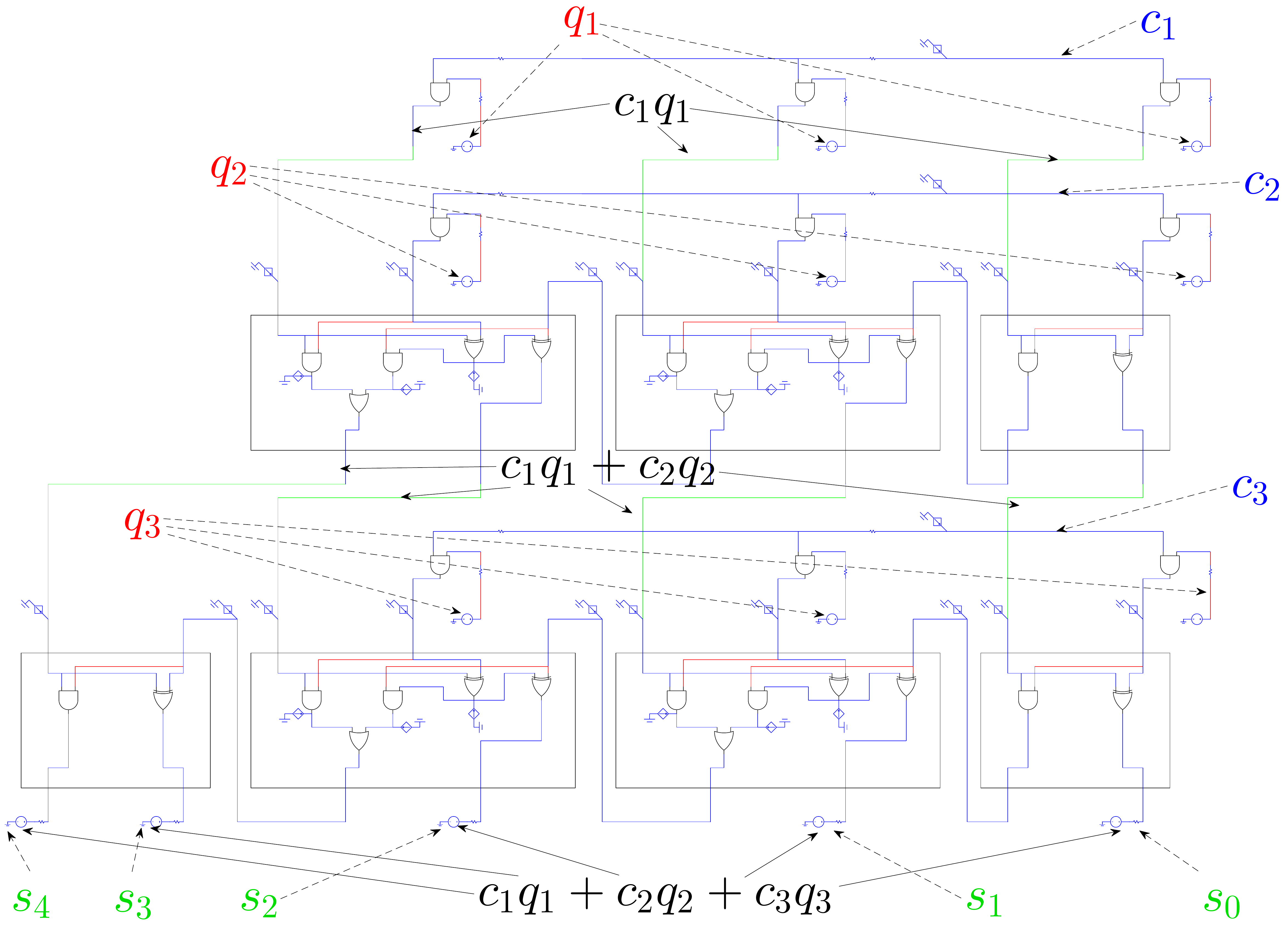}}			
	\caption{\label{DMM_figure_14}SOLC for solving a 3-number, 3-bit subset-sum problem. The circuit is composed of the SOLGs described in section \ref{SOLG_section}.}	
\end{figure}
\begin{figure}	
	\centerline{\includegraphics[width=1\columnwidth,height=2.02\columnwidth,keepaspectratio]{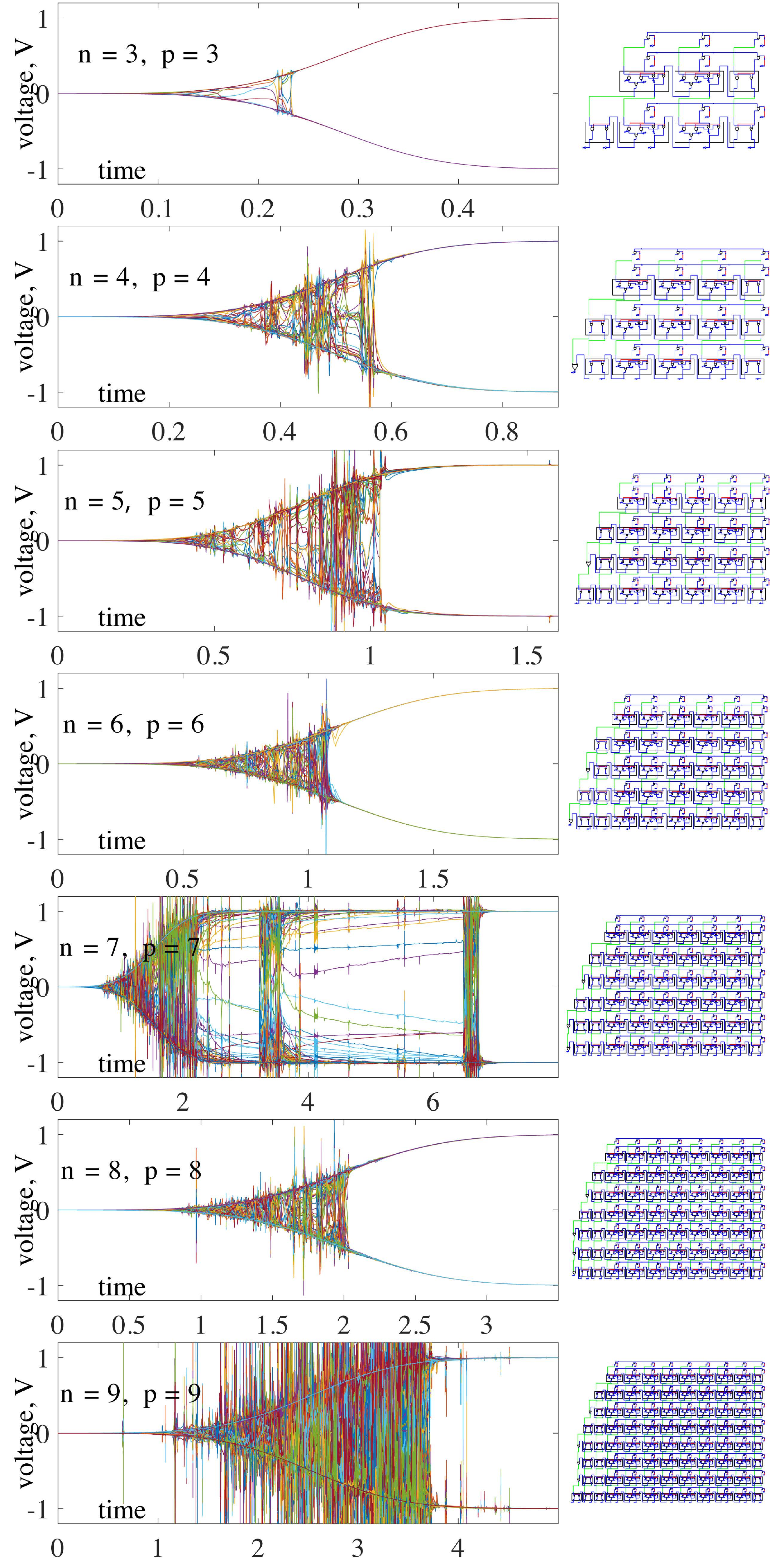}}			
	\caption{\label{DMM_figure_15}Voltages at the nodes of the SOLCs for the SSP in function of time in arbitrary units. The circuits simulated are reported on the right of each plot.}	
\end{figure}

We now show how to solve, with polynomial resources, within a SOLC the NP--hard version of the \textit{subset-sum problem} (SSP), which is arguably one of the most important problems in complexity theory \cite{complexity_bible}. The problem is defined as follows: if we consider a finite set $G\subset\mathbb{Z}$ of cardinality $n$, we want to know whether there is a non-empty subset $K\subseteq G$ whose sum is a given number $s$ (NP-complete version). In addition, we will seek to find at least one subset (if it exists) that solves the problem (NP-hard version). The complexity of the SSP is formulated in terms of both its cardinality ($n$) and the minimum number of bits used to represent any element of $G$ (namely the precision $p$). The problem becomes difficult to solve when $n$ and $p$ are of the same order because the known algorithms to solve it are exponential in either $n$ or $p$ \cite{traversaNP,complexity_bible}.

We consider here the NP-hard version of the SSP in which all elements of $G$ are positive. The case in which they are not can be equally implemented in the SOLCs but requires a slightly more complex topology and will be reported elsewhere. In order to solve the SSP we try to find the collection of $c_j\in \Zbb$ such that 
\begin{equation}
\sum_jc_jq_j=s\label{SSP_equation}
\end{equation}
where $q_j\in G$ and $j=1,\cdots,n$. Therefore, our unknowns are the $c_j$, with $\yvec$ the collection of the $c_j$. Equation~\eqref{SSP_equation} can be readily represented in boolean form through a boolean system $f(\yvec)=\bvec$ where $\bvec$ is composed by the binary coefficients of $s$ padded with a number of zeros such that $\dim(\bvec)$ equals the minimum number of binary coefficients used to express $\sum_jq_j$. It is easy to show that $\dim(\bvec)\leq \log_2(n-1)+p$. This boolean system can be implement in a SOLC as shown in Fig.~\ref{DMM_figure_14}. The control unit is here composed by the generators that implement the binary coefficients of $s$ and the output can be read out by measuring the voltages at the lines pointed out by the $c_j$ in Fig.~\ref{DMM_figure_14}. 

{\it Scalability analysis -} This circuit grows {\it linearly} in $p$ and as $n+\log_2(n-1)$ in $n$. The last term is due to the successive remainders during the sum in \eqref{SSP_equation}. This is represented by the extra adders on the left of the SOLCs in Figs.~\ref{DMM_figure_14} and \ref{DMM_figure_15}. Also in this case, like the factorization, we have a SOLC that grows polynomially with both $p$ and $q$ so the SSP belongs to the MPI$_\text{M}$ class (see Sec.~\ref{def_problem}), indicating that, unlike the Turing paradigm, factorization and the subset-sum problem share the 
same complexity within the memcomputing paradigm.

{\it Numerical simulations -} We have performed simulations of the SOLCs in a similar way as the factorization. The results are reported in Fig.~\ref{DMM_figure_15}. As in the case of 
the factorization, the solution is found when all gate voltages are either 1 or -1 (logical 1 and 0, respectively). We have performed an extensive analysis of this case as well and 
up to $n=9$ and $p=9$ no periodic orbits or strange attractors have been found by starting from random initial conditions. Finally, also for the subset-sum problem, if no solution exists the system will not converge to any equilibrium. 

\section{Discussion on the NP=P problem}\label{NP_P}

In this work we have introduced and proved many statements derived from the new concept of Memcomputing Machines, and in particular the easily scalable subclass we have named Digital Memcomputing Machines, and their ability to solve complex problems with polynomial resources. In this section we discuss the main results we have obtained in relation to the problem of whether NP=P or not.

The first result related to this problem has been provided in section \ref{equilibrium_subsection}. There, we have summarized the practical mathematical constraints that a DMM must satisfy in order to solve an NP problem employing only polynomial resources. It is worth noticing that the constraints we found are compatible with each other. Therefore, the conclusion is that a dynamical system that solves NP problems with polynomial resources can exist. However, this does not solve the NP=P problem yet, since the existence of such a system is not enough:  finding it may require exponential resources. 

We have then tried to design a system that satisfies all constraints given in Sec.~\ref{equilibrium_subsection}. We rigorously proved in Sec.~\ref{Math_SOLC_section} that all of them are satisfied except for the absence (or irrelevance) of limit cycles and strange attractors for which we gave arguments in Sec.~\ref{absence_strange_subsection}. Therefore, the NP=P problem has not been settled yet, but the 
present work provides strong support to the answer that indeed NP=P. This comes from the polynomial scaling of the simulations shown in Sec.~\ref{NP_sec} and the fact that we have found no limit cycles or strange attractors when equilibria were present.

The resources used to simulate DMMs and, in particular, SOLCs can be quantified in the number of floating point operations the CPU does to simulate them. Since we are actually integrating an ODE, $\dot\xvec=F(\xvec)$,  (Eq.~\eqref{syst}), the number of floating point operations depends \textit{i}) {\it linearly} (if we use a forward integration method like the forward Euler or Runge-Kutta~\cite{Stoer}) or at most {\it quadratically} (if we use a backward method like backward Euler or Trapezoidal rule~\cite{Stoer}) in $\dim(\xvec)$, and \textit{ii}) depends on the  minimum time increment we use to integrate with respect to the total period of simulation, or in other words depends {\it linearly} on the number of time steps $N_{t}$. We discuss them separately.

We have seen in Sec.~\ref{NP_sec} that for NP problems we have that the $\dim(\xvec)$ scales polynomially in the input of the problem (quadratically in the number of bits for the factorization, and linearly in both the precision and the cardinality of the set $G$ for the SSP). Note also that we solve these NP problems by mapping them into a more general NP-complete problem, the Boolean satisfiability (SAT) problem \cite{complexity_bible} and then we build the SOLCs by encoding directly the SAT representing the specific problem (this SAT is in compact boolean form that we indicate with $f(\yvec)=\bvec$). This means that the $\dim(\xvec)$ depends {\it linearly} on the number of elementary logic gates (i.e., AND, OR, XOR) used to represent the SAT.

The number of time steps, $N_t$, to perform the simulations has a double bound. The first one depends on the minimum time increment $\Delta t$, and the second on the minimum period of simulation $T_s$. As discussed in \ref{factorization}, the former depends on the smallest time constant of the SOLC. Ultimately, this time constant {\it does not depend} on the size of the circuit, but on the nature of the single devices we are simulating. On the other hand, $T_s$ can depend on the size of the system. In fact, it is the minimum time we need to clearly find the equilibria of the system. It is thus related to the largest time constant of the system. However, in section \ref{Math_SOLC_section} we proved that $T_s$ grows {\it at most polynomially} 
with the size of the problem.

From all this, one can then infer that we can simulate {\it polynomially} a DMM using a Turing machine, suggesting NP=P. We stress again, though, that we only gave theoretical and numerical 
arguments that, if equilibria exist, the global attractor does not support periodic orbits and/or strange attractors. We could not come up with a formal proof of this last statement. In other words, although we have evidence that this could be the case, as discussed in Sec.~\ref{absence_strange_subsection}, and shown numerically in 
Sec.~\ref{NP_sec}, a rigorous formal proof is still lacking and we cannot settle the question of whether NP=P. (Note that it would be enough to show that the basin of attraction satisfies the statistical conditions discussed in section \ref{equilibrium_subsection}.)

\section{Conclusions}\label{Conclusions}
With this paper we have defined digital (hence scalable) memcomputing machines that use memory to store and process information and are able to solve complex problems (including NP problems) with polynomial resources. Taking advantage of dynamical system theory, we have provided the necessary mathematical constraints that the DMMs must satisfy for the poly-resource resolvability of exponentially difficult problems, and found that these constraints are compatible with each other.

We have proposed a practical implementation based on the concept of self-organizing logic gates and circuits that can solve  boolean problems by self-organizing into their solution. We have used tools of functional analysis to rigorously prove the requirement of polynomial resources with input size of these circuits, except for the co-existence of limit cycles and strange attractors with the equilibria of the problem. Therefore, we could not formally prove that NP=P. 

Using these SOLC realizations of DMMs we have solved prime factorization and the NP-hard version of the subset-sum problem using polynomial resources. The former problem scales as $O(n^2)$ in space (i.e., 
with the number of self-organizing logic gates employed) and $O(n^2)$ in convergence time with input size $n$. The latter as $O[p(n+\log_2(n-1))]$ in space and $O((n+p)^2)$ in convergence time with size $n$ and precision $p$. 

These machines are not just a theoretical concept. They can be fabricated either with circuit elements with memory (such as memristors) and/or standard MOS technology. They do not require cryogenic temperatures or to maintain quantum coherence  to work since they are essentially classical objects, and map integers into integers. Therefore, they are quite robust against noise like our present digital Turing machines implemented within the von Neumann architecture. 

There are several directions that one can take from here, both from the experimental and theoretical point of view. Experimentally, it would be of course desirable to build these machines in the lab. At this juncture, the choice of materials that would best fit their possible integration suggests an implementation using only MOS technology, rather than a combination 
of MOS devices and, e.g., memristors. By MOS technology we do not mean here the use of transistors in their switch mode only. Instead, we envision {\it emulators} of memristors that would perform the functions we have described in this paper. 

From the theory side, one is immediately drawn to consider new encryption protocols against these machines, in view of the fact that they can solve NP-hard problems in polynomial time. Encryption schemes not based on the existing complexity theory but rather on the complexity classes NPI$_\text{M}$ or NPC$_\text{M}$ we have introduced in Sec.~\ref{complexity_section} may be the way to proceed, but the options at this point are wide open. 

On a more speculative and possibly just academic scenario, we conclude by noting that one could extend the concept of DMMs to topological vector spaces, in particular Hilbert spaces, thus allowing an extension of our concepts to quantum systems. There is indeed some recent interest in fundamental memory elements such as superconducting memristors~\cite{Peotta} and quantum memristive elements~\cite{Pfeiffer} that take advantage of quantum phenomena. A combination of such elements 
in circuits with the appropriate topology may allow the solution of even more complex problems (e.g., super-exponential ones) in polynomial time. 

Irrespective, the machines we propose are a realistic alternative to Turing machines and could find applications in a wide variety of fields, such as machine learning, robotics, real-time computing, etc. 

\section{Acknowledgments}
We thank Hava Siegelmann, Fabrizio Bonani, and Bj\"{o}rn Tackmann for useful discussions. This work has been partially supported by the Center for Memory Recording Research at UCSD.

\bibliographystyle{ieeetr}
\bibliography{UMM}

\end{document}